\newcommand*{\id}{{\mathrm{id}}}
\newcommand*{\R}{\mathbb R}
\newcommand*{\CC}{{\mathbb{C}}}
\newcommand*{\pa}{{\partial}}
\newcommand*{\rd}{{\mathrm{d}}}
\newcommand*{\lb}{[\![}
\newcommand*{\rb}{]\!]}
\newcommand*{\Vb}{\mathbb V}
\newcommand*{\Jf}{\mathfrak J}
\newtheorem{theorem}{Theorem}
\newtheorem{lemma}[theorem]{Lemma}
\newtheorem{corollary}[theorem]{Corollary}
\newdefinition{definition}{Definition}
\newdefinition{prescription}{Prescription}
\newdefinition{remark}{Remark}
\newdefinition{example}{Example}
\newdefinition{procedure}{Procedure}
\newproof{proof}{Proof}
\journal{...}
\begin{document}
\begin{frontmatter}
\title{Direct linearisation of the non-commutative Kadomtsev--Petviashvili equations} 
\author[1]{Gordon Blower\fnref{fn2}}
\ead{G.Blower@lancaster.ac.uk}
\author[2]{Simon~J.A.~Malham\corref{cor1}\fnref{fn1,fn2}}    
\ead{S.J.A.Malham@hw.ac.uk} 
\cortext[cor1]{Corresponding author} 
\fntext[fn1]{SJAM was supported by an EPSRC Mathematical Sciences Small Grant EP/X018784/1}
\fntext[fn2]{The authors would like to dedicate this paper to Henry McKean whose mathematical work has inspired, and continues to inspire, us so much.}
\address[1]{Department of Mathematics and Statistics, Lancaster University, Lancaster LA1~4YF, UK}
\address[2]{Maxwell Institute for Mathematical Sciences, and School of Mathematical and Computer Sciences,   
  Heriot-Watt University, Edinburgh EH14~4AS, UK (5/12/2024)}

\begin{abstract}
  We prove that the non-commutative Kadomtsev--Petviashvili (KP) equation and a `lifted' modified Kadomtsev--Petviashvili (mKP) equation are directly linearisable, and thus integrable in this sense.
  There are several versions of the non-commutative mKP equations, including the two-dimensional generalisations of the non-commutative modified Korteweg--de Vries (mKdV) equation and
  its alternative form (amKdV). Herein we derive the `lifted' mKP equation, whose solutions are the natural two-dimensional extension of those for the non-commutative mKdV equation
  derived in Blower and Malham~\cite{BM}. We also present the log-potential form of the mKP equation, from which all of these non-commutative mKP equations can be derived.
  To achieve the integrability results, we construct the pre-P\"oppe algebra that underlies the KP and mKP equations.
  This is a non-commutative polynomial algebra over the real line generated by the solution (and its partial derivatives) to the linearised form of the KP and mKP equations.
  The algebra is endowed with a pre-P\"oppe product, based on the product rule for semi-additive operators pioneered by P\"oppe for the commutative KP equation.
  Integrability corresponds to establishing a particular polynomial expansion in the respective pre-P\"oppe algebra.
  We also present numerical simulations of soliton-like interactions for the non-commutative KP equation.
\end{abstract}
\begin{keyword} Non-commutative Kadomtsev--Petviashvili equations \sep pre-P\"oppe algebras \end{keyword}
\end{frontmatter}

\section{Introduction}
We consider the non-commutative KP equation and a non-commutative mKP equation. 
We establish that both equations are directly linearisable, and are thus integrable in this sense.
Here by directly linearisable we mean that solutions to these nonlinear equations can be reconstructed from solutions to a linear system.
The first equation of the linear system consists of the linearised partial differential equation corresponding to the KP and mKP equations.
This is the same for both, and its solution is often nominated the scattering data.
The second equation is a linear Fredholm equation involving the scattering data, which corresponds to the Gelfand--Levitan--Marchenko (GLM) equation.
The GLM equation is slightly different in each case. To achieve this result, we construct a pre-P\"oppe algebra. 
This algebra applies to the corresponding whole hierachies of the KP and mKP equations. 
It is a quasi-Leibniz algebra over the real line, generated by compositions of left and right derivatives of kernels of semi-additive operators.
These semi-additive operators are constructed from the scattering data.
The algebra is endowed with the P\"oppe product, first introduced by P\"oppe~\cite{PKdV,PSG} for Hankel operators for the sine-Gordon and Korteweg--de Vries (KdV) equations.
P\"oppe~\cite{PKP} generalised this product to semi-additive operators for the scalar KP equation. 
Underlying this `pre-P\"oppe' product is the property that the kernels of certain combinations of semi-additive operators correspond to the finite rank product of a pair of kernels.
Establishing integrability of the KP or mKP equation in the pre-P\"oppe algebra then corresponds to establishing a specific polynomial expansion therein.  
Higher hierarchy members correspond to higher degree polynomials. 

In \emph{potential form}, the non-commutative Kadomtsev--Petviashvili equation (KP) for the field $g=g(x,y;t)$ is, 
\begin{equation}
  \mathrm{D}g_y=\tfrac13\bigl(g_t-g_{xxx}\bigr)+2g_x^2+2\pa_x^{-1}[g_y,g_x],\label{eq:KPsingform}
\end{equation}
where we set $\mathrm{D}\coloneqq\pa^{-1}_x\pa_y$, and $[f,g]\coloneqq fg-gf$ is the usual commutator.
This form of the non-commutative KP equation can be found, for example, in Gilson and Nimmo~\cite{GN}, Kuperschmidt~\cite{K},
Hamanaka and Toda~\cite{HamanakaToda}, Sakhnovich~\cite{Sakhnovich}, Toda~\cite{Toda} and Wang and Wadati~\cite{WW}.

There are several forms of the non-commutative modified Kadomtsev--Petviashvili equation (mKP).
The forms we present herein can all be derived from the log-potential mKP equation for the field $q=q(x,y;t)$, given by, 
\begin{align}
  q\mathrm{D}\bigl(q^{-1}q_y+(q^{-1}q_x)^2\bigr)=&\;\tfrac13\bigl(q_t-q_{xxx}\bigr)+q_xq^{-1}q_{xx}\nonumber\\
                                                 &\;+q_xq^{-1}q_y+q_yq^{-1}q_x.\label{eq:lpKP}
\end{align}  
Thus, for example, if we set $\hat{g}\coloneqq q_xq^{-1}$ and $\hat{f}\coloneqq q_yq^{-1}$, then $\hat{g}=\hat{g}(x,y;t)$
and $\hat{f}=\hat{f}(x,y;t)$ satisfy the modified Kadomtsev--Petviashvili equation in the following form, 
\begin{subequations}
\begin{align}
 \hat{f}_x=&\;\hat{g}_y+[\hat{g},\hat{f}],\label{eq:eqamKPassign}\\
 \hat{f}_y=&\;\tfrac13\bigl(\hat{g}_t-\hat{g}_{xxx}\bigr)+2\hat{g}\hat{g}_x\hat{g}+\{\hat{f},\hat{g}_x\}\nonumber\\
           &\;+[\hat{g},\hat{g}_{xx}]-[\hat{g}^2,\hat{f}],\label{eq:amKP}  
\end{align}
\end{subequations}
where $\{f,g\}\coloneqq fg+gf$ is the usual anti-commutator. This is the form of the non-commutative mKP equation
that can be found in, eg., Dimakis and M\"uller--Hoissen~\cite{DMH}, Gilson, Nimmo and Sooman~\cite{GNS} and Wang and Wadati~\cite{WW}.
If we set $\hat{f}\equiv0$, this form of the non-commutative mKP collapses to the alternative form of the non-commutative modified Korteweg--de Vries (amKdV) equation. 
The fields $g=g(x,y;t)$ and $f=f(x,y;t)$ given by $g\coloneqq q^{-1}q_x$ and $f\coloneqq q^{-1}q_y$ satisfy the following form of the non-commutative mKP equation,
\begin{subequations}
\begin{align}
  f_x=&\;g_y-[g,f],\label{eq:mKPassign}\\
  f_y=&\;\tfrac13\bigl(g_t-g_{xxx}\bigr)+\{g_x,f+g^2\}-\bigl[g,\mathrm{D}(f+g^2)\bigr].\label{eq:mKP}
\end{align}
\end{subequations}
If we set $f\equiv0$, this form of the non-commutative mKP collapses to the standard form of the non-commutative modified Korteweg--de Vries (mKdV) equation. 
That the fields $(\hat{g},\hat{f})$ and $(g,f)$, defined via the log-potential $q$, respectively satisfy the forms of the non-commutative mKP equation
\eqref{eq:eqamKPassign}--\eqref{eq:amKP} and \eqref{eq:mKPassign}--\eqref{eq:mKP}, can be checked by direct straightforward computation.
The relation between the mKP equations forms \eqref{eq:eqamKPassign}--\eqref{eq:amKP} and \eqref{eq:mKPassign}--\eqref{eq:mKP} via~\eqref{eq:KPsingform}
is analogous to that for the alternative non-commutative KdV equation and the standard non-commutative KdV equation via the mirror meta-KdV given,
for example, in Carillo and Schiebold~\cite{CS}.

However, there is another form of the non-commutative mKP equation that we consider herein, and for which we establish the direct linearisation. 
Defining the fields $g=g(x,y;t)$ and $f=f(x,y;t)$ by $g\coloneqq q^{-1}q_x$ and $f\coloneqq q^{-1}q_y$, as we did above, but this time fully
deriving the time-evolution equation for $f$, generates the system pair,
\begin{subequations}
\begin{align}
  \mathrm{D}\bigl(g_y-[g,f]\bigr)=&\;\tfrac13\bigl(g_t-g_{xxx}\bigr)+\{g_x,f+g^2\}\nonumber\\
                                  &\;-\bigl[g,\mathrm{D}(f+g^2)\bigr],\label{eq:pairmKPg}\\
  \mathrm{D}\pa_y(f+g^2)=&\;\tfrac13\bigl(f_t-f_{xxx}\bigr)+\{f_x,f+g^2\}\nonumber\\
                         &\;+\bigl\{g,\mathrm{D}(g_y-[g,f])\bigr\}-\bigl[g_x,g_y-[g,f]\bigr]\nonumber\\    
                         &\;-\bigl[f,\mathrm{D}(f+g^2)\bigr].\label{eq:pairmKPf}
\end{align}
\end{subequations}
Note that we also inserted \eqref{eq:mKPassign} into the lefthand side of \eqref{eq:mKP} to generate the lefthand side of \eqref{eq:pairmKPg}.
The evolution equation for $f$ is derived via a similar manner to that for $g$ in \eqref{eq:mKP}.  
We also observe, in these equations, that \eqref{eq:pairmKPg} is linear or cubic in $g$ and linear in $f$, while \eqref{eq:pairmKPf} is linear in $f$ and quadratic in $g$.
This fact is crucial to our solution ansatz, as we explain further below--see the paragraph succeeding our discussion on the P\"oppe product in Lemma~\ref{lemma:Poppeprodsemiadd}.
At this juncture, it is convenient to set $h\coloneqq\pa_x^{-1}\bigl(g_y-[g,f]\bigr)$ and re-write the non-commutative mKP pair~\eqref{eq:pairmKPg}--\eqref{eq:pairmKPf} in the following form.
\begin{definition}[Lifted mKP equations]\label{def:liftedmKP}
We define the (non-commutative) \emph{lifted mKP equations} for $h$, $g$ and $f$ to be the system of equations given by,
\begin{subequations}
\begin{align}
    h_x=&\;g_y-[g,f],\label{eq:triplemKPh}\\
    h_y=&\;\tfrac13\bigl(g_t-g_{xxx}\bigr)+\{g_x,f+g^2\}\nonumber\\
                          &\;-\bigl[g,\mathrm{D}(f+g^2)\bigr],\label{eq:triplemKPg}\\
   \mathrm{D}\pa_y(f+g^2)=&\;\tfrac13\bigl(f_t-f_{xxx}\bigr)+\{f_x,f+g^2\}+\bigl\{g,h_y\bigr\}\nonumber\\
                          &\;-\bigl[g_x,h_x\bigr]-\bigl[f,\mathrm{D}(f+g^2)\bigr].\label{eq:triplemKPf}
\end{align}
\end{subequations}
\end{definition}
Importantly, all three forms for the mKP equation above,
\eqref{eq:eqamKPassign}--\eqref{eq:amKP}, \eqref{eq:mKPassign}--\eqref{eq:mKP} and \eqref{eq:triplemKPh}--\eqref{eq:triplemKPf},
collapse to the standard mKP equation in the commutative setting.
This is immediate for the first two forms. In the case of the lifted mKP equations \eqref{eq:triplemKPh}--\eqref{eq:triplemKPf}, in the commutative case,
naturally the commutator terms vanish, and the form $(\mathrm{D}g,g,\mathrm{D}g)$, i.e.\/ where $h=f=\mathrm{D}g$,
solves the commutative version of the lifted mKP equations \eqref{eq:triplemKPh}--\eqref{eq:triplemKPf}.
This is because $\mathrm{D}\{g_x,f+g^2\}$ equals the nonlinear terms on the right in \eqref{eq:triplemKPf},
minus the nonlinear term $ \mathrm{D}\pa_y(g^2)$ from the lefthand side.

The direct linearisation of integrable nonlinear partial differential equations is a classical approach, see for example:
Ablowitz, Ramani and Segur~\cite{ARS}; Dyson~\cite{Dyson}; Miura~\cite{Miura}; Nijhoff, Quispel, Van Der Linden and Capel~\cite{NQVC};
Nijhoff~\cite{Nijhoff}; Nijhoff and Capel~\cite{NC}; P\"oppe~\cite{PSG,PKdV,PKP} and Santini, Ablowitz and Fokas~\cite{SAF}.
Also see Zakharov and Shabat~\cite{ZS,ZS2}.
There has been much recent interest in the application of the direct linearisation approach to KP equations,
see for example Fu~\cite{Fu} and Fu and Nijhoff~\cite{FN1,FN2,FN3}.
The approach to direct linearisation of the non-commutative KP and mKP equations we present herein has its origins in the work
of P\"oppe and Nijhoff, just mentioned. In particular, recently, in Blower and Malham~\cite{BM}, Doikou, Malham and Stylianidis~\cite{DMS},
Doikou, Malham, Stylianidis and Wiese~\cite{DMSWc} and Malham~\cite{MNLS,MKdV}, P\"oppe's approach to direct linearisation was extended,
given an abstract algebraic context and applied to non-commutative classical integrable systems such as the non-commutative, KdV, and
nonlinear Schr\"odinger (NLS) and modified KdV, hierarchies. 

The direct linearisation solution approach for the non-commutative KP~\eqref{eq:KPsingform} and lifted mKP equations~\eqref{eq:triplemKPh}-\eqref{eq:triplemKPf}
we adopt herein is based on P\"oppe's approach for the commutative KP equation~\cite{PKP}. Also see Marchenko~\cite{Marchenko}.
The two principal ingredients are as follows.
\begin{enumerate}
\item The linearised form of the KP equation which has the following form for $p=p(t)$:
\begin{equation}\label{eq:linearform}
  \mathrm{D}p_y=\tfrac13\bigl(p_t-p_{xxx}\bigr).
\end{equation}
\item A linear integral equation, the Gelfand--Levitan--Marchenko (GLM) equation, of the form,
\begin{equation}\label{eq:GLM}
  P=G(\id-Q),
\end{equation}
  for the solution operator $G$, or equivalently, its kernel.
  Here $P$ is the scattering operator associated with the kernel function solution $p$ from \eqref{eq:linearform}.
\end{enumerate}
In this formulation, for the non-commutative KP equation, we set $Q\coloneqq P$. For the lifted mKP equations, we set $Q\coloneqq P^2$.
The form of the kernel $p$ associated with $P$ is also important. For the non-commutative KdV and NLS/mKdV hierarchies with no $y$-dependence,
we assume that $p$ has the form $p=p(z+\zeta+x;t)$.
Here $z,\zeta\in(-\infty,0]$ parametrise the operator kernel $p$ of $P$ and we assume that $x\in\mathbb R$ is an additive parameter.
The scattering operator $P$ in this case is thus a Hankel operator. However, for the KP equation, P\"oppe~\cite{PKP} assumed that $p$ has the form,
\begin{equation}
p=p(z+x,\zeta+x;y,t).\label{eq:semiadditiveform}
\end{equation}
This is a \emph{semi-additive} form with parameters $x,y\in\mathbb R$ and $z,\zeta\in(-\infty,0]$ parametrising the operator kernel $p$ of $P$.
Note the $y$-dependence in this semi-additive form has been left free so far. We now tie this down. The commutator relations in the Lax formulation
of the non-commutative KP equations impose the following two constraints on $p$:
\begin{equation}\label{eq:constraints}
p_x=p_z+p_\zeta\quad\text{and}\quad p_y=p_{zz}-p_{\zeta\zeta}.
\end{equation}
See Dimakis and M\"uller--Hoissen~\cite{DMH}, Gilson, Nimmo and Sooman~\cite{GNS}, Kuperschmidt~\cite{K}, P\"oppe~\cite{PKP} and Wang and Wadati~\cite{WW}
for further details on these constraints. We have already accommodated the first constraint with our semi-additive assumption for the form of $p$.
The second constraint above determines the $y$-dependence of $p$. Using~\eqref{eq:constraints}, another form for the linearised equation~\eqref{eq:linearform} is,
\begin{equation}\label{eq:secondlinearform}
   p_t=4\,\bigl(p_{zzz}+p_{\zeta\zeta\zeta}\bigr).
\end{equation}

To state our main results, we introduce the following notation which we use throughout.
For a Hilbert--Schmidt valued operator $G$ with square-integrable kernel $g$,
we denote its kernel by $\lb G\rb$, in other words,
\begin{equation*}
g=\lb G\rb.
\end{equation*}
In particular, we denote the solution to the GLM equation~\eqref{eq:GLM}, with the scattering data $p$ of the form~\eqref{eq:semiadditiveform},
by $\lb G\rb$ or $g=g(z,\zeta;x,y,t)$, depending on the context.
Recall for the non-commutative KP equation~\eqref{eq:KPsingform} we set $Q\coloneqq P$.
Our first main result is as follows. We prove this in Section~\ref{sec:KP}.
\begin{theorem}[KP]\label{thm:KP}
  Suppose $G$ is the Hilbert--Schmidt valued operator solution to the linear system of equations~\eqref{eq:linearform}--\eqref{eq:GLM},
  with $Q\coloneqq P$, together with the constraints~\eqref{eq:constraints}.
  Then $g=g(0,0;x,y,t)$ given by $g\coloneqq\lb G\rb$ satisfies the non-commutative KP equation~\eqref{eq:KPsingform}.
\end{theorem}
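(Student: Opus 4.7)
The plan is to compute every derivative of $g=\lb G\rb\bigr|_{z=\zeta=0}$ that appears in the KP equation~\eqref{eq:KPsingform}, expressed as the kernel at the boundary of a polynomial in $G$ and $P$, and then verify the identity inside the pre-P\"oppe algebra. The starting point is the resolvent form $G=P+GP$ of the GLM equation~\eqref{eq:GLM} with $Q=P$, which yields the differentiation recursion $G_{*}=P_{*}+G_{*}P+GP_{*}$ for any variable $*$. Since $p$ has the semi-additive form~\eqref{eq:semiadditiveform}, the first constraint of~\eqref{eq:constraints} is automatic, and the second, together with the equivalent linearised form~\eqref{eq:secondlinearform}, transports $y$- and $t$-derivatives of $p$ into boundary expressions in $z$ and $\zeta$. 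This is what will allow all mixed $x,y,t$-derivatives of $g$ to be reduced to kernels evaluated at $z=\zeta=0$.

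The central step is the P\"oppe product rule of Lemma~\ref{lemma:Poppeprodsemiadd}. It supplies, for any two semi-additive operators $A$, $B$, a factorisation of the $x$-derivative of $\lb AB\rb$ at the boundary as a rank-one outer product of the boundary traces $\lb A\rb|_{\zeta=0}$ and $\lb B\rb|_{z=0}$; this is the mechanism by which the quadratic and cubic terms in the KP equation are produced from linear objects in the pre-P\"oppe algebra. I would apply it repeatedly: once to produce $g_x^{2}$ from $\lb G^{2}\rb$ on $x$-differentiation, and again, in combination with the second constraint in~\eqref{eq:constraints}, to produce the commutator contributions from $\lb G^{2}\rb_y$ and $\lb G^{3}\rb$.

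With these ingredients I would assemble $\mathrm{D}g_y$ and $\tfrac13(g_t-g_{xxx})$ separately. On the linear side, the second constraint $p_y=p_{zz}-p_{\zeta\zeta}$ and~\eqref{eq:secondlinearform} ensure that $\lb G\rb$ inherits the linearised equation modulo commutator-type correction terms generated by the recursion $G_{*}=P_{*}+G_{*}P+GP_{*}$ whenever $*$ denotes $y$ or $t$; those correction terms are precisely what must be matched by the nonlinear right-hand side of~\eqref{eq:KPsingform}. On the nonlinear side, the P\"oppe product identifies $2g_x^2$ with one group of boundary rank-one pieces, and identifies $2\pa_x^{-1}[g_y,g_x]$ with exactly the antisymmetric remainder coming from the non-commuting order of the factors $G$ and $P$ in $\lb GPG\rb$ versus $\lb G^{2}P\rb$. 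The claim then reduces to a polynomial identity in the pre-P\"oppe algebra relating degree one, two and three boundary monomials in $G$.

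The main obstacle is the bookkeeping of non-commutative orderings. In P\"oppe's original commutative calculation the commutator term $2\pa_x^{-1}[g_y,g_x]$ is invisible, and the rank-one boundary decompositions commute freely, so all that must be checked is the scalar cancellation. Here the same P\"oppe surgery produces genuinely ordered boundary monomials, and the correct cancellation requires that every reordering of $G$ versus $P$ at the kernel level be tracked. Once one confirms that $\pa_x^{-1}[\,\cdot\,,\,\cdot\,]$ acts as the exact antisymmetriser on these ordered monomials, every term in~\eqref{eq:KPsingform} is accounted for, and the identity closes. The result of Theorem~\ref{thm:KP} follows by evaluating this identity at $z=\zeta=0$.
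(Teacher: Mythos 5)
Your framework is the right one --- work in the pre-P\"oppe algebra generated by the resolvent of the GLM equation, use the semi-additive constraints~\eqref{eq:constraints} and \eqref{eq:secondlinearform} to trade $t$- and $y$-derivatives for $z$- and $\zeta$-derivatives, and use the product rule of Lemma~\ref{lemma:Poppeprodsemiadd} to turn linear combinations of monomials into products of kernels. But the proposal stops exactly where the proof begins. The entire content of Theorem~\ref{thm:KP} is a single explicit cancellation: one must show that $\tfrac13(g_t-g_{xxx})+2g_x^2-\mathrm{D}g_y-2\pa_x^{-1}[g_y,g_x]$ vanishes identically as a polynomial in the monomials $\lb VP_{a_1,b_1}V\cdots P_{a_k,b_k}V\rb$, and you never exhibit, or even state, the identity that makes this happen. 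In the paper this is Lemma~\ref{lemma:keystep}: the combination $2[g_x,g_y]+\lb V(P_{4,0}-2P_{2,2}+P_{0,4})V\rb$ is an exact $x$-derivative of a specific sum of degree-one and degree-two monomials, namely $\lb V(P_{3,0}-P_{2,1}-P_{1,2}+P_{0,3})V\rb+\lb VP_{\hat 1}VP_{[2,0]}V\rb-\lb VP_{[2,0]}VP_{\hat 1}V\rb$, and verifying this requires expanding $g_xg_y$ and $g_yg_x$ via the pre-P\"oppe product and a non-obvious add-and-subtract of $\lb VP_{[2,0]}VP_{[2,0]}V\rb$. ``The identity closes'' is an assertion, not an argument, and without the lemma there is no proof.

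Moreover, your account of where the commutator term comes from is not correct. You attribute $2\pa_x^{-1}[g_y,g_x]$ to ``the antisymmetric remainder coming from the non-commuting order of the factors $G$ and $P$ in $\lb GPG\rb$ versus $\lb G^{2}P\rb$'', i.e.\ to reorderings of purely $x$-differentiated cubic monomials. But the commutator involves $g_y$, and its treatment hinges on the structure $P_y=P_{[2,0]}=P_{2,0}-P_{0,2}$ imposed by the second constraint in~\eqref{eq:constraints}: the monomials entering $g_xg_y$ and $g_yg_x$ carry the skew factor $P_{[2,0]}$, and the cancellation against $\mathrm{D}g_y$ works only because $\pa_y$ of the degree-one term produces the fourth-order kernel $P_{4,0}-2P_{2,2}+P_{0,4}$ that appears on the left of Lemma~\ref{lemma:keystep}. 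None of this is visible from the ordering of $G$ versus $P$ alone, so the mechanism you describe would not, as stated, generate the correct terms. Two smaller omissions: the reduction from $\lb G\rb=\lb PV\rb$ to $\lb V\rb$ (legitimate because $g$ enters~\eqref{eq:KPsingform} only through derivatives and $V=\id+PV$), and the far-field decay needed to apply $\pa_x^{-1}$ in the final step, should both be made explicit.
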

This establishes that the KP equation~\eqref{eq:KPsingform} is directly linearisable.  

Recall for the lifted mKP equations~\eqref{eq:triplemKPh}-\eqref{eq:triplemKPf} we set $Q\coloneqq P^2$.
If we set $V\coloneqq(\id-P)^{-1}$ and $V^\dag\coloneqq(\id+P)^{-1}$, then the GLM equation~\eqref{eq:GLM}, now given by $P=G(\id-P^2)$,
can be expressed in the form,
\begin{equation}\label{eq:mKPsolnform}
G=VPV^\dag.
\end{equation}
Derivatives of $\lb G\rb$ with respect to $x$ and/or $y$ can be expressed as linear combinations of monomials of the form, 
\begin{equation}\label{eq:mKPopmonomials}
\lb VP_{a_1,b_1}VP_{a_2,b_2}V\cdots VP_{a_k,b_k}V\rb,
\end{equation}
including monomials where one or more of the $V$'s shown are replaced by $V^\dag$.
Here we used the notation,
\begin{equation}
P_{a,b}\coloneqq\pa_z^a\pa_\zeta^b P.\label{eq:derivnotation}
\end{equation}
For any monomial $W=\lb VP_{a_1,b_1}VP_{a_2,b_2}V\cdots VP_{a_k,b_k}V\rb$ we use the notation $W^\dag$ 
to denote the corresponding monomial in which each `$P$' is replaced by `$-P$', each $V$ is replaced by $V^\dag$ and
each $V^\dag$ is replaced by $V$. For any operator $W$ with such a monomial form, we set,
\begin{equation}\label{eq:skew-symmdef}
[W]\coloneqq \lb W\rb-\lb W^\dag\rb\quad\text{and}\quad \{W\}\coloneqq \lb W\rb+\lb W^\dag\rb,
\end{equation}
which we extend linearly. For convenience, we also set,
\begin{equation}
P_{[a,b]}\coloneqq P_{a,b}-P_{b,a}. \label{eq:Pbracketnotation}
\end{equation}
Note, by partial fractions, we have $[V]\equiv[G]\equiv2\lb G\rb$. Our second main result is as follows. We prove this in Section~\ref{sec:mKP}.
\begin{theorem}[Lifted mKP equations]\label{thm:mKP}
  Suppose $G$ is the Hilbert--Schmidt valued operator solution to the linear system of equations~\eqref{eq:linearform}--\eqref{eq:GLM},
  with $Q\coloneqq P^2$, together with the constraints~\eqref{eq:constraints}.
  Then $h=h(0,0;x,y,t)$, $g=g(0,0;x,y,t)$ and $f=f(0,0;x,y,t)$, respectively given by $h\coloneqq[VP_{[1,0]}V^\dag]$, $g\coloneqq[V]$ and $f\coloneqq-\{VP_{[1,0]}V^\dag\}$,
  satisfy the lifted mKP equations~\eqref{eq:triplemKPh}-\eqref{eq:triplemKPf}.
\end{theorem}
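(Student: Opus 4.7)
The overall plan is to verify each of (\ref{eq:triplemKPh})--(\ref{eq:triplemKPf}) as a polynomial identity in the pre-P\"oppe algebra, in the same spirit as the verification of Theorem~\ref{thm:KP} carried out in Section~\ref{sec:KP}, but now applied to the richer monomial space generated by the splitting $G = VPV^\dag$ in (\ref{eq:mKPsolnform}). By construction, $g$, $h$ and $f$ are $\R$-linear combinations of kernel evaluations of monomials of the form (\ref{eq:mKPopmonomials}) at $z=\zeta=0$; all derivatives of these in $x$, $y$, $t$ and the nonlinear products appearing in (\ref{eq:triplemKPh})--(\ref{eq:triplemKPf}) will be recast as similar combinations, after which each equation reduces to a finite algebraic identity to check.

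The two supporting ingredients I would use are as follows. First, the differentiation rules $V_x = VP_xV$ and $V^\dag_x = -V^\dag P_x V^\dag$ (and their $y$- and $t$-analogues), together with the semi-additive constraints (\ref{eq:constraints}) and the linear flow (\ref{eq:secondlinearform}), which expand as $P_x = P_{1,0}+P_{0,1}$, $P_y = P_{2,0}-P_{0,2}$ and $P_t = 4(P_{3,0}+P_{0,3})$. These translate every partial derivative of $g$, $h$ and $f$ into a signed sum of monomials in the pre-P\"oppe algebra. Second, the P\"oppe product rule of Lemma~\ref{lemma:Poppeprodsemiadd}, which rewrites a product $\lb A\rb\cdot\lb B\rb$ of kernel evaluations as a sum of single kernel evaluations of composite monomials coming from the finite-rank reduction at $z=\zeta=0$; this handles $g^2$, $[g,f]$, $\{g_x,f+g^2\}$, $[g,\mathrm{D}(f+g^2)]$, $\{g,h_y\}$, $[g_x,h_x]$ and $[f,\mathrm{D}(f+g^2)]$.

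I would then verify each equation in turn. For (\ref{eq:triplemKPh}), $h_x$ is obtained by differentiating $V$, $V^\dag$ and $P_{[1,0]}$; $g_y$ by applying the $y$-constraint to $V_y = VP_yV$; and the residue is exactly the P\"oppe expansion of $[g,f]$. For (\ref{eq:triplemKPg}), the third $x$-derivative of $g$ combines with $g_t$ via (\ref{eq:secondlinearform}) so that the highest-order monomials cancel in pairs, with the residual quadratic and cubic terms matched by the P\"oppe expansions of the anticommutator and commutator on the right.

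The main obstacle is (\ref{eq:triplemKPf}): it is cubic in $(g,f,h)$, and the P\"oppe product generates the largest number of monomials here, since the left-hand side $\mathrm{D}\pa_y(f+g^2)$ must match $\tfrac13(f_t-f_{xxx}) + \{f_x,f+g^2\} + \{g,h_y\} - [g_x,h_x] - [f,\mathrm{D}(f+g^2)]$ monomial by monomial. Careful bookkeeping, organised by the skew/symmetric decomposition (\ref{eq:skew-symmdef}) and by the $P_{a,b}\leftrightarrow P_{b,a}$ symmetry encoded in (\ref{eq:Pbracketnotation}), will be essential to secure the cancellations. As a consistency check, under the commutative restriction $h=f=\mathrm{D}g$ noted after Definition~\ref{def:liftedmKP}, the triple collapses to the scalar mKP equation, which provides independent confirmation that the monomial identities we are isolating are the correct ones.
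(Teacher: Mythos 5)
Your proposal is correct and follows essentially the same route as the paper: translate all derivatives of $h$, $g$, $f$ into monomials of the form~\eqref{eq:mKPopmonomials} via the partial-fraction rules and the constraints~\eqref{eq:twolinearforms}, expand every product with the pre-P\"oppe product, and verify each of \eqref{eq:triplemKPh}--\eqref{eq:triplemKPf} as a finite monomial identity, with \eqref{eq:triplemKPf} being the laborious case (the paper relegates it to~\ref{sec:finishproof2ndThm}). The only organisational difference is that the paper packages the recurring skew/symmetric product computations into the commutator identities of Lemma~\ref{lemma:commutatoridentities}, which in particular yield \eqref{eq:triplemKPh} immediately via~\eqref{eq:crucial}; you would effectively rederive these along the way.
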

This establishes that the lifted mKP equations are directly linearisable.

The crucial property that underlies Theorems~\ref{thm:KP} and \ref{thm:mKP} is the pre-P\"oppe product. Indeed, we state it here. 
\begin{lemma}[Pre-P\"oppe product]\label{lemma:Poppeprodsemiadd}
  Assume that $P$ and $\hat P$ are semi-additive Hilbert--Schmidt operators with parameters $x,y\in\R$, and continuously differentiable kernels.
  Further assume that $G$ and $\hat G$ are Hilbert--Schmidt operators with continuous kernels. Then we have the following pre-P\"oppe product rule:
  \begin{multline*}
    \lb G(P_{0,1}\hat P\!+\!P\hat P_{1,0})\hat G\rb(z,\zeta;x,y)\\=\lb GP\rb(z,0;x,y)\,\lb \hat P \hat G\rb(0,\zeta;x,y).
  \end{multline*}
\end{lemma}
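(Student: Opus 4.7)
The plan is to exploit the semi-additive form $p=p(z+x,\zeta+x;y,t)$ of the kernels to recognise that the combination $P_{0,1}\hat P+P\hat P_{1,0}$ is an exact total derivative in the intermediate integration variable. This collapses the inner composition integral to a rank-one boundary term, and the outer integrals against $G$ and $\hat G$ then decouple immediately.

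First I would write, with $z,\zeta\in(-\infty,0]$ and suppressing the $(y,t)$ arguments,
\begin{equation*}
\lb P_{0,1}\hat P\rb(z,\zeta)=\int_{-\infty}^0(\pa_\zeta p)(z+x,s+x)\,\hat p(s+x,\zeta+x)\,\rd s,
\end{equation*}
and analogously for $\lb P\hat P_{1,0}\rb(z,\zeta)$ with $(\pa_z \hat p)(s+x,\zeta+x)$ in place of $\hat p$. The decisive observation is that by semi-additivity, differentiation of $p(z+x,s+x)$ with respect to its second slot coincides with $\pa_s$ applied to $p(z+x,s+x)$, and likewise $\pa_z$ applied to $\hat p(s+x,\zeta+x)$ coincides with $\pa_s$ applied to $\hat p(s+x,\zeta+x)$. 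Consequently the sum of the two integrands is the perfect $s$-derivative $\pa_s[p(z+x,s+x)\,\hat p(s+x,\zeta+x)]$, and the fundamental theorem of calculus, together with the decay at $s\to-\infty$ forced by the Hilbert--Schmidt hypothesis, gives
\begin{equation*}
\lb P_{0,1}\hat P+P\hat P_{1,0}\rb(z,\zeta)=p(z+x,x)\,\hat p(x,\zeta+x),
\end{equation*}
which is exactly $\lb P\rb(z,0)\,\lb\hat P\rb(0,\zeta)$.

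Finally, sandwiching this rank-one intermediate kernel between $G$ on the left and $\hat G$ on the right, the two inner integrations in the composition separate into independent factors, producing
\begin{multline*}
\lb G(P_{0,1}\hat P+P\hat P_{1,0})\hat G\rb(z,\zeta)\\
=\lb GP\rb(z,0)\,\lb\hat P\hat G\rb(0,\zeta),
\end{multline*}
as claimed. I expect the only technical point requiring care to be the vanishing of the boundary term at $s=-\infty$ and the legitimacy of identifying $(\pa_\zeta p)(z+x,s+x)$ with $\pa_s p(z+x,s+x)$ inside the integral; both follow from the Hilbert--Schmidt and continuous differentiability hypotheses via standard dominated convergence. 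The substantive content of the lemma is thus the elementary but crucial algebraic identity that $P_{0,1}\hat P+P\hat P_{1,0}$ is a total $s$-derivative under composition, which is in turn a direct manifestation of the semi-additive ansatz.
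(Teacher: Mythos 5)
Your proposal is correct and is precisely the Fundamental Theorem of Calculus argument the paper itself indicates for this lemma: semi-additivity turns $P_{0,1}\hat P+P\hat P_{1,0}$ into a total derivative in the intermediate composition variable, the boundary term at $s=0$ yields the rank-one kernel $\lb P\rb(z,0)\,\lb\hat P\rb(0,\zeta)$, and sandwiching by $G$ and $\hat G$ factorises the result. The technical points you flag (vanishing at $s=-\infty$ and identifying slot derivatives with $\pa_s$) are handled exactly as you describe, so nothing is missing.
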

This result is straightforwardly established using the Fundamental Theorem of Calculus.
It expresses the fact that particular linear combinations of operator compositions involving derivatives of semi-additive operators are equivalent
to a finite rank product of (the kernels of) specific factors, as shown. In particular, this means that in principle, specific linear combinations of 
derivatives of the kernel $g$ of $G$ in say \eqref{eq:mKPsolnform} in the lifted mKP case, might be expressible as finite rank products of lower order derivatives of $g$.
Note, the time derivative of $G$ is replaced by spatial derivatives via the linear equation~\eqref{eq:secondlinearform} for $p$.
As we shall see, the existence of such polynomial expansions corresponds to the integrability of the non-commutative KP and lifted mKP equations;
see Sections~\ref{sec:KP} and \ref{sec:mKP}.

Let us now motivate and explain the form of the lifted mKP equations~\eqref{eq:triplemKPh}-\eqref{eq:triplemKPf}.
In Blower and Malham~\cite{BM}, the authors sought and proved that the class of functions of the form $g\coloneqq[G]\equiv[V]$ solved the whole non-commutative NLS/mKdV hierarchy.
In that case the GLM equation is the same as \eqref{eq:GLM} with $Q=P^2$, while $P$ is a Hankel operator that satisfies the linearised form of the appropriate hierarchy member.
It is therefore natural to seek solutions of this form to the non-commutative mKP equation, with $P$ a semi-additive operator as described above.
It becomes immediately apparent that if we seek a class of solutions to the non-commutative KP equation of the form $g\coloneqq[G]\equiv[V]$, since $[V]$
is a skew form, then \eqref{eq:mKPassign} cannot be directly satisfied. The product of a skew form and a symmetric form is a skew form,
while the product of a skew form and a skew form is a symmetric form. See Remarks~\ref{rmk:skewsymmproducts} and \ref{rmk:explainform}.
In equation \eqref{eq:mKPassign}, we must pick $f$ to have a symmetric form so that the product in the term $[g,f]$ generates the skew form $g_y$,
but then we meet a contradiction as $f_x$ is not a skew form.
The substitution that transforms the non-commutative KP equations \eqref{eq:mKPassign}--\eqref{eq:mKP} to
\eqref{eq:pairmKPg}--\eqref{eq:pairmKPf} and then equivalently the lifted mKP equations \eqref{eq:triplemKPh}-\eqref{eq:triplemKPf} then
fixes the degrees of all the terms therein allowing $f$ to be a symmetric form and $h$ to be a skew form.
From this perspective, we can think of the lifted mKP equations as one natural generalisation of the non-commutative mKdV equation.
However, there is even further justification in terms of a \emph{Miura transformation}.
There is a natural Miura transformation of the form $g_x+g^2+f=2g_x^{\mathrm{KP}}$, relating the components $g$ abnd $f$ of the lifted mKP equations
and the solution $g^{\mathrm{KP}}$ of the non-commutative KP equation~\eqref{eq:KPsingform}. If $g^{\mathrm{KP}}$ has the form in Theorem~\ref{thm:KP}
and $g$ and $f$ have the forms in Theorem~\ref{thm:mKP}, then this relation holds exactly. Further, this Miura transformation collapses 
to the classical one relating the non-commutative mKdV and KdV equations when $f\equiv0$. See Section~\ref{sec:mKP} for more details.

The KP and mKP equations have an incredibly rich history and structure, with many connections to parallel branches of mathematics and theoretical physics.
The KP equation was originally derived as a natural generalisation of the KdV equation to two dimensions by Kadomtsev and Petviashvili~\cite{KadomtsevPetviashvili},
with applications to long waves in shallow water.
It has further applications to nonlinear optics (Pelinovsky, Stepanyants and Kivshar~\cite{PSK}) as well as to ferromagnetism and Bose--Einstein condensates.
Following the characterisation by Dyson~\cite{Dyson} of KdV solutions in terms of Fredholm determinants,
it was shown in the nineteen eighties how solutions to the KP equation could be formulated in terms of Fredholm Grassmannians and their associated determinantal bundle,
see Sato~\cite{SatoI,SatoII}, Miwa, Jimbo and Date~\cite{MJD}, Hirota~\cite{Hirota}; also see Segal and Wilson~\cite{SW} for analogous results for the KdV hierarchy.
Connections between the KP equation, Jacobians of algebraic curves and theta functions were also made at that time, see Mulase~\cite{Mu94} and Mumford~\cite{Mumford}.
Also see Ball and Vinnikov~\cite{BV}, Cheng, Li and Tian~\cite{CLT}, Ercolani and McKean~\cite{EM} and McKean~\cite{Mc87}. 
Global well-posedness for periodic square-integrable solutions given such data was established by Bourgain~\cite{Bourgain},
as well as for smooth solutions for smoother data. For more on this see Section~\ref{sec:discussion}.
There are also well-known rational solutions, see for example, Kasman~\cite{Kasman}, Pelinovsky~\cite{Pelinovsky} and P\"oppe~\cite{PKP}.
The classification of all possible soliton interactions, paramaterised via a finite-dimensional Grassmannians, was completed by Kodama~\cite{Kodama}.
Recent results have also exended the long list of connections between the KP equation and other fields.
For example, Bertola, Grava and Orsatti~\cite{BGO} establish a connection between the $\tau$-functions of the KP and nonlinear Schr\"odinger hierarchies. 
For another striking example, connections between the KP equation and the KPZ equation for the random growth off a one-dimensional substrate,
have also been recently established by Quastel and Remenik~\cite{QR}.
Also see Zhang~\cite{Zhang} for the further connection to the Baik--Rains distribution. 
These results are connected to Tracy-Widom distributions and solutions to the KP equation, see Tracy and Widom \cite{TW94,TW96,TW03}.
Also see McKean~\cite{McKean11}.
More general results in Quastel and Remenik~\cite{QR} include the matrix KP equation.
There are also supersymmetric extensions of the KP and mKP equations,
see Brunelli and Das~\cite{BrunelliDas}, Delduc, Gallot and Sorin~\cite{DGS}, Jia, Chen and Lou~\cite{JCL},
Manin and Radul~\cite{ManinRadul}, Nishino~\cite{Nishino} and Prykarpatski, Kycia and Dilnyi~\cite{PKD}. 
In 1990 Witten~\cite{Witten} conjectured a connection between partition functions in string theory and the $\tau$-function of the KdV hierarchy,
which was proved by Kontsevich~\cite{Kontsevich}. Also see Cafasso and Wu~\cite{CafassoWu}.
Paniak~\cite{Paniak} studied the extension of this connection to that between the non-commutative KP equation and D-branes.
See in particular, Hamanaka~\cite{Hamanaka06,Hamanaka} and Harvey~\cite{Harvey}.
In this context, due to quantisation of the phase space or a non-commutative geometry interpretation, non-commutativity refers to when
the independent coordinates, such as $x$, are non-commutative.
However, via the `Moyal product', the derivatives with respect to the independent coordinates $x$ and $y$ can be interpreted classically, but the product
between the dependent field and its derivatives with respect to $x$ and $y$ becomes non-commutative.
This is the interpretation taken by many authors who consider the non-commutative KP and mKP equations, which we adopt herein.
See Gilson and Nimmo~\cite{GN}, Gilson, Nimmo and Sooman~\cite{GNS}
Hamanaka~\cite{Hamanaka}, Hamanaka and Toda~\cite{HamanakaToda}, Koikawa~\cite{Koikawa}, Toda~\cite{Toda} and Wang and Wadati~\cite{WW}.
The latter `non-commutativity' is often referred to as the `nonabelian' context, see Nijhoff~\cite{Nijhoff-Lagrangian3form}.
Non-commutative integrable systems have received a lot of recent attention in terms of general applications.
For example, in non-commutative nonlinear optics, applications also include boomerons, trappons and simulton solutions; 
see Calogero and Degasperis~\cite{CD}, Degasperis and Lombardo~\cite{DL}. 
For the non-commutative KP equation, multi-soliton and $\tau$-function solutions can be characterised in terms of quasi-determinants,
see for example Etingof, Gelfand and Retakh~\cite{EGR97,EGR98}, Gilson and Nimmo~\cite{GN}, Gilson \textit{et al.} \cite{GNS}, Hamanaka~\cite{Hamanaka} and Sooman~\cite{So}.
We discuss this in more detail in Section~\ref{sec:discussion}, where we also introduce a quasi-trace solution formula for the non-commutative KP equation. 

To summarise, what is new in this paper, is that in the non-commutative case we:
\begin{enumerate}
\item[(1)] Present the pre-P\"oppe algebra associated with the KP and lifted mKP equations; 
\item[(2)] Prove the KP and lifted mKP equations are integrable via direct linearisation;
\item[(3)] Establish the Miura transformation linking the KP and lifted mKP equations;
\item[(4)] Establish a quasi-trace solution form for the KP equation;
\item[(5)] Generate approximate solutions to the KP equation by numerically solving the GLM equation.
\end{enumerate}

Our paper is structured as follows. 
In Section~\ref{sec:Poppealg} we develop the pre-P\"oppe algebra underlying both the non-commutative KP equation and lifted mKP equations.
We establish the direct linearisation of the non-commutative KP equation and lifted mKP equations respectively in Sections~\ref{sec:KP} and \ref{sec:mKP}. 
In Section~\ref{sec:discussion} we show that the solution to the non-commutative KP equation can be expressed as a quasi-trace,
and discuss extensions and generalisations of the results herein. We derive the soliton solution to the commutative KP equation in~\ref{sec:solitons}.
Then in \ref{sec:numerics} we show how, for the non-commutative KP equation, the GLM equation can be numerically solved to reveal the
time evolution of solutions, and compare these simulations against some pseudo-spectral schemes.
Finally, in \ref{sec:finishproof2ndThm}, we provide the final part to the proof of Theorem~\ref{thm:mKP}.

\section{Pre-P\"oppe algebra}\label{sec:Poppealg}
Herein we introduce the pre-P\"oppe algebra. We first introduce its simpler form, sufficient to encapsulate the KP equation, and then, its more general form that encapsulates the lifted mKP equations.
We begin by giving a more rigorous context to some of the material presented in the introduction.

Let $\mathbb V\coloneqq L^2((-\infty,0];\mathbb C^m)$ denote the space of square-integrable $\CC^m$-valued functions on $(-\infty,0]$,
and $\mathfrak J_2=\mathfrak J_2(\Vb)$ denote the space of Hilbert--Schmidt operators on $\mathbb V$.
Any Hilbert--Schmidt valued operator $G$ on $\Vb$ generates a unique kernel function $g\in L^2((-\infty,0]^{\times2};\CC^{m\times m})$
such that for all $\phi\in\Vb$ and $z\in(-\infty,0]$,
\begin{equation*}
\bigl(G\phi\bigr)(z)=\int_{-\infty}^0 g(z,\zeta)\,\phi(\zeta)\,\rd\zeta.
\end{equation*}
Conversely, any $g\in L^2((-\infty,0]^{\times2};\CC^{m\times m})$ generates a Hilbert--Schmidt valued operator $G$ on $\Vb$,
with $\|G\|_{\mathfrak J_2(\Vb)}=\|g\|_{L^2((-\infty,0]^{\times2};\CC^{m\times m})}$. See Simon~\cite{Simon}.
\begin{definition}[Left and right derivatives]
  Consider a Hilbert--Schmidt valued operator $G$ on $\Vb$ with a continuously differentiable kernel $g=g(z,\zeta)$.
  Assume that the partial derivatives $\pa_z g$ and $\pa_\zeta g$ of $g$ are square-integrable on $(-\infty,0]^2$.
  Then we define the left and right derivatives of $G$, respectively, $\pa_\ell G$ and $\pa_rG$, to be the Hilbert--Schmidt valued operators
  corresponding to the respective kernels $\pa_z g$ and $\pa_\zeta g$.
\end{definition}
\begin{remark}
Note for any pair of suitable Hilbert--Schmidt operators $G$, $\hat G$, we have,
\begin{equation*}
\pa_\ell(G\hat G)=(\pa_\ell G)\hat G\quad\mathrm{and}\quad \pa_r(G\hat G)=G(\pa_r\hat G).
\end{equation*}
\end{remark}
\begin{definition}[Semi-additive operator]\label{def:semiadd}
  A Hilbert--Schmidt operator $P$ on $\Vb$ with corresponding square-integrable kernel $p$ is \emph{semi-additive with parameters} $x,y\in\R$
  if its operation for any square-integrable function $\phi\in\Vb$, with $z\in(-\infty,0]$, is,
  \begin{equation*}
     \bigl(P\phi\bigr)(z;x,y)=\int_{-\infty}^0 p(z+x,\zeta+x;y)\,\phi(\zeta)\,\rd\zeta.
  \end{equation*}
\end{definition}
Note that for any semi-additive operator such as that above, we have $\pa_xP=(\pa_\ell+\pa_{\mathrm{r}})P$.  
As mentioned in the introduction, crucial to our results herein is the pre-P\"oppe product given in Lemma~\ref{lemma:Poppeprodsemiadd}.
\begin{remark}\label{rmk:multiplefactors}
Some linear combinations of monomials generate multi-factor products which are necessarily of the form $\lb\cdot\rb(z,0;x,y)\lb\cdot\rb(0,0;x,y)\cdots\lb\cdot\rb(0,0;x,y)\lb\cdot\rb(0,\zeta;x,y)$.
Hereafter, we represent such products by $\lb\cdot\rb\lb\cdot\rb\cdots\lb\cdot\rb\lb\cdot\rb$.
\end{remark}
\begin{remark}
We note that, in fact, Theorems~\ref{thm:KP} and \ref{thm:mKP} both hold for all $(z,\zeta)\in(-\infty,0]^{\times2}$,
with the proviso that kernel products are interpreted as outlined in Lemma~\ref{lemma:Poppeprodsemiadd}, and more generally, Remark~\ref{rmk:multiplefactors}.
\end{remark}
The simpler pre-P\"oppe algebra is generated by $\lb V\rb$ and its left and right partial derivatives, where $V\coloneqq (\id-P)^{-1}$.
Here, we assume $P$ is semi-additive with parameters $x,y\in\R$. We suppress its $y$-dependence herein.
The reason for this will become clear in Section~\ref{sec:KP}. By partial fractions for operators, we observe that,
\begin{equation}\label{eq:partialfractions}
V\equiv\id+PV\equiv\id+VP.
\end{equation}
Suppose $\mathbb N^\ast$ denotes the set of words $a_1a_2\cdots a_n$ we can construct from $a_1,a_2,\ldots,a_n\in\mathbb N$.
The set $\mathbb N^\ast$ is called the free monoid. Let $\mathcal C(n)$ denote the set of all compositions of $n$,
and $\mathcal C$ denote the set of all compositions.
The following map is a natural element in all of the P\"oppe algebras constructed thus far; see Malham~\cite{MKdV}. 
\begin{definition}[Signature character map]\label{def:signature}
  Suppose $a_1a_2\cdots a_n\in\mathbb N^\ast$. The \emph{signature character map} $\chi\colon\mathbb N^\ast\to\mathbb Q$
  is defined by ($C_\ell^n$ as the Leibniz coefficient $n$ choose $\ell$),
  \begin{equation*}
   \chi\colon a_1a_2\cdots a_n\mapsto \prod_{k=1}^nC_{a_k}^{a_k+\cdots+a_n}.
  \end{equation*}
\end{definition}
\begin{remark}
  It is natural to extend $\chi$ to act homomorphically on any tensor product of compositions so that we have,
  $\chi(w_1\otimes w_2\otimes\cdots\otimes w_\ell)=\chi(w_1)\chi(w_2)\cdots\chi(w_\ell)$ for any words $w_1,w_2,\ldots,w_\ell\in\mathbb N^\ast$.
  See Malham~\cite{MKdV} and Blower and Malham~\cite{BM}. We do not need to utilise this herein though.
\end{remark}

Hereafter we use the notation, $P_{\hat a}\coloneqq\pa_x^aP$.
We assume $P$ is continuously differentiable to any order we require, and its partial derivatives at that order are Hilbert--Schmidt valued.
By successively differentiating $V$ with respect to $x$, and using the partial fraction formulae~\eqref{eq:partialfractions}, and then applying the
kernel bracket operator `$\lb\,\cdot\,\rb$', we observe,
\begin{equation}\label{eq:expansion}
\pa_x^n\lb V\rb=\sum \chi(a_1\cdots a_k)\cdot\lb VP_{\hat{a}_1}V\cdots VP_{\hat{a}_k}V\rb,
\end{equation}
where the sum is over all compositions $a_1a_2\cdots a_k\in\mathcal C(n)$.
We now lean on the fact that $P$ is semi-additive. We observe that $P_{\hat a}=(\pa_\ell+\pa_{\mathrm{r}})^aP$ for any $a\in\mathbb N$.
This is just the chain rule. Thus we observe that $P_{\hat a}$ is represented by, 
\begin{equation}\label{eq:Leibnizexpansionorig}
P_{\hat a }=\sum_{k=0}^a C_{k}^a\,\pa_\ell^k\pa_{\mathrm{r}}^{a-k}P.
\end{equation}
Note, the terms in the sum are indexed by pairs $(k,a-k)$ of left and right derivatives, respectively.
Recall our notation \eqref{eq:derivnotation} in the introduction.
In the monomials on the right in~\eqref{eq:expansion}, the individual `$P_{\hat a}$' factors are Leibniz sums of pairs of left and right derivatives of the form $P_{a,b}$.
We can expand each of $P_{\hat{a}_1}$ to $P_{\hat{a}_k}$ in linear combinations of expressions of the form
$VP_{\hat{a}_1}VP_{\hat{a}_2}V\cdots VP_{\hat{a}_k}V$. Hence the base monomials in~\eqref{eq:expansion} are of the form,
\begin{equation}\label{eq:opmonomials}
\lb VP_{a_1,b_1}VP_{a_2,b_2}V\cdots VP_{a_k,b_k}V\rb.
\end{equation}
With this in hand, the following formulation of the pre-P\"oppe product formula in Lemma~\ref{lemma:Poppeprodsemiadd} is crucial to our algebra construction. 
\begin{lemma}\label{lemma:Poppeprodmonomials}
  For arbitrary Hilbert--Schmidt operators $F$ and $\hat{F}$ and a semi-additive Hilbert--Schmidt operator $P$
  with parameters $x,y$ and a smooth kernel, we have, for any $a,b,c,d\in\mathbb N$,
  \begin{align*}
  \lb FP_{a,b} V\rb\,\lb VP_{c,d}\hat{F}\rb=&\;\lb FP_{a,b+1} VP_{c,d}\hat{F}\rb\\
                                              &\;+\lb FP_{a,b} VP_{c+1,d}\hat{F}\rb\\
                                              &\;+\lb FP_{a,b} VP_{\hat{1}} VP_{c,d}\hat{F}\rb.
  \end{align*}
\end{lemma}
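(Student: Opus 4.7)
My plan is to reduce the claim to the basic pre-P\"oppe product of Lemma~\ref{lemma:Poppeprodsemiadd}. The direct obstruction is that $FP_{a,b}V$ ends in $V$ while $VP_{c,d}\hat F$ begins with $V$, so the lemma does not apply immediately (it requires a semi-additive factor at the adjacent boundary). I would eliminate the boundary $V$'s by the resolvent identities $V=\id+VP=\id+PV$ (which hold because $V=(\id-P)^{-1}$ commutes with $P$), obtaining
\begin{align*}
 \lb FP_{a,b}V\rb &= \lb FP_{a,b}\rb + \lb FP_{a,b}VP\rb,\\
 \lb VP_{c,d}\hat F\rb &= \lb P_{c,d}\hat F\rb + \lb PVP_{c,d}\hat F\rb.
\end{align*}
Distributing the product gives four cross-term products, each of the form ``kernel ending in a semi-additive operator at $(z,0)$'' times ``kernel starting with a semi-additive operator at $(0,\zeta)$'', to which Lemma~\ref{lemma:Poppeprodsemiadd} now applies directly.

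I would then apply the pre-P\"oppe product to each of the four cross terms, with appropriate choices of $G$, $\hat G$ and of the two semi-additive operators drawn from $P$, $P_{a,b}$, $P_{c,d}$ (all three are semi-additive, since $\pa_z$ and $\pa_\zeta$ preserve the kernel form $p(z+x,\zeta+x;y)$). In each application the shift rule $(P_{0,1}\hat P + P\hat P_{1,0})$ produces either the required index bumps $P_{a,b+1}$ and $P_{c+1,d}$, or else a $P_{1,0}$ or $P_{0,1}$ inserted between two interior $V$'s. Several of the resulting monomials still carry an interior $VP$ or $PV$; these I would collapse using $VP=PV=V-\id$, leaving a sum of twelve signed monomials.

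Careful bookkeeping of the twelve monomials reveals four pairwise cancellations: the four ``$\id$''-terms generated by $V-\id$ in the last step each annihilate a twin arising from a different cross term. The four survivors are
\begin{align*}
 &\lb FP_{a,b+1}VP_{c,d}\hat F\rb,\quad \lb FP_{a,b}VP_{c+1,d}\hat F\rb,\\
 &\lb FP_{a,b}VP_{0,1}VP_{c,d}\hat F\rb,\quad \lb FP_{a,b}VP_{1,0}VP_{c,d}\hat F\rb,
\end{align*}
and the last two combine to $\lb FP_{a,b}VP_{\hat 1}VP_{c,d}\hat F\rb$ via the semi-additive chain rule $P_{\hat 1}=P_{1,0}+P_{0,1}$. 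This is precisely the claimed right-hand side. The only real difficulty is the combinatorial bookkeeping of the four cancellations; every other step is an immediate consequence of Lemma~\ref{lemma:Poppeprodsemiadd}, the resolvent identity, or the commutativity of $V$ and $P$. The uniform convention for kernel products at the boundary points $(z,0)$ and $(0,\zeta)$, as recorded in Remark~\ref{rmk:multiplefactors}, means no separate evaluation-point tracking is needed.
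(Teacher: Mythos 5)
Your argument is correct and is precisely the route the paper indicates (its entire proof is the remark that the result follows ``by systematically combining the partial fraction formulae $V=\id+PV=\id+VP$ with P\"oppe's product rule in Lemma~\ref{lemma:Poppeprodsemiadd}''); your four cross terms, the four cancellations of the $\id$-pieces against their twins, and the final recombination $P_{1,0}+P_{0,1}=P_{\hat 1}$ all check out. Nothing further is needed.
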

This result is straightforwardly established by systematically combining the partial fraction formulae~\eqref{eq:partialfractions} with P\"oppe's product rule
in Lemma~\ref{lemma:Poppeprodsemiadd}. 
We can thus construct the real matrix algebra of monomials of the form~\eqref{eq:opmonomials}, with the product given by the pre-P\"oppe product in Lemma~\ref{lemma:Poppeprodmonomials}.
This algebra is sufficient for our purpose of establishing the direct linearisation of the KP equation~\eqref{eq:KPsingform}.

To establish direction linearisation for the lifted mKP equations~\eqref{eq:triplemKPh}-\eqref{eq:triplemKPf}, we consider the algebra
generated by $\lb V\rb$ and $\lb V^\dag\rb$, where $V^\dag\coloneqq (\id+P)^{-1}$, and their left and right derivatives.
Actually as we shall see, it is slightly more general than this.
As above, we assume $P$ is semi-additive with parameters $x,y\in\R$, and we suppress its $y$-dependence.
By partial fractions, in addition to \eqref{eq:partialfractions}, we also have,
\begin{equation}\label{eq:partialfractionsdag}
V^\dag\equiv\id-PV^\dag\equiv\id-V^\dag P.
\end{equation}
All of the properties derived above naturally apply here, except that now we must also incorporate $\lb V^\dag\rb$.
For example, there is an equivalent expansion to \eqref{eq:expansion} for $\pa_x^n\lb V^\dag\rb$ which is naturally sign-sensitive.
We generalise the class of base polynomials of the form \eqref{eq:opmonomials} to those of the form \eqref{eq:mKPopmonomials}
including monomials where one or more of the $V$'s shown are replaced by $V^\dag$.
With these new monomial forms in hand, the form of the pre-P\"oppe product in Lemma~\ref{lemma:Poppeprodmonomials} can be extended as follows.
Recall our notation for $P_{[a,b]}$ in \eqref{eq:Pbracketnotation} in the introduction.
\begin{lemma}\label{lemma:Poppeprodmod}
  For arbitrary Hilbert--Schmidt operators $F$ and $\hat{F}$ and a semi-additive Hilbert--Schmidt operator $P$
  with parameters $x,y$ and a smooth kernel, for any $a,b,c,d\in\mathbb N$ we have, 
  \begin{align*}
  \lb FP_{a,b} V\rb\,\lb V P_{c,d}\hat{F}\rb=&\;\lb FP_{a,b+1} V P_{c,d}\hat{F}\rb\\
                                               &\;+\lb FP_{a,b} V P_{c+1,d}\hat{F}\rb\\
                                               &\;+\lb FP_{a,b} V P_{\hat{1}} V P_{c,d}\hat{F}\rb,\\    
  \lb FP_{a,b} V\rb\,\lb V^\dag P_{c,d}\hat{F}\rb=&\;\lb FP_{a,b+1} V^\dag P_{c,d}\hat{F}\rb\\
                                                   &\;+\lb FP_{a,b} V P_{c+1,d}\hat{F}\rb\\
                                                   &\;-\lb FP_{a,b} VP_{[1,0]}V^\dag P_{c,d}\hat{F}\rb,\\
  \lb FP_{a,b} V^\dag\rb\,\lb V P_{c,d}\hat{F}\rb=&\;\lb FP_{a,b+1} V P_{c,d}\hat{F}\rb\\
                                                   &\;+\lb FP_{a,b} V^\dag P_{c+1,d}\hat{F}\rb\\
                                                   &\;+\lb FP_{a,b} V^\dag P_{[1,0]}V P_{c,d}\hat{F}\rb,\\
  \lb FP_{a,b} V^\dag\rb\,\lb V^\dag P_{c,d}\hat{F}\rb=&\;\lb FP_{a,b+1} V^\dag P_{c,d}\hat{F}\rb\\
                                                        &\;+\lb FP_{a,b} V^\dag P_{c+1,d}\hat{F}\rb\\
                                                        &\;-\lb FP_{a,b} V^\dag P_{\hat{1}} V^\dag P_{c,d}\hat{F}\rb.
  \end{align*}
\end{lemma}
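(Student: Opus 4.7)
The plan is to derive each of the four identities by the same Fundamental Theorem of Calculus (FTC) template already underlying Lemma~\ref{lemma:Poppeprodsemiadd}: the boundary product $\lb\cdot\rb(z,0)\lb\cdot\rb(0,\zeta)$ equals the integral from $-\infty$ to $0$ of $\pa_{\zeta'}$ applied to the product of the two operator kernels, and the work reduces to computing that $\zeta'$-derivative when the inner operators are $V$ or $V^\dag$. The first of the four identities is literally Lemma~\ref{lemma:Poppeprodmonomials} and requires no new argument; the remaining three follow by the same scheme, once the sign contributions arising from $V^\dag$ are handled.

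First I would record the four elementary derivative formulas
\begin{equation*}
\pa_r V = V P_{0,1},\quad \pa_\ell V = P_{1,0} V,\quad \pa_r V^\dag = -V^\dag P_{0,1},\quad \pa_\ell V^\dag = -P_{1,0} V^\dag,
\end{equation*}
which follow by differentiating the partial fraction identities \eqref{eq:partialfractions} and \eqref{eq:partialfractionsdag}, the minus signs originating from $V^\dag=\id-PV^\dag=\id-V^\dag P$. Then, for each of the four cases I would write
\begin{equation*}
\lb FP_{a,b}W_1\rb(z,0)\,\lb W_2 P_{c,d}\hat F\rb(0,\zeta) = \int_{-\infty}^{0} \pa_{\zeta'}\bigl(\lb FP_{a,b}W_1\rb(z,\zeta')\,\lb W_2P_{c,d}\hat F\rb(\zeta',\zeta)\bigr)\,\rd\zeta',
\end{equation*}
justified by the decay at $-\infty$ of Hilbert--Schmidt kernels, with $W_1,W_2\in\{V,V^\dag\}$ selected according to the identity in question. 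Distributing $\pa_{\zeta'}$ by Leibniz and substituting from the derivative formulas above yields four summands: the two contributions from $\pa_r P_{a,b}$ and $\pa_\ell P_{c,d}$ produce $FP_{a,b+1}W_2P_{c,d}\hat F$ and $FP_{a,b}W_1P_{c+1,d}\hat F$, while the contributions from $\pa_r W_1$ and $\pa_\ell W_2$ sandwich $P_{0,1}$ and $P_{1,0}$ respectively between $W_1$ and $W_2$. Their sum collapses to $+VP_{\hat 1}V$ in case $(V,V)$, to $-V^\dag P_{\hat 1}V^\dag$ in case $(V^\dag,V^\dag)$, to $-VP_{[1,0]}V^\dag$ in case $(V,V^\dag)$, and to $+V^\dag P_{[1,0]}V$ in case $(V^\dag,V)$, reproducing precisely the four formulas stated.

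The main obstacle is sign bookkeeping. At each step one must select the correct form of the partial fraction identity: the rightmost form ($V=\id+VP$, $V^\dag=\id-V^\dag P$) when computing $\pa_r$, and the leftmost form when computing $\pa_\ell$, so that the derivative lands on a $P$-factor rather than recursively on $V$ or $V^\dag$. Once this is done consistently, whether the two middle $P$-terms combine to $P_{\hat 1}=P_{1,0}+P_{0,1}$ or to $\pm P_{[1,0]}=\pm(P_{1,0}-P_{0,1})$ is determined solely by whether the signs inherited from $\pa_r W_1$ and $\pa_\ell W_2$ agree or differ, and the proof closes simultaneously for all four identities, matching Lemma~\ref{lemma:Poppeprodmonomials} in the case $W_1=W_2=V$.
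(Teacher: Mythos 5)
Your proposal is correct and takes essentially the same route as the paper, which obtains all four identities by combining the partial-fraction formulae $V=\id+VP=\id+PV$ and $V^\dag=\id-PV^\dag=\id-V^\dag P$ with the FTC-based product rule of Lemma~\ref{lemma:Poppeprodsemiadd}; you simply unpack that product rule inline rather than citing it. One small caution: as operator identities your displayed formulas $\pa_r V=VP_{0,1}$, $\pa_\ell V^\dag=-P_{1,0}V^\dag$, etc., record only the non-identity parts of the derivatives --- it is the $\id$ in $V=\id+VP$ (resp.\ $V^\dag=\id-V^\dag P$) that makes the derivative land on $P_{a,b}$ or $P_{c,d}$ and produce the $FP_{a,b+1}$ and $P_{c+1,d}$ summands --- but since you count those two contributions separately, your four-summand bookkeeping and the resulting sign pattern ($P_{\hat 1}$ versus $\pm P_{[1,0]}$) come out exactly right.
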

The first result in Lemma~\ref{lemma:Poppeprodmod} is just a re-written version of 
Lemma~\ref{lemma:Poppeprodmonomials}. The other three statements rely on slight modifications
of the proof of Lemma~\ref{lemma:Poppeprodmonomials}, utilising \eqref{eq:partialfractionsdag}.

Recall from the introduction, that for any monomial form $\tilde VP_{a_1,b_1}\tilde V\cdots\tilde VP_{a_k,b_k}\tilde V$, where the $\tilde V$'s
could be either $V$ or $V^\dag$, the monomial form $(\tilde VP_{a_1,b_1}\tilde V\cdots\tilde VP_{a_k,b_k}\tilde V)^\dag$, 
denotes the monomial form $(-1)^k\tilde V^\dag P_{a_1,b_1}\tilde V^\dag \cdots\tilde V^\dag P_{a_k,b_k}\tilde V^\dag$,
where if $\tilde V=V$ then $\tilde V^\dag=V^\dag$ and if $\tilde V=V^\dag$ then $\tilde V^\dag=V$.
Recall from the introduction, for any Hilbert--Schmidt operator $W$ of the monomial form $\tilde VP_{a_1,b_1}\tilde V\cdots\tilde VP_{a_k,b_k}\tilde V$ just outlined,
our notation~\eqref{eq:skew-symmdef} for the skew and symmetric forms $[W]$ and $\{W\}$.
The P\"oppe products in Lemma~\ref{lemma:Poppeprodmod} can also be represented as follows.
\begin{lemma}\label{lemma:Poppeprodforms}
  For arbitrary Hilbert--Schmidt operators $F$ and $\hat{F}$ and a semi-additive Hilbert--Schmidt operator $P$
  with parameters $x,y$ and a smooth kernel, for any $a,b,c,d\in\mathbb N$ we have, 
  \begin{align*}
  [FP_{a,b}V]\,[V P_{c,d}\hat{F}]=&\;\bigl\{FP_{a,b+1}[V P_{c,d}\hat{F}]\bigr\}\\
     &\;+\bigl\{FP_{a,b} V[P_{c+1,d}\hat{F}]\bigr\}\\
     &\;+\{FP_{a,b} V P_{\hat{1}} V P_{c,d}\hat{F}\}\\
     &\;-\{FP_{a,b} V P_{[1,0]} V^\dag P_{c,d}\hat{F}^\dag\},\\
  [FP_{a,b}V]\,\{V P_{c,d}\hat{F}\}=&\;\bigl[FP_{a,b+1}\{V P_{c,d}\hat{F}\}\bigr]\\
    &\;+\bigl[FP_{a,b} V\{P_{c+1,d}\hat{F}\}\bigr]\\
    &\;+[FP_{a,b} V P_{\hat{1}} V P_{c,d}\hat{F}]\\
    &\;+[FP_{a,b} V P_{[1,0]} V^\dag P_{c,d}\hat{F}^\dag],\\
  \{FP_{a,b}V\}\,[V P_{c,d}\hat{F}]=&\;\bigl[FP_{a,b+1}[V P_{c,d}\hat{F}]\bigr]\\
    &\;+\bigl[FP_{a,b} V[P_{c+1,d}\hat{F}]\bigr]\\
    &\;+[FP_{a,b} V P_{\hat{1}} V P_{c,d}\hat{F}]\\
    &\;-[FP_{a,b} V P_{[1,0]} V^\dag P_{c,d}\hat{F}^\dag],\\
  \{FP_{a,b}V\}\,\{V P_{c,d}\hat{F}\}=&\;\bigl\{FP_{a,b+1}\{V P_{c,d}\hat{F}\}\bigr\}\\
    &\;+\bigl\{FP_{a,b} V\{P_{c+1,d}\hat{F}\}\bigr\}\\
    &\;+\{FP_{a,b} V P_{\hat{1}} V P_{c,d}\hat{F}\}\\
    &\;+\{FP_{a,b} V P_{[1,0]} V^\dag P_{c,d}\hat{F}^\dag\}.
  \end{align*}
\end{lemma}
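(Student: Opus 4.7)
The proof is a bookkeeping calculation that proceeds in three steps: expansion by definition, substitution via Lemma~\ref{lemma:Poppeprodmod}, and regrouping. In the first step, I would expand each product on the left-hand side by applying $[W]=\lb W\rb-\lb W^\dag\rb$ and $\{W\}=\lb W\rb+\lb W^\dag\rb$ to both factors. For example, writing $A=FP_{a,b}V$ and $B=VP_{c,d}\hat F$, the first identity becomes
\begin{equation*}
[A][B]=\lb A\rb\lb B\rb-\lb A\rb\lb B^\dag\rb-\lb A^\dag\rb\lb B\rb+\lb A^\dag\rb\lb B^\dag\rb,
\end{equation*}
and using the $(-1)^k$ convention one has $A^\dag=-F^\dag P_{a,b}V^\dag$ and $B^\dag=-V^\dag P_{c,d}\hat F^\dag$. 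Thus each of the four scalar products that appear matches precisely one of the four products handled by Lemma~\ref{lemma:Poppeprodmod}, with the dagger of $F$ or $\hat F$ tracked as a bookkeeping device.

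In the second step I would substitute each of these four scalar products using the corresponding formula of Lemma~\ref{lemma:Poppeprodmod}. This produces twelve terms on the right, which organise naturally into three columns: the $P_{a,b+1}$ terms, the $P_{c+1,d}$ terms, and the cubic $P_{\hat 1}$ or $P_{[1,0]}$ terms. In the third step, the four terms within each column are collapsed back into one of the nested $[\cdot]$ or $\{\cdot\}$ forms that appear on the right-hand side of Lemma~\ref{lemma:Poppeprodforms}. The mechanism that makes this collapse clean is a $\mathbb Z/2$ duality inside Lemma~\ref{lemma:Poppeprodmod}: formulas one and four are interchanged by the dagger operation on the outer $V$'s, and so are formulas two and three. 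Combined with the $\pm$ signs introduced in Step 1, this fixes both the outer bracket ($[\cdot]$ versus $\{\cdot\}$) and the placement of the inner bracket in each term of the answer.

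The main obstacle will be sign management, because three independent sign sources combine: the $(-1)^k$ in the definition of $W^\dag$, the minus sign on the $\lb\cdots P_{[1,0]}\cdots\rb$ terms in the second and fourth formulas of Lemma~\ref{lemma:Poppeprodmod}, and the minus signs introduced by $[\cdot]$ expansions (but not $\{\cdot\}$ expansions). These together determine, for instance, why the cubic tail of the first identity is $-\{FP_{a,b}VP_{[1,0]}V^\dag P_{c,d}\hat F^\dag\}$ whereas the corresponding tail of the second identity is $+[FP_{a,b}VP_{[1,0]}V^\dag P_{c,d}\hat F^\dag]$. The cleanest way to handle this is to establish one identity (say the first) in full detail, then obtain the remaining three by applying the dagger symmetry to both sides, together with the bilinearity of the product; alternatively, a $2\times2$ table of product expansions indexed by the dagger-type of each outer factor can be compiled once and specialised four times.
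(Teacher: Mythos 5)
Your proposal is correct and follows exactly the route the paper takes: the paper's entire proof is the remark that the identities ``follow directly from Lemma~\ref{lemma:Poppeprodmod}'', and your plan --- expand each $[\cdot]$ and $\{\cdot\}$ factor via \eqref{eq:skew-symmdef} into four $\lb\cdot\rb\lb\cdot\rb$ products, substitute the matching case of Lemma~\ref{lemma:Poppeprodmod} for each, and regroup the twelve resulting terms column by column using the dagger duality between cases one/four and two/three --- is precisely that verification made explicit. Your sign bookkeeping (the $(-1)^k$ convention giving $A^\dag=-F^\dag P_{a,b}V^\dag$, and the three interacting sign sources) is also consistent with the paper's conventions.
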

The results stated in Lemma~\ref{lemma:Poppeprodforms} follow directly from Lemma~\ref{lemma:Poppeprodmod}.
\begin{remark}\label{rmk:skewsymmproducts}
We remark that the product of two skew forms or two symmetric
forms generates a symmetric form, while the product of a skew and a symmetric form, or vice-versa, generates a skew form.
We observe this just above.
\end{remark}
\begin{remark}\label{rmk:Vcorrection}
  In the first and third results in Lemma~\ref{lemma:Poppeprodforms}, we can replace the second factor $[V P_{(c,d)}\hat{F}]$ by $[V]$
  with the corresponding natural consequences for the terms on the right, except that the final term in both cases changes sign.
  The change of sign occurs because the final term already accounts for the fact that the `$\dag$' operation has already been applied to $P_{c,d}$.
\end{remark}

Thus, the pre-P\"oppe algebra we consider herein is that generated by monomials of the form $\lb W\rb$ where
$W=\tilde VP_{a_1,b_1}\tilde V\cdots\tilde VP_{a_k,b_k}\tilde V$, and the $\tilde V$'s could be either $V$ or $V^\dag$.
The pre-P\"oppe product is that in Lemma~\ref{lemma:Poppeprodmod}. We do not require this algebra to be unital.
There is a subalgebra spanned by monomials of the form $\lb W\rb$ where $W=VP_{a_1,b_1}V\cdots VP_{a_k,b_k} V$ and the
pre-P\"oppe product is the first case in Lemma~\ref{lemma:Poppeprodmod}, or equivalently, that in Lemma~\ref{lemma:Poppeprodmonomials}.
The subalgebra is sufficient for us to establish our first main result Theorem~\ref{thm:KP} for the non-commutative KP equation, which we consider next.

\begin{remark}\label{rmk:algext}
  The rationale for nominating this algebra the `pre-P\"oppe algebra' is as follows.
  The P\"oppe algebra was developed in Malham~\cite{MKdV} and Blower and Malham~\cite{BM} respectively for the non-commutative KdV and NLS/mKdV hierarchies.
  It was based on the P\"oppe product devised by P\"oppe~\cite{PSG,PKdV} for the sine-Gordon and KdV equations. 
  The underlying product therein utilises that the scattering data generates a Hankel operator.  
  For the KP equation, P\"oppe~\cite{PKP} generalised this product to the case when the scattering data is of semi-additive form.
  The Hankel operator case can be considered as a subalgebra of the semi-additive case, indeed, see Remark~\ref{rmk:KPreductions} at the end of Section~\ref{sec:KP}.
  Since the Hankel case has already been nominated the P\"oppe algebra, we have nominated the semi-additive case as the pre-P\"oppe algebra.  
  Additionally, there are extensions and further aspects of the algebra that bolster the `pre-P\"oppe' perspective we have taken.
  We do not require them herein, but they may be useful for further investigations. See item (5) in Section~\ref{sec:discussion}.
\end{remark}

\section{Non-commutative KP equation}\label{sec:KP}
We reconstruct the solution $g=g(x,y;t)$ to the non-commutative KP equation~\eqref{eq:KPsingform} from the kernel $\lb G\rb$ of solution $G$ to
the pair of linear equations~\eqref{eq:linearform} and \eqref{eq:GLM}, as outlined in the introduction. We also require that
the solution $p=p(t)$ to the linear equation~\eqref{eq:linearform} has the semi-additive form~\eqref{eq:semiadditiveform}.
Thus the first constraint in~\eqref{eq:constraints} is naturally satisfied. We additionally impose the second constraint in~\eqref{eq:constraints}
on $p=p(t)$, namely $p_y=p_{zz}-p_{\zeta\zeta}$. Further recall from the introduction that our semi-additive form assumption for $p$
and this latter constraint on $p_y$ imply that $p$ satisfies~\eqref{eq:secondlinearform}. Hence to summarise, we assume
$P$ is a semi-additive operator with kernel $p=p(z+x,\zeta+x;y,t)$ which satisfies,
\begin{equation}\label{eq:twolinearforms}
   P_t=4\,\bigl(P_{zzz}+P_{\zeta\zeta\zeta}\bigr)\quad\text{and}\quad P_y=P_{zz}-P_{\zeta\zeta}.
\end{equation}
We assume that such a Hilbert-Schmidt valued operator exists for a short time at least, i.e.\/ for some $T>0$ and suitable initial data,
there exists a smooth square integrable matrix-valued function $p=p(z+x,\zeta+x;y,t)$ satisfying~\eqref{eq:twolinearforms} for $t\in[0,T]$.
The second ingredient in our prescription is to find a Hilbert-Schmidt valued operator solution $G$ to the linear Fredholm equation~\eqref{eq:GLM}.
The following result is proved in Doikou \textit{et al.} \cite[Lemma~1]{DMS}. 
\begin{lemma}[Existence and uniqueness]\label{lemma:GLMEU}
  Assume that $Q_0\in\mathfrak J_2$ and for some $T>0$ we know, $Q\in C^\infty\bigl([0,T];\Jf_2\bigr)$ with $Q(0)=Q_0$ and
  $P\in C^\infty\bigl([0,T];\Jf_N\bigr)$, where $N$ is $1$ or $2$. Further assume, $\mathrm{det}_2(\id-Q_0)\neq0$.
  Then there exists a $T^\prime>0$ with $T^\prime\leqslant T$ such that for $t\in[0,T^\prime]$ we have $\mathrm{det}_2(\id-Q(t))\neq0$ and 
  there exists a unique solution $G\in C^\infty\bigl([0,T^\prime];\Jf_N\bigr)$ to the linear equation $P=G(\id-Q)$.
\end{lemma}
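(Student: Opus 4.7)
The plan is to exploit the fact that the Fredholm determinant $\det_2$ is continuous on $\Jf_2$ and the Hilbert--Schmidt valued map $t\mapsto Q(t)$ is smooth, which together give a short time interval on which $\id-Q(t)$ is invertible and this inverse depends smoothly on $t$. Given that, the unique solution will simply be $G=P(\id-Q)^{-1}$, and the remaining work is to check that $G$ inherits the correct trace ideal and smoothness.

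First I would observe that the map $t\mapsto\det_2(\id-Q(t))$ is continuous on $[0,T]$, since $\det_2\colon\Jf_2\to\CC$ is continuous (in fact locally Lipschitz) and $Q\in C^\infty([0,T];\Jf_2)$. As $\det_2(\id-Q_0)\neq0$ by hypothesis, there exists $T^\prime\in(0,T]$ such that $\det_2(\id-Q(t))\neq0$ for all $t\in[0,T^\prime]$. Standard Fredholm theory for Hilbert--Schmidt perturbations then guarantees that $\id-Q(t)$ is invertible as a bounded operator on $\Vb$ for $t\in[0,T^\prime]$, and hence the equation $P=G(\id-Q)$ admits the unique solution
\begin{equation*}
G(t)=P(t)\bigl(\id-Q(t)\bigr)^{-1}.
\end{equation*}

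Next I would verify the ideal membership and smoothness. Writing $(\id-Q)^{-1}=\id+Q(\id-Q)^{-1}$ shows that $(\id-Q)^{-1}$ is the identity plus a Hilbert--Schmidt operator, so composing with $P\in\Jf_N$ (for $N=1$ or $2$) from the left keeps $G$ in $\Jf_N$ by the standard ideal property of $\Jf_N$ under multiplication by bounded operators. For smoothness in $t$, I would differentiate the identity $(\id-Q)(\id-Q)^{-1}=\id$ to obtain
\begin{equation*}
\pa_t(\id-Q)^{-1}=(\id-Q)^{-1}\,\pa_tQ\,(\id-Q)^{-1},
\end{equation*}
iterate to get all higher derivatives as finite sums of products involving $(\id-Q)^{-1}$ and derivatives of $Q$, and bound each factor in the appropriate operator and $\Jf_2$-norms on the closed interval $[0,T^\prime]$. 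Since $P$ is smooth into $\Jf_N$, the Leibniz rule then yields $G\in C^\infty([0,T^\prime];\Jf_N)$.

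The main subtlety, rather than obstacle, is keeping track of the trace ideal indices: the product $P\cdot(\id-Q)^{-1}$ must be shown to lie in $\Jf_N$ for either $N=1$ or $N=2$, but since $(\id-Q)^{-1}-\id\in\Jf_2$ and $\Jf_N$ is a two-sided ideal closed under multiplication by bounded operators for both $N=1$ and $N=2$, the result follows uniformly in $N$. The smoothness of the inverse is the only quantitative step, and it is controlled purely by the smoothness of $Q$ together with the uniform invertibility of $\id-Q(t)$ on $[0,T^\prime]$ provided by the continuity of $\det_2$.
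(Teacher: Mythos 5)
Your proof is correct and follows essentially the same route as the paper's source for this result: the paper does not prove Lemma~\ref{lemma:GLMEU} itself but cites Doikou \emph{et al.}\ \cite[Lemma~1]{DMS}, whose argument is exactly the one you give --- continuity of $\mathrm{det}_2$ on $\Jf_2$ to get invertibility of $\id-Q(t)$ on a short interval, the explicit solution $G=P(\id-Q)^{-1}$, ideal membership via $(\id-Q)^{-1}=\id+Q(\id-Q)^{-1}$, and smoothness from the resolvent identity.
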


We now establish our first main result, Theorem~~\ref{thm:KP}, for the non-commutative KP equation~\eqref{eq:KPsingform}.
We express the linear Fredholm equation~\eqref{eq:GLM} with $Q\coloneqq P$ in the form,
\begin{equation*}
    G=PV,
\end{equation*}
where $V\coloneqq(\id-P)^{-1}$. We know that such a solution operator $G$ exists for a short time at least, from Lemma~\ref{lemma:GLMEU}.
Using the partial fractions formulae~\eqref{eq:partialfractions}, we observe $\pa_t G=\pa_t V=V(\pa_t P)V$ and $\pa_x G=\pa_x V=V(\pa_x P)V$,
and generally for $n\geqslant1$, $\pa_x^n G=\pa_x^n V$ which, after applying the kernel bracket operator, can be expanded in the form~\eqref{eq:expansion}.
Our goal is to show that $g=\lb G\rb$ satisfies the non-commutative KP equation~\eqref{eq:KPsingform}. We observe that in \eqref{eq:KPsingform}, $g$ only
appears in derivative form, and thus it sufficies, using the partial fractions formulae~\eqref{eq:partialfractions}, to show that $g=\lb V\rb$ satisfies \eqref{eq:KPsingform}.
Then, using~\eqref{eq:twolinearforms}, we observe that, 
\begin{equation*}
\pa_t\lb V\rb=4\,\bigl(\lb V(P_{3,0}+P_{0,3})V\rb.
\end{equation*}
From expansion~\eqref{eq:expansion} with $n=3$, we observe that,
\begin{align*}
  \pa_x^3\lb V\rb=&\;\lb VP_{\hat 3}V\rb+3\,\bigl(\lb VP_{\hat 2}VP_{\hat 1}V\rb+\lb VP_{\hat 1}VP_{\hat 2}V\rb\bigr)\\
  &\;+6\,\lb VP_{\hat 1}VP_{\hat 1}VP_{\hat 1}V\rb\\
  =&\;\lb V(P_{3,0}+3P_{2,1}+3P_{1,2}+P_{0,3})V\rb\\
  &\;+3\,\lb V(P_{2,0}+2P_{1,1}+P_{0,2})VP_{\hat 1}V\rb\\
  &\;+3\,\lb VP_{\hat 1}V(P_{2,0}+2P_{1,1}+P_{0,2})V\rb\\
  &\;+6\,\lb VP_{\hat 1}VP_{\hat 1}VP_{\hat 1}V\rb,
\end{align*}
and so,
\begin{align*}
  \tfrac13\,\bigl(\pa_t\lb V\rb-\pa_x^3\lb V\rb\bigr)  
  =&\;\lb V(P_{3,0}-P_{2,1}-P_{1,2}+P_{0,3})V\rb\\
  &\;-\,\lb V(P_{2,0}+2P_{1,1}+P_{0,2})VP_{\hat 1}V\rb\\
  &\;-\,\lb VP_{\hat 1}V(P_{2,0}+2P_{1,1}+P_{0,2})V\rb\\
  &\;-2\,\lb VP_{\hat 1}VP_{\hat 1}VP_{\hat 1}V\rb.
\end{align*}

Let us now turn our attention to the second nonlinear term in~\eqref{eq:KPsingform}.
Using the pre-P\"oppe product in Lemma~\ref{lemma:Poppeprodmonomials} we observe, since $\pa_y P=P_{[2,0]}$ from \eqref{eq:twolinearforms},
that the products $g_xg_y$ and $g_yg_x$ respectively have the form,
\begin{align*}
  \lb VP_{\hat 1}V\rb\,\lb VP_{[2,0]}V\rb=&\;\lb V(P_{1,1}+P_{0,2})VP_{[2,0]}V\rb\\ 
                                        &\;+\lb VP_{\hat 1}V(P_{3,0}-P_{1,2})V\rb\\ 
                                        &\;+\lb VP_{\hat 1}VP_{\hat 1}VP_{[2,0]}V\rb\\ 
  \lb VP_{[2,0]}V\rb\,\lb VP_{\hat 1}V\rb=&\;\lb V(P_{2,1}-P_{0,3})VP_{\hat 1}V\rb\\ 
                                        &\;+\lb VP_{[2,0]}V(P_{2,0}+P_{1,1})V\rb\\ 
                                        &\;+\lb VP_{[2,0]}VP_{\hat 1}VP_{\hat 1}V\rb. 
\end{align*}
We can now establish the following identity. 
\begin{lemma}\label{lemma:keystep}
  We have the following identity for $[g_x,g_y]$:
  \begin{align*}
    2\,\Bigl[\lb VP_{\hat 1}V&\rb,\lb VP_{[2,0]}V\rb\Bigr]+\lb V(P_{4,0}-2P_{2,2}+P_{0,4})V\rb\\
           =&\;\pa_x\Bigl(\lb V(P_{3,0}-P_{2,1}-P_{1,2}+P_{0,3})V\rb\\
            &\;+\lb VP_{\hat 1}VP_{[2,0]}V\rb-\lb VP_{[2,0]}VP_{\hat 1}V\rb\Bigr).
   \end{align*}
\end{lemma}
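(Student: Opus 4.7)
The plan is to verify the identity by expanding both sides in the pre-P\"oppe algebra and matching term by term. On the left, the two products $g_xg_y=\lb VP_{\hat 1}V\rb\,\lb VP_{[2,0]}V\rb$ and $g_yg_x=\lb VP_{[2,0]}V\rb\,\lb VP_{\hat 1}V\rb$ have already been expanded via Lemma~\ref{lemma:Poppeprodmonomials} in the three lines immediately preceding the statement. Subtracting these, multiplying by two, and adding $\lb V(P_{4,0}-2P_{2,2}+P_{0,4})V\rb$ then gives a finite sum of base monomials in $V$'s and $P_{a,b}$'s of rank two, three and four.

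For the right-hand side I would differentiate each of the three kernel brackets using the two elementary rules $\pa_xV=VP_{\hat 1}V$ (from the partial fractions~\eqref{eq:partialfractions}) and the semi-additive Leibniz rule $(P_{a,b})_{\hat 1}=P_{a+1,b}+P_{a,b+1}$, both of which follow from $\pa_x=\pa_\ell+\pa_r$. For the first bracket this gives three groups: two boundary terms with a $P_{\hat 1}$ inserted at the outer $V$'s, plus one interior term that, by direct expansion, is exactly $\lb V(P_{4,0}-2P_{2,2}+P_{0,4})V\rb$; this interior term immediately cancels the second summand on the left. The analogous differentiation of $\lb VP_{\hat 1}VP_{[2,0]}V\rb$ and $\lb VP_{[2,0]}VP_{\hat 1}V\rb$ produces two- and three-$V$ boundary terms from the three copies of $\pa_xV$, interior terms from $P_{\hat 1}\mapsto P_{\hat 2}=P_{2,0}+2P_{1,1}+P_{0,2}$, and interior terms from $P_{[2,0]}\mapsto P_{3,0}+P_{2,1}-P_{1,2}-P_{0,3}$.

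The remaining task is then to check that the resulting monomials reproduce exactly $2[g_x,g_y]$. This splits naturally by the number of $V$ factors. The four-$V$ monomials produced when $\pa_xV=VP_{\hat 1}V$ acts on the triple-$V$ brackets match the last monomials $\lb VP_{\hat 1}VP_{\hat 1}VP_{[2,0]}V\rb$ and $\lb VP_{[2,0]}VP_{\hat 1}VP_{\hat 1}V\rb$ appearing in the expansions of $g_xg_y$ and $g_yg_x$, each with the correct factor of two. For the two- and three-$V$ monomials, the antisymmetric structure $P_{[2,0]}=P_{2,0}-P_{0,2}$ is essential: combined with the interior contribution $\pa_xP_{[2,0]}$ and the boundary parts of $\pa_x\lb V(P_{3,0}-P_{2,1}-P_{1,2}+P_{0,3})V\rb$, these produce precisely the coefficients $P_{1,1}+P_{0,2}$, $P_{3,0}-P_{1,2}$, $P_{2,1}-P_{0,3}$ and $P_{2,0}+P_{1,1}$ that appear in the pre-P\"oppe expansions of $g_xg_y$ and $g_yg_x$.

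The main obstacle is purely bookkeeping: roughly twenty base monomials must be sorted, with careful attention to the signs carried by $P_{[2,0]}$ and to the placement of the inserted $P_{\hat 1}$ factors. The identity itself is essentially combinatorial, encoding the fact that the $(x,y)$-pre-P\"oppe product is compatible with $\pa_x$ modulo the symmetric correction $\lb V(P_{4,0}-2P_{2,2}+P_{0,4})V\rb$ arising from the pure interior second derivative of $A=P_{3,0}-P_{2,1}-P_{1,2}+P_{0,3}$, and it is this correction that will later be absorbed into the $\tfrac13(g_t-g_{xxx})$ and $\mathrm{D}g_y$ terms of the KP equation.
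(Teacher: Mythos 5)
Your proposal is correct and follows essentially the same route as the paper: differentiate the right-hand side using $\pa_x V=VP_{\hat 1}V$ and the semi-additive Leibniz rule, observe that the pure interior second derivative of $P_{3,0}-P_{2,1}-P_{1,2}+P_{0,3}$ yields exactly $\lb V(P_{4,0}-2P_{2,2}+P_{0,4})V\rb$, and match the remaining monomials against the pre-P\"oppe expansions of $g_xg_y$ and $g_yg_x$. The only detail you leave implicit (and which the paper makes explicit) is that matching the coefficients $P_{1,1}+P_{0,2}$ and $P_{2,0}+P_{1,1}$ against the boundary terms $\lb VP_{\hat 2}VP_{[2,0]}V\rb$ and $-\lb VP_{[2,0]}VP_{\hat 2}V\rb$ requires adding and subtracting $\lb VP_{[2,0]}VP_{[2,0]}V\rb$, which falls squarely under the sign bookkeeping you already flag.
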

\begin{proof}
  Directly computing the `$\pa_x$' derivative of the expression on the righthand side, we observe,
  \begin{align*}
    \pa_x\bigl(\lb V(P_{3,0}-&P_{2,1}-P_{1,2}+P_{0,3})V\rb\\
     +\lb VP_{\hat 1}&VP_{[2,0]}V\rb-\lb VP_{[2,0]}VP_{\hat 1}V\rb\bigr)\\
    =&\;\lb V(P_{4,0}-2P_{2,2}+P_{0,4})V\rb\\
    &\;+\lb VP_{\hat 1}V(P_{3,0}-P_{2,1}-P_{1,2}+P_{0,3})V\rb\\
    &\;+\lb V(P_{3,0}-P_{2,1}-P_{1,2}+P_{0,3})VP_{\hat 1}V\rb\\
    &\;+2\,\lb VP_{\hat 1}VP_{\hat 1}VP_{[2,0]}V\rb\\ 
    &\;+\lb V(P_{2,0}+2P_{1,1}+P_{0,2})VP_{[2,0]}V\rb\\ 
    &\;+\lb VP_{\hat 1}V(P_{3,0}+P_{2,1}-P_{1,2}-P_{0,3})V\rb\\
    &\;-\lb V(P_{3,0}+P_{2,1}-P_{1,2}-P_{0,3})VP_{\hat 1}V\rb\\
    &\;-2\,\lb VP_{[2,0]}VP_{\hat 1}VP_{\hat 1}V\rb\\
    &\;-\lb VP_{[2,0]}V(P_{2,0}+2P_{1,1}+P_{0,2})V\rb\\
   =&\;\lb V(P_{4,0}-2P_{2,2}+P_{0,4})V\rb\\
    &\;+2\,\lb VP_{\hat 1}V(P_{3,0}-P_{1,2})V\rb\\
    &\;+2\,\lb VP_{\hat 1}VP_{\hat 1}VP_{[2,0]}V\rb\\ 
    &\;+2\,\lb V(P_{1,1}+P_{0,2})VP_{[2,0]}V\rb\\   
    &\;-2\,\lb V(P_{2,1}-P_{0,3})VP_{\hat 1}V\rb\\
    &\;-2\,\lb VP_{[2,0]}VP_{\hat 1}VP_{\hat 1}V\rb\\
    &\;-2\,\lb VP_{[2,0]}V(P_{2,0}+P_{1,1})V\rb.
  \end{align*}
  Note that in the last step, we added and subtracted the term $\lb VP_{[2,0]}VP_{[2,0]}V\rb$. 
  Comparing this with our expressions for $g_xg_y$ and $g_yg_x$ just preceding the lemma, gives the required result. \qed
\end{proof}
\begin{proof}[of Theorem~~\ref{thm:KP}] This follows directly from Lemma~\ref{lemma:keystep}.
Using the pre-P\"oppe product from Lemma~\ref{lemma:Poppeprodmonomials}, we observe that $g_x^2$ is given by,
\begin{align*}
  \lb VP_{\hat 1}V\rb\,\lb VP_{\hat 1}V\rb=&\;\lb V(P_{1,1}+P_{0,2})VP_{\hat 1}V\rb\\
                                           &\;+\lb VP_{\hat 1}V(P_{2,0}+P_{1,1})V\rb\\
                                           &\;+\lb VP_{\hat 1}VP_{\hat 1}VP_{\hat 1}V\rb.
\end{align*}
Gathering these results, we observe, $\frac13(g_t-g_{xxx})+2\,g_x^2$, equals,
\begin{align*}
  \tfrac13\,\bigl(\pa_t\lb V\rb-\pa_x^3\lb V&\rb\bigr)+2\,\lb VP_{\hat 1}V\rb\,\lb VP_{\hat 1}V\rb\\
  =&\;\lb V(P_{3,0}-P_{2,1}-P_{1,2}+P_{0,3})V\rb\\
  &\;-\,\lb VP_{[2,0]}VP_{\hat 1}V\rb+\,\lb VP_{\hat 1}VP_{[2,0]}V\rb,
\end{align*}
which equals the expression on the righthand side of Lemma~\ref{lemma:keystep},
once we have applied $\pa_x^{-1}$ to both sides (this requires the kernels herein vanish in the far field).
From the result of the lemma, that in turn equals the expression, with $\pa_x^{-1}$ applied,
on the left, which equals $2\,[g_x,g_y]+\mathrm{D}g_y$.
We have thus established the result of Theorem~\ref{thm:KP}.\qed
\end{proof}

\begin{remark}\label{rmk:KPreductions}
  We make the following observations.
  Consider the Hilbert--Schmidt-valued scattering operator $P$, with semi-additive form $p=p(z+x,\zeta+x;y,t)$.
  Here $z$ and $\zeta$ are the primary kernel variables, and $x,y\in\mathbb R$ are considered to be additional parameters.
  Time $t\in\mathbb R$ is also an additional parameter dependence, though herein we typically restrict it to $t\geqslant0$.
  We can canonically generate any semi-additive kernel, $p=p(z+x,\zeta+x;y,t)$, from any kernel $p=p(z,\zeta;\cdot,\cdot)$.
  Such semi-additive kernels thus represent a family of operators.
  If we assume such a semi-additive kernel has no $y$-dependence, then the constraints~\eqref{eq:constraints} collapse to
  $p_x=p_z+p_\zeta$ and $p_{zz}=p_{\zeta\zeta}$, so necessarily and sufficiently, $p$ has the form $p=p(z+\zeta+2x;t)$.
  Modulo the scalar factor `$2$', this is the Hankel form for the scattering operator we assumed for the cases
  of the non-commutative KdV hierarchy in Malham~\cite{MKdV} and for the non-commutative NLS/mKdV hierarchy in Blower and Malham \cite{BM}.
  Another characterisation here is to impose the symmetry constraint, $p(x+z,\zeta;\cdot,\cdot)=p(z,\zeta+x;\cdot,\cdot)$.
  This is equivalent to specialising the kernel $p$ to be the kernel of a Hankel operator of the form $p=p(z+\zeta+x;\cdot,\cdot)$. 
  Lastly, we also remark that if we assume no $t$-dependence in $p$, then the non-commutative KP equation~\eqref{eq:KPsingform} reduces to a non-commutative version of the Boussinesq equation.
\end{remark}

\section{Non-commutative lifted mKP equation}\label{sec:mKP}
We now reconstruct the solution $(h,g,f)$, with $h=h(x,y;t)$, $g=g(x,y;t)$ and $f=f(x,y;t)$, to the lifted mKP equations~\eqref{eq:triplemKPh}-\eqref{eq:triplemKPf},
from the kernel $\lb G\rb$ solving the pair of linear equations~\eqref{eq:linearform} and \eqref{eq:GLM}, given in the introduction.
We make exactly the same assumptions on the operator $P$ being semi-additive of the form shown and satisfying~\eqref{eq:twolinearforms},
as we did for the case of the non-commutative KP equation at the beginning of Section~\ref{sec:KP}.
The result of Lemma~\ref{lemma:GLMEU} applies here as well.
The main difference here is that we now set $Q\coloneqq P^2$. The GLM equation~\eqref{eq:GLM} now takes the form $P=G(\id-P^2)$,
which we can rewrite in the form~\eqref{eq:mKPsolnform}, i.e.\/ $G=VPV^\dag$, where $V\coloneqq(\id-P)^{-1}$ and $V^\dag\coloneqq(\id+P)^{-1}$.
Using the partial fractions formulae~\eqref{eq:partialfractions} and \eqref{eq:partialfractionsdag}, it is straightforward to show that,
\begin{equation}\label{eq:mKPskewform}
[V]\equiv[G]\equiv2\,\lb G\rb.
\end{equation}
This solution form for the lifted mKP equations is consistent with that for the non-commutative KdV equation,
to which the lifted mKP equations collapse when $\pa_y=0$ and $f=0$, considered by Blower and Malham~\cite{BM}.
We thus assume the solution form for $g=g(x,y;t)$ is $g=[V]$.
The solution form for $f=f(x,y;t)$ is motivated by the Miura transformation linking, the solution $g^{\mathrm{KP}}=g^{\mathrm{KP}}(x,y;t)$
to the non-commutative KP equation~\eqref{eq:KPsingform}, to the solution components $g$ and $f$ in the lifted mKP equations~\eqref{eq:triplemKPh}-\eqref{eq:triplemKPf}.
This Miura transformation has the form,
\begin{equation}\label{eq:Miura}
g_x+g^2+f=2g_x^{\mathrm{KP}}.
\end{equation}
\begin{remark}\label{rmk:Miuraproof}
A similar Miura transformation linking, the non-commutative KP equation~\eqref{eq:KPsingform}, to the non-commutative mKP equation form \eqref{eq:eqamKPassign}--\eqref{eq:amKP},
can be found in Gilson \textit{et al.}\/ \cite[eq.~6.11]{GNS}.
The Miura relation~\eqref{eq:Miura} fixes the third field from any two of the three fields shown therein.
For example, if $g$ and $f$ are the solution components satisfying the lifted mKP equations \eqref{eq:triplemKPg} and \eqref{eq:triplemKPf},
then evolution equation for $g^{\mathrm{KP}}$ is determined.
The Miura relation~\eqref{eq:Miura} is straightforwardly established as follows.
First, we note that $2g^{\mathrm{KP}}$ satisfies the same equation as the KP equation~\eqref{eq:KPsingform}, except with the factor `$2$' missing from the two nonlinear terms.
Second we compute the time derivative of~\eqref{eq:Miura}, giving $g_{xt}+\{g,g_t\}+f_t=2g_{xt}^{\mathrm{KP}}$.
Third, we substitute for $2g_{xt}^{\mathrm{KP}}$ from the rescaled version of the KP equation~\eqref{eq:KPsingform} for $2g^{\mathrm{KP}}$ just mentioned,
and replace all instances of $2g^{\mathrm{KP}}$ and its derivatives by $g_x+g^2+f$.
Fourth, we replace $g_t$ and $f_t$ in $g_{xt}+\{g,g_t\}+f_t$ with their forms in the lifted mKP equations \eqref{eq:triplemKPg} and \eqref{eq:triplemKPf}.
It is straightforward to check that the outcomes of the third and fourth steps just mentioned, match.
\end{remark}
Using the results of Lemma~\ref{lemma:Poppeprodmod}, or more directly via Lemma~\ref{lemma:Poppeprodforms} and Remark~\ref{rmk:Vcorrection}, it is straightforward to show that,
\begin{equation}\label{eq:gsquared}
[V]^2=\{VP_{\hat 1}V\}+\{VP_{[1,0]}V^\dag\}.
\end{equation}
Since $g_x=[VP_{\hat 1}V]$ and $g_x^{\mathrm{KP}}=\lb VP_{\hat 1}V\rb$, then~\eqref{eq:Miura} implies,
\begin{equation*}
f=2\lb VP_{\hat 1}V\rb-[VP_{\hat 1}V]-\{VP_{\hat 1}V\}-\{VP_{[1,0]}V^\dag\},
\end{equation*}
from which we deduce $f=-\{VP_{[1,0]}V^\dag\}$. We deduce that correspondingly, $h=[VP_{[1,0]}V^\dag]$, as follows.
If we set $P_{a,b}\coloneqq P_{[1,0]}$ in identity~\eqref{eq:mainanti1} just below, we observe,
\begin{equation}\label{eq:crucial}
 \mathrm{ad}_{[V]}\{VP_{[1,0]}V^\dag\}=\pa_x[VP_{[1,0]}V^\dag]-[V P_{[2,0]}V].
\end{equation}
Since $g_y=[V P_{[2,0]}V]$ and $f=-\{VP_{[1,0]}V^\dag\}$, we deduce that this statement is equivalent to
the first lifted mKP equation \eqref{eq:triplemKPh} if $h=[VP_{[1,0]}V^\dag]$.
This explains the solution forms chosen in Theorem~\ref{thm:mKP}.
\begin{remark}\label{rmk:explainform}
  This solution ansatz for $(h,g,f)$ explains the form of the lifted mKP equations~\eqref{eq:triplemKPh}-\eqref{eq:triplemKPf}.
  Recall from Remark~\ref{rmk:skewsymmproducts}, the product of two skew forms generates a symmetric form, and the product of a skew and symmetric form, or vice-versa, is a skew form.
  Equation~\eqref{eq:triplemKPg} is linear in $f$ and linear or cubic in $g$. Given our ansatz above, all the terms therein are thus skew forms.
  Equation~\eqref{eq:triplemKPf} is linear in $f$ and quadratic in $g$ or just quadratic in $f$. Hence all the terms therein are symmetric forms.
  The equivalent rationale for equation~\eqref{eq:triplemKPh} is given just above.
\end{remark}
Before proceeding to prove Theorem~\ref{thm:mKP}, we establish
the following lemma which lists a set of useful commutator identities.
We use the standard notation $\mathrm{ad}_ab\coloneqq ab-ba$, and the
notation $\pa^\perp P_{a,b}\coloneqq P_{a+1,b}-P_{a,b+1}$.
\begin{lemma}\label{lemma:commutatoridentities}
  For a semi-additive Hilbert--Schmidt operator $P$ on $\Vb$ with parameters $x,y\in\mathbb R$
  and a smooth kernel, for any $a,b\in\mathbb N\cup\{0\}$ we have, 
  \begin{align}
    \mathrm{ad}_{[V]}\{VP_{a,b}V^\dag\}=&\;\pa_x[VP_{a,b}V^\dag]-[V(\pa_x P_{a,b})V]\nonumber\\
                                      &\;+[VP_{[1,0]}V^\dag P_{a,b}V]\nonumber\\
                                      &\;-[VP_{a,b}V^\dag P_{[1,0]}V]\label{eq:mainanti1}\\
    \mathrm{ad}_{[V]}\{VP_{a,b}V\}=&\;[V(\pa^\perp P_{a,b})V]-[V(\pa^\perp P_{a,b})V^\dag]\nonumber\\
                                  &\;+[VP_{\hat 1}V P_{a,b}V]-[VP_{a,b}V P_{\hat 1}V]\nonumber\\
                                  &\;+[VP_{[1,0]}V^\dag P_{a,b}V^\dag]\nonumber\\
                                  &\;-[VP_{a,b}V P_{[1,0]}V^\dag]\label{eq:mainanti2}\\               
    \mathrm{ad}_{[V]}[VP_{a,b}V^\dag]=&\;\pa_x\{VP_{a,b}V^\dag\}+\{V(\pa^\perp P_{a,b})V\}\nonumber\\
                                    &\;-\{VP_{[1,0]}V^\dag P_{a,b}V\}\nonumber\\
                                   &\;-\{VP_{a,b}V^\dag P_{[1,0]}V\}\label{eq:mainanti3}\\
    \mathrm{ad}_{[V]}[VP_{a,b}V]=&\;\{V(\pa_xP_{a,b})V^\dag\}+\{V(\pa^\perp P_{a,b})V\}\nonumber\\
                               &\;+\{VP_{\hat 1}V P_{a,b}V\}-\{VP_{a,b}V P_{\hat 1}V\}\nonumber\\
                               &\;-\{VP_{[1,0]}V^\dag P_{a,b}V^\dag\}\nonumber\\
                               &\;-\{VP_{a,b}V P_{[1,0]}V^\dag\}. \label{eq:mainanti4}  
  \end{align}
\end{lemma}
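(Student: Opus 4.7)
The plan is to prove the four identities by direct computation, expanding each commutator $\mathrm{ad}_{[V]}X = [V]X - X[V]$ via the pre-P\"oppe product rules in Lemma~\ref{lemma:Poppeprodforms} (equivalently Lemma~\ref{lemma:Poppeprodmod}) and then matching the resulting monomials to the right-hand sides. Since $[V]$ is skew and the four test operators $X$ are alternately symmetric and skew, Remark~\ref{rmk:skewsymmproducts} already guarantees each commutator is a skew form, consistent with the brackets on the right. The three calculational ingredients I rely on are: the partial fraction identities \eqref{eq:partialfractions} and \eqref{eq:partialfractionsdag}, which modulo the identity kernel give $\lb V\rb \equiv \lb VP\rb \equiv \lb PV\rb$ and $\lb V^\dag\rb \equiv -\lb V^\dag P\rb \equiv -\lb PV^\dag\rb$, so that the pre-P\"oppe product rules apply to $[V]X$ and $X[V]$ by exposing a $P$-factor; the differentiation rules $\pa_x V = VP_{\hat 1}V$ and $\pa_x V^\dag = -V^\dag P_{\hat 1}V^\dag$, together with the semi-additive chain rule $\pa_x P_{a,b} = P_{a+1,b} + P_{a,b+1}$; and the bookkeeping decompositions $P_{\hat 1} = P_{1,0} + P_{0,1}$, $P_{[1,0]} = P_{1,0} - P_{0,1}$ and $\pa^\perp P_{a,b} = P_{a+1,b} - P_{a,b+1}$, which let me regroup the intermediate sums into the forms advertised.

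The template is the same in each of the four cases. Take \eqref{eq:mainanti1} first: apply the skew$\cdot$symmetric and symmetric$\cdot$skew cases of Lemma~\ref{lemma:Poppeprodforms} to $[V]\{VP_{a,b}V^\dag\}$ and $\{VP_{a,b}V^\dag\}[V]$; on the right, expand $\pa_x[VP_{a,b}V^\dag]$ by the product rule to get $[VP_{\hat 1}VP_{a,b}V^\dag]$, $[V(\pa_x P_{a,b})V^\dag]$ and $-[VP_{a,b}V^\dag P_{\hat 1}V^\dag]$; the subtraction of $[V(\pa_x P_{a,b})V]$ converts the middle piece into a mixed $V$-$V^\dag$ contribution, and the remaining $P_{\hat 1}$ pieces combine with the pre-P\"oppe output to leave exactly the two $P_{[1,0]}$ monomials. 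The other three identities follow the same scheme: \eqref{eq:mainanti2} uses $X = \{VP_{a,b}V\}$, where the two interior resolvents are of the same type, so the chain-rule contribution surfaces as an internal $\pa^\perp$ difference rather than an external $\pa_x$ derivative; \eqref{eq:mainanti3} and \eqref{eq:mainanti4} take $X$ to be a skew form and produce symmetric-form right-hand sides, again in agreement with Remark~\ref{rmk:skewsymmproducts}.

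The main obstacle is the bookkeeping. The four cases of Lemma~\ref{lemma:Poppeprodmod} differ in sign and in whether the propagator inserted between adjacent resolvents is $P_{\hat 1}$ or $P_{[1,0]}$, and the skew/symmetric repackaging via Lemma~\ref{lemma:Poppeprodforms} produces many superficially similar monomials whose pairwise cancellation must be verified. The cleanest organisation is to split each of the two products $[V]X$ and $X[V]$ according to whether the neighbouring resolvent is $V$ or $V^\dag$, line the four sub-expansions up in pairs, and check that all non-advertised monomials telescope; what remains is precisely the combination of $\pa_x$ derivative, $\pa^\perp$ derivative, $P_{\hat 1}$-insertion and $P_{[1,0]}$-insertion terms claimed in each identity.
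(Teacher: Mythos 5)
Your proposal follows exactly the route the paper takes: the paper's proof of this lemma is the single assertion that the identities are ``straightforwardly established using the results of Lemma~\ref{lemma:Poppeprodforms}'', i.e.\ expand $\mathrm{ad}_{[V]}X=[V]X-X[V]$ via the pre-P\"oppe product rules and match monomials, which is precisely your plan. Your supporting ingredients (the partial-fraction identities to expose a $P$-factor, $\pa_xV=VP_{\hat 1}V$ and $\pa_xV^\dag=-V^\dag P_{\hat 1}V^\dag$, the decompositions of $P_{\hat 1}$, $P_{[1,0]}$ and $\pa^\perp P_{a,b}$, and the sign adjustment of Remark~\ref{rmk:Vcorrection} when a bare $[V]$ appears as a factor) are all correct, so the proposal is sound and essentially identical in approach.
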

The identities in Lemma~\ref{lemma:commutatoridentities} are straightforwardly established using the results of Lemma~\ref{lemma:Poppeprodforms}.

\begin{proof}[of Theorem~\ref{thm:mKP}]
We already know from~\eqref{eq:crucial} that $h\coloneqq[VP_{[1,0]}V^\dag]$, $g\coloneqq[V]$ and $f\coloneqq-\{VP_{[1,0]}V^\dag\}$ satisfy~\eqref{eq:triplemKPh}.
Here we show that they satisfy~\eqref{eq:triplemKPg}.
That they also satisfy~\eqref{eq:triplemKPf} is left to~\ref{sec:finishproof2ndThm}, as the computation is straightforward, but somewhat laborious. 
First, we consider the term $\frac13(g_t-g_{xxx})$. This proceeds in an entirely similar manner to our computation of $\frac13\,(\pa_t\lb V\rb-\pa_x^3\lb V\rb)$
in the case of the KP equation in Section~\ref{sec:KP}.
Here the skew bracket $[\,\cdot\,]$ just replaces the bracket $\lb\,\cdot\,\rb$, so that we have,
\begin{align*}
  \tfrac13\bigl(\pa_t[V]-\pa_x^3[V]\bigr)
     =&\;[V(P_{3,0}-P_{2,1}-P_{1,2}+P_{0,3})V]\\
      &\;-\,[V(P_{2,0}+2P_{1,1}+P_{0,2})VP_{\hat 1}V]\\
      &\;-\,[VP_{\hat 1}V(P_{2,0}+2P_{1,1}+P_{0,2})V]\\
      &\;-2\,[VP_{\hat 1}VP_{\hat 1}VP_{\hat 1}V].
\end{align*}
Second, we compute the term $\{g_x,f+g^2\}$. Combining the identity~\eqref{eq:gsquared} with our ansatz for $f$ we observe that
$f+g^2=\{VP_{\hat 1}V\}$. Hence to determine $\{g_x,f+g^2\}$, using the results of Lemma~\ref{lemma:Poppeprodforms}, we compute,
\begin{align*}
   [VP_{\hat 1}V]\{VP_{\hat 1}V\}=&\;\bigl[V(P_{1,1}+P_{0,2})\{VP_{\hat 1}V\}\bigr]\\
                                &\;+\bigl[VP_{\hat 1}V\{(P_{2,0}+P_{1,1})V\}\bigr]\\
                                &\;+[VP_{\hat 1}VP_{\hat 1}VP_{\hat 1}V]\\
                                &\;+[VP_{\hat 1}VP_{[1,0]}V^\dag P_{\hat 1}V^\dag],\\
   \{VP_{\hat 1}V\}[VP_{\hat 1}V]=&\;\bigl[V(P_{1,1}+P_{0,2})[VP_{\hat 1}V]\bigr]\\
                                &\;+\bigl[VP_{\hat 1}V[(P_{2,0}+P_{1,1})V]\bigr]\\
                                &\;+[VP_{\hat 1}VP_{\hat 1}VP_{\hat 1}V]\\
                                &\;-[VP_{\hat 1}VP_{[1,0]}V^\dag P_{\hat 1}V^\dag].
\end{align*}
If we combine these two results we find,
\begin{align*}
  \{g_x,f+g^2\}=&\; 2\,[V(P_{1,1}+P_{0,2})VP_{\hat 1}V]\\
                &\;+2\,[VP_{\hat 1}V(P_{2,0}+P_{1,1})V]\\
                &\;+2\,[VP_{\hat 1}VP_{\hat 1}VP_{\hat 1}V].
\end{align*}
Third, to compute the term $\bigl[g,\mathrm{D}(f+g^2)\bigr]$, we observe that $f+g^2=\{VP_{\hat 1}V\}=\pa_x\{V\}$ and thus $\mathrm{D}(f+g^2)=\pa_y\{V\}=\{VP_{[2,0]}V\}$.
Hence $\bigl[g,\mathrm{D}(f+g^2)\bigr]=\mathrm{ad}_{[V]}\{VP_{[2,0]}V\}$ and we can employ identity~\eqref{eq:mainanti2} in Lemma~\ref{lemma:commutatoridentities},
with $P_{a,b}\coloneqq P_{[2,0]}$, and use that linear combinations of $P_{a,b}$ in identity~\eqref{eq:mainanti2} naturally carry through.
Hence we observe that,
\begin{align*}
  \mathrm{ad}_{[V]}&\{VP_{[2,0]}V\}\\
                 =&\;\bigl[V\{(P_{3,0}-P_{1,2})\}\bigr]+[VP_{\hat 1}VP_{[2,0]}V]\\
                  &\;+[VP_{[1,0]}V^\dag P_{[2,0]}V^\dag]-\Bigl(\bigl[V(P_{2,1}-P_{0,3})[V]\bigr]\\
                  &\;+[VP_{[2,0]}VP_{\hat 1}V]+[VP_{[2,0]}VP_{[1,0]}V^\dag]\Bigr)\\
                 =&\;[V(P_{3,0}-P_{2,1}-P_{1,2}+P_{0,3})V]+[VP_{\hat 1}VP_{[2,0]}V]\\
                  &\;-[VP_{[2,0]}VP_{\hat 1}V]-\pa_y[VP_{[1,0]}V^\dag].
\end{align*}
If we combine the three computations above we observe,
\begin{align*}
  \tfrac13(g_t-g_{xxx})+ \{g_x,f+g^2\}-\bigl[g,&\mathrm{D}(f+g^2)\bigr]\\
  &=\pa_y[VP_{[1,0]}V^\dag].
\end{align*}
Since $h=[VP_{[1,0]}V^\dag]$, we observe that we have established that this $h$, $g\coloneqq[V]$ and $f\coloneqq-\{VP_{[1,0]}V^\dag\}$ satisfy equation~\eqref{eq:triplemKPg}.
If we combine this result, with our computation in~\ref{sec:finishproof2ndThm}, our proof of Theorem~\ref{thm:mKP} is complete.\qed
\end{proof}

\begin{remark}[Modified KP solution forms]\label{rmk:ncsolution}
We consider the following cases:
  
(a) \emph{Lifted mKP equations:} For any semi-additive solution kernel $p$ satisfying the linear constraints~\eqref{eq:twolinearforms},
the triple $(h,g,f)$ with $h\coloneqq[VP_{[1,0]}V^\dag]$, $g\coloneqq[V]$ and $f\coloneqq-\{VP_{[1,0]}V^\dag\}$ represents a solution to the lifted mKP equations~\eqref{eq:triplemKPh}--\eqref{eq:triplemKPf}.
These forms for $g$ and $f$ are not a solution to the non-commutative mKP equation \eqref{eq:mKPassign}--\eqref{eq:mKP}, nor, as expected, to the form \eqref{eq:eqamKPassign}--\eqref{eq:amKP}.
This is because these forms for $g$ and $f$ do not satisfy the constraint \eqref{eq:mKPassign}, i.e.\/ $f_x=g_y-[g,f]$.
In our solution form above, $g$ is skew-symmetric and $f$ is symmetric.
Thus in the constraint, $g_y$ and $[g,f]$ are skew-symmetric, while the lefthand side, $f_x$, is not.
Indeed we formulated the non-commutative mKP pair~\eqref{eq:pairmKPg}--\eqref{eq:pairmKPf} and lifted mKP equations~\eqref{eq:triplemKPh}--\eqref{eq:triplemKPf}, in order to alleviate this issue;

(b) \emph{Commutative case:} In this context, we observe that in the lifted mKP equations, the first equation \eqref{eq:triplemKPh} generates
the relation $h=\mathrm{D}g$, and thus equations \eqref{eq:triplemKPg} and \eqref{eq:triplemKPf} respectively collapse to,
\begin{align}
   \mathrm{D}g_y=&\;\tfrac13\bigl(g_t-g_{xxx}\bigr)+2g_x(f+g^2)\label{eq:scalarmKPg}\\
   \mathrm{D}\pa_y(f+g^2)=&\;\tfrac13\bigl(f_t-f_{xxx}\bigr)+2f_x(f+g^2)+2g\mathrm{D}g_y.\label{eq:scalarmKPf}
\end{align}
The forms $g\coloneqq[V]$ and $f\coloneqq-\{VP_{[1,0]}V^\dag\}$ still represent a solution to the commutative equations \eqref{eq:scalarmKPg}--\eqref{eq:scalarmKPf};

(c) \emph{Separate subclass of commutative solutions:} If $f\coloneqq\mathrm{D}g$, then it is straightforward to show that equation~\eqref{eq:scalarmKPf} above
is the result of the operator `$\mathrm{D}$' applied to the equation \eqref{eq:scalarmKPg}, in other words the former is simply consistent with the latter.
Note $f$ is equivalent to $h$ in this instance, and we are thus considering a separate subclass of solutions to those considered hitherto as $f$ and $h$ have the same form.
We observe with $f\coloneqq\mathrm{D}g$, equation \eqref{eq:scalarmKPg} becomes, $\mathrm{D}g_y=\tfrac13\bigl(g_t-g_{xxx}\bigr)+2g_x(\mathrm{D}g+g^2)$,
which is the potential form of the \emph{commutative mKP equation};

(d) \emph{Non-commutative mKdV equation:} If we assume the fields $(h,g,f)$ in the lifted mKP equations~\eqref{eq:triplemKPh}--\eqref{eq:triplemKPf} are $y$-independent,
and $f\equiv 0$, then the lifted mKP equations collapse to the non-commutative potential mKdV equation, with solution $g\coloneqq[V]$; see Blower and Malham~\cite{BM}.
\end{remark}

\section{Discussion}\label{sec:discussion}
Let us first establish the aforementioned quasi-trace solution form for the solution to the non-commutative KP equation~\eqref{eq:KPsingform}.
The following result is the semi-additive extension of the corresponding result for Hankel operators in~\cite[Lemma~4]{DMSWb}.
We suppress any $y$-dependence in the following as it plays a passive role. We define the linear quasi-trace operator as follows.
\begin{definition}[Quasi-trace]\label{def:quasitrace}
Suppose $F$ is a trace-class operator on $\Vb$ with matrix-valued kernel $f$. Then we define the quasi-trace `$\mathrm{qtr}$' of $F$ to be,
\begin{equation*}
\mathrm{qtr}\,F\coloneqq\int_{-\infty}^0f(z,z)\,\mathrm{d}z,
\end{equation*}  
provided the integral of each component of $f$ is finite.
\end{definition}
The usual trace would also involve the matrix trace of $f$ in the integrand. We use the notation `$\lb F\rb_{0,0}$' to
denote the continuous kernel $f$ of any Hilbert--Schmidt valued operator $F$, evaluated at $z=\zeta=0$, i.e.\/ $\lb F\rb_{0,0}\equiv f(0,0;x)$.
\begin{lemma}[Quasi-trace identity]\label{lemma:quasitraceid}
Suppose that $H$ and $\hat{H}$ are semi-additive Hilbert--Schmidt valued operators, and $F$ is any Hilbert--Schmidt valued operator, on $\Vb$ with parameter $x\in\R$.
Suppose the kernels of $H$ and $\hat{H}$ are continuously differentiable, while the kernel of $F$ is continuous, on $(-\infty,0]^{\times2}$.
Then we have the following quasi-trace identity,
\begin{equation*}
  \lb HF\hat{H}\rb_{0,0}\equiv\mathrm{qtr}\,\bigl((\pa_{\ell}+\pa_{r})(HF\hat{H})\bigr).
\end{equation*}
\end{lemma}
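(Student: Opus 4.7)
The plan is to reduce this identity to the Fundamental Theorem of Calculus applied along the diagonal $\zeta=z$ in the kernel of $HF\hat H$. Let $k=k(z,\zeta)$ denote the continuous kernel of $HF\hat H$ (suppressing the passive parameter $x$), so that the target quantity $\lb HF\hat H\rb_{0,0}$ is simply $k(0,0)$. Explicitly, since $H$, $F$ and $\hat H$ act as integral operators on $(-\infty,0]$,
$$k(z,\zeta)=\int_{-\infty}^0\!\!\int_{-\infty}^0 h(z,s)\,f(s,t)\,\hat h(t,\zeta)\,\rd s\,\rd t,$$
where $h$, $f$, $\hat h$ are the kernels of $H$, $F$, $\hat H$.

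First I would compute the kernel of $(\pa_\ell+\pa_r)(HF\hat H)$ using the derivative rules $\pa_\ell(AB)=(\pa_\ell A)B$ and $\pa_r(AB)=A(\pa_r B)$ recorded in the remark following the definition of left and right derivatives. This gives $(\pa_\ell+\pa_r)(HF\hat H)=(\pa_\ell H)F\hat H+HF(\pa_r\hat H)$, whose kernel is precisely $\pa_z k(z,\zeta)+\pa_\zeta k(z,\zeta)$; differentiation under the integral is justified by the assumed continuous differentiability of the kernels of $H$ and $\hat H$. Applying Definition~\ref{def:quasitrace} then yields
$$\mathrm{qtr}\bigl((\pa_\ell+\pa_r)(HF\hat H)\bigr)=\int_{-\infty}^0\bigl(\pa_zk+\pa_\zeta k\bigr)\bigr|_{\zeta=z}\,\rd z.$$

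Second, setting $\phi(z)\coloneqq k(z,z)$, the chain rule identifies the integrand on the right with $\phi'(z)$, and the Fundamental Theorem of Calculus converts the whole expression to $k(0,0)-\lim_{z\to-\infty}k(z,z)$. The entire identity thus reduces to showing that this boundary term at $-\infty$ vanishes, which I expect to be the only genuinely analytic step. My approach here would be to estimate $|k(z,z)|$ via Cauchy--Schwarz on the inner double integral, bounding it by $\|h(z,\cdot)\|_{L^2}\,\|F\|_{\Jf_2}\,\|\hat h(\cdot,z)\|_{L^2}$; the semi-additive form pushes the first slice of $h$ and the second slice of $\hat h$ out to $-\infty$ as $z\to-\infty$, and square-integrability of the kernels on $(-\infty,0]^{\times 2}$ forces these slice norms to tend to zero (modulo a routine uniform-continuity argument using the continuous differentiability hypothesis). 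Granted this vanishing, the identity is immediate from the three displays above.
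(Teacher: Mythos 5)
Your proposal is correct and takes essentially the same route as the paper: the paper writes the quasi-trace as a triple integral, replaces $\pa_\ell$ and $\pa_r$ by $\pa_z$ using semi-additivity, and integrates by parts in $z$ so that the integral term cancels---which is exactly your observation that the integrand is the total derivative of $k(z,z)$ followed by the Fundamental Theorem of Calculus. The only difference is that you explicitly flag and sketch the vanishing of the boundary contribution at $z=-\infty$, a point the paper's proof silently discards.
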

\begin{proof}
  The righthand side is equal to the integral over $(\xi,\eta,z)\in(-\infty,0]^{\times3}$ of,
  \begin{align*}
    \pa_\ell &h(z+x,\xi+x)\,f(\xi,\eta;x)\,\hat{h}(\eta+x,z+x)\\
    &+h(z+x,\xi+x)\,f(\xi,\eta;x)\,\pa_r\hat{h}(\eta+x,z+x).
  \end{align*}
  We can replace the partial derivatives $\pa_\ell$ and $\pa_r$ by $\pa_z$.
  We can then use the partial integration formula with respect to $z$ for the first term.
  The boundary term generated equals the integral of $h(x,\xi+x)\,f(\xi,\eta;x)\,\hat{h}(\eta+x,x)$ over $(\xi,\eta)\in(-\infty,0]^{\times2}$,
  while the integral term generated cancels with the second term above. This gives the result. 
\qed
\end{proof}
We then have the following immediate corollary.
\begin{corollary}[Quasi-trace form]\label{cor:quasitrace}
Suppose the semi-additive Hilbert--Schmidt operator $P$ with parameters $x,y\in\R$, satisfies the linear system of equations~\eqref{eq:linearform},
and $G=PV$ is the Hilbert--Schmidt operator solution to the linear integral equation \eqref{eq:GLM}.
Then the matrix-valued function $g=g(x,y;t)$ given by the kernel $g\coloneqq\lb G\rb_{0,0}$,
from Theorem~\ref{thm:KP} satisfies the non-commutative KP equation~\eqref{eq:KPsingform},
and has the quasi-trace form,
\begin{equation}\label{eq:quasitracesolnform}
  g(x,y;t)\equiv\mathrm{qtr}\,\bigl((\pa_\ell+\pa_r)G\bigr).
\end{equation}
\end{corollary}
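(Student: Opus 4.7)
The corollary makes two assertions: that $g\coloneqq\lb G\rb_{0,0}$ satisfies the non-commutative KP equation \eqref{eq:KPsingform}, and that this $g$ admits the quasi-trace form \eqref{eq:quasitracesolnform}. The first assertion is immediate from Theorem~\ref{thm:KP}, since rearranging the GLM equation $P=G(\id-P)$ yields precisely $G=PV$, and Theorem~\ref{thm:KP} has already established that $\lb G\rb_{0,0}$ solves \eqref{eq:KPsingform}. All the substantive work therefore lies in the quasi-trace identity.

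For the quasi-trace identity I would invoke Lemma~\ref{lemma:quasitraceid}, viewing $G=PV$ in the form $HF\hat{H}$ by taking $H\coloneqq P$ (semi-additive with continuously differentiable kernel), $F\coloneqq V$ (Hilbert--Schmidt with continuous kernel), and $\hat{H}\coloneqq\id$ on the right. Inspecting the proof of the lemma, the semi-additivity of the right factor is used only to identify the right derivative of its kernel with the $z$-derivative on the diagonal, and this identification is automatic via the chain rule whenever the diagonal variable appears only in the second slot of the right-most factor---vacuously true here. Equivalently, and perhaps more cleanly, the identity can be obtained by applying the fundamental theorem of calculus directly to the continuously differentiable function $z\mapsto g(z,z;x,y,t)$: the chain rule gives $\frac{\rd}{\rd z}g(z,z;x,y,t)=(\pa_\ell g+\pa_r g)(z,z;x,y,t)$, whence integrating over $(-\infty,0]$ yields
\begin{equation*}
  g(0,0;x,y,t)=\int_{-\infty}^{0}(\pa_\ell g+\pa_r g)(z,z;x,y,t)\,\rd z=\mathrm{qtr}\bigl((\pa_\ell+\pa_r)G\bigr),
\end{equation*}
which is precisely \eqref{eq:quasitracesolnform}.

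The one point requiring verification is the vanishing of the boundary term as $z\to-\infty$, so that the fundamental theorem of calculus produces $g(0,0)$ on the left rather than $g(0,0)-\lim_{z\to-\infty}g(z,z)$. This is mild: the semi-additive form $p=p(z+x,\zeta+x;y,t)$ together with its square-integrability on $(-\infty,0]^{\times2}$ forces $p$ to decay as $z\to-\infty$, and this decay is inherited by the continuous kernel of $PV$ via standard Hilbert--Schmidt estimates. Granting this mild decay, the argument reduces to a single application of the fundamental theorem of calculus and I anticipate no serious obstacle.
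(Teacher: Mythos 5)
Your proof is correct, but it takes a genuinely different route from the paper's. The paper does not apply Lemma~\ref{lemma:quasitraceid} to $G=PV$ directly; instead it applies the lemma to the three-factor composition $PVP$ with $H=\hat H=P$ and $F=V$ (so that both outer factors are legitimately semi-additive), then uses the partial-fraction identity $VP\equiv V-\id$ to rewrite $\lb PVP\rb_{0,0}$ as $\lb PV\rb_{0,0}-\lb P\rb_{0,0}$ and the right-hand side as $\mathrm{qtr}\bigl((\pa_\ell P)V+PV(\pa_r P)-\pa_x P\bigr)$, and finally cancels using the auxiliary fact $\mathrm{qtr}(\pa_xP)=\lb P\rb_{0,0}$. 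Your first suggestion, invoking the lemma with $\hat H=\id$, is inadmissible as stated since $\id$ is neither Hilbert--Schmidt nor semi-additive with a differentiable kernel, but you correctly recognise this is inessential and your fallback --- the fundamental theorem of calculus applied to $z\mapsto g(z,z)$ together with the chain rule $\tfrac{\rd}{\rd z}g(z,z)=(\pa_\ell g+\pa_r g)(z,z)$ --- is a clean, self-contained derivation that bypasses both the lemma and the partial-fraction bookkeeping. What your route buys is brevity and transparency (the identity is visibly just FTC on the diagonal); what the paper's route buys is that it stays entirely within the hypotheses of the stated lemma and isolates the far-field decay assumption inside that lemma's proof. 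Your explicit attention to the vanishing of the boundary term as $z\to-\infty$ is, if anything, more careful than the paper, which drops the corresponding term silently; note only that square-integrability alone does not force pointwise decay, so strictly one should cite the smoothness and decay hypotheses already assumed elsewhere in the paper (cf.\ the remark in the proof of Theorem~\ref{thm:KP} that ``this requires the kernels herein vanish in the far field'').
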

\begin{proof}
Using Lemma~\ref{lemma:quasitraceid} with $H=\hat{H}=P$ and $F=V$, and the identities~\eqref{eq:partialfractions}, we have,
\begin{align*}
&&\lb PVP\rb_{0,0}=&\;\mathrm{qtr}\,\bigl((\pa_\ell)VP+PV(\pa_rP)\bigr)\\
\Leftrightarrow&& \lb PV\rb_{0,0}-\lb P\rb_{0,0}=&\;\mathrm{qtr}\,\bigl((\pa_\ell)V+V(\pa_rP)-\pa_xP\bigr)\,
\end{align*}  
where recall that for a semi-additive operators we have, $\pa_xP=(\pa_\ell+\pa_r)P$.
It is straighforward to show that for a semi-additive operator $\mathrm{qtr}\,(\pa_xP)=\lb P\rb_{0,0}$.
Cancelling these terms on both sides of the result just above and using that $G=PV$ gives the result.
\qed
\end{proof}  
There are well-known quasi-determinant solutions to the non-commutative KP and mKP equations,
see Etingof, Gelfand and Retakh~\cite{EGR97,EGR98}, Gilson and Nimmo~\cite{GN}, Gilson \textit{et al.} \cite{GNS}, Hamanaka~\cite{Hamanaka} and Sooman~\cite{So}.
See item (1) in our discussion just below.

There are many further generalisations of the results herein we intend to pursue, as follows. 

(1) \emph{Solitons, quasi-determinants and $\tau$-functions:} Connecting our quasi-trace formula~\eqref{eq:quasitracesolnform} for the non-commutative KP equation
to the quasi-determinant solution forms in Etingof, Gelfand and Retakh~\cite{EGR97,EGR98} and Gilson and Nimmo~\cite{GN}, is a natural immediate next step.
The latter represent $\tau$-functions with specific forms corresponding to multi-soliton solutions. And then a natural question is whether the quasi-trace formula~\eqref{eq:quasitracesolnform}
can be extended to lifted mKP solutions, and, can any such extended quasi-trace formula be related to quasi-determinant solution formulae that may exist for the lifted mKP solutions.
As already mentioned, Gilson \textit{et al.} \cite{GNS} established quasi-determinant formulae for the non-commutative modified KP equations~\eqref{eq:eqamKPassign}--\eqref{eq:amKP}.
Further, are there efficient numerical proceodures for computing Fredholm quasi-determinents, much like those developed for ordinary Fredholm determinants by Bornemann~\cite{Bornemann}? 

(2) \emph{Fredholm--Grassmannian flow:} As is well-known, the solution flow to the KP equation can be interpreted as flow on a Fredholm Grassmann manifold;
see Sato~\cite{SatoI,SatoII} and Segal and Wilson~\cite{SW}. And Kodama~\cite{Kodama} has classified all possible KP multi-soliton interactions and shown that
they can be parametrised within the context of finite dimensional Grassmannians. The form of the GLM equation~\eqref{eq:GLM} in particular predicates that
the solution flow occurs on a given coordinate patch of a Fredholm Grassmannian;
see Doikou, Malham and Stylianidis~\cite{DMS}, Doikou \textit{et al.\/} \cite{DMSWa,DMSWb,DMSWc}, Malham~\cite{MKdV} and Blower and Malham~\cite{BM}.
Also see Beck \textit{et al.\/} \cite{BDMSa,BDMSb}.
It would be of considerable interest to: (a) Explore the interpretation of the non-commutative KP and lifted mKP flows as Fredholm Grassmannian flows.
For example, flows with singularities (see, for example, Pelinovsky~\cite{Pelinovsky} or P\"oppe~\cite{PKP})
correspond to a poor choice of coordinate patch, locally, when projecting from the underlying linear flow
onto the Fredholm Grassmannian via the GLM equation. This might be the case for the infinite `trenches' mentioned in \ref{sec:numerics};
(b) Extend Kodama's classification to the non-commutative KP and lifted mKP flows, and in particular show that
the multi-solitons parametrised as such, are finite dimensional subflows of the underlying Fredholm Grassmannian flow. Also see Chuanzhong, Mironov and Orlov~\cite{CMO}; 
(c) Utilise the Fredholm Grassmannian flow to improve and extend known solution regularity results for the non-commutative KP and lifted mKP equations,
for example in manner achieved by Grudsky and Rybkin~\cite{GR}; and
(d) Connect the (quasi-) determinantal bundle associated with the underlying Fredholm Grassmannian with the quasi-determinant solution forms 
found by Gilson and Nimmo~\cite{GN} for the non-commutative KP equation, and Gilson \textit{et al.\/} \cite{GNS} for the modified KP equations~\eqref{eq:eqamKPassign}--\eqref{eq:amKP},
or those for the lifted mKP equations discussed in (1) above. 

(3) \emph{Non-commutative KP and lifted mKP hierarchies:}
The pre-P\"oppe algebra we have established is the appropriate context to study the whole of the non-commutative KP and lifted mKP hierarchies,
in particular, to determine certain classes of solutions to the whole hierarchies, as was achieved by Malham~\cite{MKdV} for the non-commutative KdV hierarchy
and Blower and Malham~\cite{BM} for the non-commutative NLS/mKdV hierarchy. Another immediate goal is to make this extension.

(4) \emph{Abstract pre-P\"oppe algebra:} In Malham~\cite{MKdV} and Blower and Malham~\cite{BM} the P\"oppe algebra was pulled back to a more abstract form,
more insightful for constructing the hierarchy. Such an abstraction might prove useful here.
The idea is to generate an abstract algebraic context analogous to the study of endomorphism algebras on the shuffle algebra in Malham and Wiese~\cite{MW}  
and Ebrahimi-Fard \textit{et al.\/} \cite{E-FLMM-KW}. See Malham~\cite{MKdV} and Blower and Malham~\cite{BM} for more details. Also see Manin~\cite{Manin},
as well as Magnot, Reyes and Rubtsov~\cite{MRR}.

(5) \emph{Pre-P\"oppe algebra, extensions:}  In Remark~\ref{rmk:algext} we alluded to extensions of the pre-P\"oppe algebra that may be useful to further investigations.
The pre-P\"oppe algebra we developed in Sections~\ref{sec:Poppealg}--\ref{sec:mKP} is generated by monomials of the form $\lb W\rb$ where
$W=\tilde VP_{a_1,b_1}\tilde V\cdots\tilde VP_{a_k,b_k}\tilde V$, and the $\tilde V$'s could be either $V$ or $V^\dag$, with the pre-P\"oppe product given in Lemma~\ref{lemma:Poppeprodmod}.
In particular, for example, in our computations for the lifted mKP equation in Section~\ref{sec:mKP} we considered the products of terms generated by $[V]$ and its derivatives
with respect to $x$ and $y$. However, from an algebra perspective, we could extend the generator set, from just $[V]$, to include $\{PV\}$ as well.  
Using Lemma~\ref{lemma:Poppeprodforms}, and noting Remark~\ref{rmk:Vcorrection}, it is straightforward to show that, for example with $\pa\coloneqq\pa_\ell+\pa_{r}$, we have,
\begin{align*}
\{PV\}[V]=&\;[VP_{\hat{1}}V]+[VP_{[1,0]}V^\dag]-2\,\pa_{\ell}[V],\\
[V]\{PV\}=&\;[VP_{\hat{1}}V]-[VP_{[1,0]}V^\dag]-2\,\pa_{r}[V],\\
\{PV\}\{PV\}=&\;\{VP_{\hat{1}}V\}-\{VP_{[1,0]}V^\dag\}-2\,\pa\{PV\}.
\end{align*}
Further, we can also prove results of the form,
\begin{equation*}
\mathrm{ad}_{\{PV\}}[VP_{\hat{1}}V]=\pa_y[V]+\pa_x[VP_{[1,0]}V^\dag]-2\,\pa^\sharp[VP_{\hat{1}}V],
\end{equation*}
where, $\pa^\sharp\coloneqq\pa_\ell-\pa_{r}$. And so forth.
For another example, for the quasi-trace formula developed above, in both Lemma~\ref{lemma:quasitraceid} and Corollary~\ref{cor:quasitrace},
we utilised similar computations to those used to establish these results.
Extending the algbera in this way, in principle, opens the door to considering different classes of solutions, as well as considering further integrable systems.  

(6) \emph{Linear systems:} The is also a linear systems approach to integrable systems with close connections to the P\"oppe approach we have employed herein.
See for example, Blower~\cite{Blower}, Blower, Brett and Doust~\cite{BBD}, Blower and Doust~\cite{BD} and Blower and Newsham~\cite{BN}.
Our goal here would be to extend this approach to the non-commutative KP and mKP equations and simultaneously crystalise its connection to the P\"oppe approach.

(7) \emph{Numerical simulation of the non-commutative KP and lifted mKP equations:}
There are many components/aspects of our numerical simulations in \ref{sec:numerics} that require further attention as follows:
(a) A detailed numerical analysis of the GLM, FTT-FD-exp and FFT2-exp algorithms employed in \ref{sec:numerics}, 
together with a more thorough span of numerical investigations of solutions to the non-commutative KP equation, are both required.
We would also like to add the algorithm involving the GLM equation and two-dimensional fast Fourier transform mentoined in Remark~\ref{rmk:GLMFFT2},
to this numerical analysis, as well as to implement it and use it to construct a much larger variety of solutions to the non-commutative KP equation.
Such GLM methods, by using different coordinate patches as mentioned in item (2) above, may also provide a tractible approach to solutions
with singularities, whether at individual points or as one-dimensional objects such as the infinite `trenches', in the $(x,y)$-plane;
(b) A similar numerical analysis and investigation is required for the lifted mKP equations, which are not included in \ref{sec:numerics}.
While we expect the investigation and implementation of the FTT-FD-exp and FFT2-exp algorithms for the lifted mKP equations to be relatively straightforward,
the solution approximation via the GLM equation requires more work. For example, in the GLM equation~\eqref{eq:GLMexplicit}, since in this case $Q\coloneqq P^2$,
the second $p$ factor in the integrand on the right is replaced by (suppressing the implicit $y$ and $t$ dependence in $q$ and $p$),
\begin{equation*}
q(z,\zeta;x)=\int_{-\infty}^0p(z+x,\nu+x)p(\nu+x,\zeta+x)\,\rd\nu.
\end{equation*}
Here $q$ represents the square-integrable kernel of $Q$. This integral is straightforwardly approximated by a Riemann sum in precisely the same
manner to how the integral term in the GLM equation is approximated in the non-commutative KP equation case, as outlined in detail in \ref{sec:numerics}.
This generates and approximation of the form $\widehat{Q}_{n,m}(z_\ell,\zeta_k;t)\coloneqq q(z_\ell+x_n,\zeta_k+x_n;y_m,t)$ on nodes in the truncated $(x,y)$-domain.
The integral term for the GLM equation in the lifted mKP equations case given by,
\begin{equation*}
\int_{-\infty}^0q(z,\xi;x)g(\xi,\zeta)\,\rd\xi,
\end{equation*}
can, as previously, then we also approximated by a Riemann sum utilising the approximation $\widehat{Q}_{n,m}=\widehat{Q}_{n,m}(z_\ell,\zeta_k;t)$.
This means that we can straightforwardly generate an approximation for $[G]=2\lb VPV^\dag\rb=2\lb P(\id-P^2)^{-1}\rb$ in a similar manner to that outlined
for the non-commutative KP equation in \ref{sec:numerics}. However for the lifted mKP equation we also need to generate $f\coloneqq-\{VP_{[1,0]}V^\dag\}$ and $h\coloneqq[VP_{[1,0]}V^\dag]$.
To achieve this, we recall that we can express $f$ and $h$ respectively in the form $f=-\lb VP_{[1,0]}V^\dag\rb+\lb V^\dag P_{[1,0]}V\rb$ and $h=\lb VP_{[1,0]}V^\dag\rb+\lb V^\dag P_{[1,0]}V\rb$.
In other words, it is sufficient to compute $\lb VP_{[1,0]}V^\dag\rb$ and $\lb V^\dag P_{[1,0]}V\rb$. If we set $F\coloneqq VP_{[1,0]}V^\dag$, then $F$ satisfies, 
\begin{equation*}
(\id-P)F(\id+P)=P_{[1,0]}.
\end{equation*}
We can consider this equation as a linear integral equation for the kernels of the operators $P$ and $F$.
Assuming we can suitably approximate the left, i.e.\/ $\pa_z$, and right, i.e.\/ $\pa_\zeta$, derivatives of the kernel of $P$,
we can compute the kernel of $F$ by successively solving the following equivalent system of linear integral equations,
\begin{equation*}
(\id-P)Y=P_{[1,0]}\quad\text{and}\quad F(\id+P)=Y.
\end{equation*}
Successively solving these linear integral equations can be achieved in a similar manner to that outlined for the GLM equation in \ref{sec:numerics}.
An approximation for $\lb V^\dag P_{[1,0]}V\rb$ can be generated analogously. Implementing this approach is one of our immediate goals;
(c) In both of the cases (a) and (b), we assume we are either given the kernel of the semi-additive operator $P$ representing the scattering data,
or, if we use the algorithm outlined in Remark~\ref{rmk:GLMFFT2}, the initial data $P_0$ for $P$ at time $t=0$.
Ideally, we would like to use the GLM method to compute the solution $g$, in either the non-commutative KP or lifted mKP cases, given initial $g_0$ for $g$ at time $t=0$. 
Specifically, given such initial data $g_0=g_0(x,y)$, we would need to solve the scattering problem to determine the scattering data $p_0$ to the linear system of equations~\eqref{eq:linearform},
which is semi-additive and satisfies the constraints~\eqref{eq:constraints}. In principle, the data can be carried forward in time, exactly, in Fourier space with respect to $x$ and $y$.
We could then determine $g=g(0,0;x,y,t)$ by numerically solving the GLM equation at that specific time $t$ of choice, without having to numerically time-step to that time.
We intend to explore all of these directions in due course.

(8) \emph{Lax formulation:} The non-commutative KP~\eqref{eq:KPsingform} and modified KP equations~\eqref{eq:eqamKPassign}--\eqref{eq:amKP} have well-known Lax formulations,
see Dimakis and M\"uller--Hoissen~\cite{DMH}, Gilson and Nimmo~\cite{GN}, Gilson \textit{et al.\/} \cite{GNS}, Kuperschmidt~\cite{K} and Wang and Wadati~\cite{WW}.
However any Lax formulation for the lifted mKP is yet to be determined.

(9) \emph{Novikov--Veselov and modified Novikov--Veselov equations:} The Novikov--Veselov (NV) equation is also considered a natural extension of the KdV equation to two dimensions,
see for example, Bogdanov~\cite{Bogdanov}, Croke~\cite{Croke}, Lassas, Mueller, Siltanen and Stahel~\cite{LMSS} and Nickel, Serov and Sch\"urmann~\cite{NSS}.
There is also an NV hierarchy, see for example Mironov~\cite{Mironov}.
Similarly, the modified NV equation is also considered as a two-dimensional extension of the mKdV equation, see Ferapentov~\cite{Ferapontov} and Ta\u{\i}manov~\cite{Taimanov}.
Pfaffian solution forms for the NV equation have been established by \"Unal~\cite{Metin}. This suggests the approach we consider herein for the non-commutative KP equation
could be extended to the NV equation and a non-commutative generalisation, and perhaps their modified versions.
There are also other associated KP systems of interest in this context, such as the elliptic KP-type system directly linearised by Fu and Nijhoff~\cite{FN3}.

(10) \emph{The KP and KPZ equations:} The connections established by Quastel and Remenik~\cite{QR} between the KP equation,
Tracy--Widom distributions in random matrix theory and the KPZ equation, which represents a model for random growth, direct polymers, percolation and so forth, are particularly intriguing.
Their work includes more general results for the matrix KP equation and highlights the need to extend the connection between the KP equation and Tracy--Widom distributions
to the non-commutative case. Results established in items (1) and (2) above might help in tackling such an extension.

(11) \emph{Supersymmetric extensions:} As we mentioned in the introduction, there are supersymmetric extensions of the KP and mKP equations.
In particular see Brunelli and Das~\cite{BrunelliDas}, Delduc, Gallot and Sorin~\cite{DGS}, Manin and Radul~\cite{ManinRadul} and Nishino~\cite{Nishino}.
It would be of interest to consider equivalent extensions for the non-commutative KP and lifted mKP equations herein, and use (perhaps a modified form of)
the pre-P\"oppe algebra to directly linearise the resulting supersymmetric extensions.

(12) \emph{Log-potential mKP form:} Ideally of course, we would also like to directly linearise the log-potential mKP form~\eqref{eq:lpKP} for $q$.
To tackle the log-potential mKP equation, we may need to adapt the semi-additive form we considered herein.
This is because the identifications $g\coloneqq q^{-1}q_x$ and $f\coloneqq q^{-1}q_y$ we considered in the introduction,
which generated the non-commutative mKP equation form \eqref{eq:mKPassign}--\eqref{eq:mKP}, do not distinguish $h$ and $f$. 
The significance of the log-potential mKP equation~\eqref{eq:lpKP} is that both the non-commutative mKP formulations,
\eqref{eq:eqamKPassign}--\eqref{eq:amKP} and \eqref{eq:mKPassign}--\eqref{eq:mKP}, can be directly generated from it. 
From this perspective and its form, its solution thus appears to represent the natural $\tau$-function or quasi-determinant solution form.
On the other hand, for the lifted mKP equations \eqref{eq:triplemKPh}--\eqref{eq:triplemKPf}, we have $f=\{V^\dag P_{[1,0]}V\}$ and $h=[V^\dag P_{[1,0]}V]$.
We thus now seek the formulation/theory that encompasses the whole of this non-commutative picture.

\section{Declarations}

\subsection{Funding and conflicts or competing interests}
SJAM received funding from the EPSRC for the Mathematical Sciences Small Grant EP/X018784/1.
There are no conflicts of interests or competing interests. 

\subsection{Data availability statement}
No data was used in this work.

\appendix

\section{Solitons (commutative KP)}\label{sec:solitons}
We derive the soliton solution form for the commutative form of the KP equation~\eqref{eq:KPsingform}.
We use interaction forms of several non-commutative versions of these solutions as initial data in our numerical simulations in~\ref{sec:numerics},
hence our inclusion of this standard derivation herein---see, for example, Drazin and Johnson~\cite{DJ}.
Suppose $A$, $B$, $\Lambda$ and $\Omega$ are constant matrices, and that,
\begin{equation*}
\Lambda\coloneqq A^2-B^2\quad\text{and}\quad \Omega\coloneqq(A+B)^3+(A+B)^{-1}\Lambda^2.
\end{equation*}
Assume that $A$ and $B$ commute, i.e.\/ $[A,B]=O$, the zero matrix. Thus all four quantities commute. Note that, also, we have $\Omega=4(A^3+B^3)$.
If the scattering data $p$ has the form \eqref{eq:semiadditiveform}, then the GLM equation \eqref{eq:GLM} with $Q\equiv P$ has the following explicit form
(suppressing the implicit $y$ and $t$ dependence in $p$ and $g$),
\begin{equation}
  p(z+x,\zeta+x)\!=\!g(z,\zeta;x)-\!\!\int_{-\infty}^0\!\!\!g(z,\xi;x)p(\xi+x,\zeta+x)\,\rd\xi. \label{eq:GLMexplicit}
\end{equation}
Note this applies in both the non-commutative, in which case the quantities $p$ and $g$ shown therein are matrix-valued, and the commutative, context.
Since our goal is to compute $\pa_x g(0,0;x,y,t)$, which represents the solution to the KP equation rather than its potential form, 
we can immediately set $z=0$ in \eqref{eq:GLMexplicit}.
Now suppose $p=p(z+x,\zeta+x;y,t)$ has the form,
\begin{equation}\label{eq:scatteringdata}
p=-(A+B)\exp\bigl(A(z+x)+B(\zeta+x)+\Lambda y+\Omega t\bigr).
\end{equation}
This exponential form for $p$ naturally satisfies \eqref{eq:constraints} and \eqref{eq:secondlinearform}, and thus in the latter case, equivalently \eqref{eq:linearform}. 
We look for a solution $g=g(0,\zeta;x,y,t)$ to~\eqref{eq:GLMexplicit} of the form,
\begin{equation*}
g=H(x,y,t)\exp\bigl(A(z+x)+B(\zeta+x)\bigr).
\end{equation*}
Substituting these two forms for $p$ and $g$ into~\eqref{eq:GLMexplicit} and solving for $H=H(x,y,t)$, reveals that,
\begin{equation}\label{eq:solitonsolution}
\pa_xg(0,0;x,y,t)=-\tfrac14(A+B)^2\,\mathrm{sech}^2\Theta,
\end{equation}
where, $\Theta\coloneqq\tfrac12\bigl((A+B)x+\Lambda y+\Omega t\bigr)$.
We observe that for this solution form, $[g_x,g_y]\equiv O$. Hence this is a solution to the commutative KP equation, which can be checked by direct substitution. 
In Figure~\ref{fig:soliton} we show the soliton solution corresponding to the matrices,
\begin{equation*}
A=\begin{pmatrix} 1.55 & 0 \\ 0 & 1.45 \end{pmatrix}  \quad\text{and}\quad B=\begin{pmatrix} 1.05 & 0 \\ 0  & 1.02 \end{pmatrix}.
\end{equation*}
The solutions shown in the middle and bottom panels in the Figure are computed at time $t=0.2$ using two independent numerical methods, by:
(i) Numerically solving the linear integral equation~\eqref{eq:GLMexplicit} using the scattering data~\eqref{eq:scatteringdata};
and (ii) using the FFT-FD-exp scheme. Both of these schemes are outlined in detail next, in \ref{sec:numerics}.
The respective solitons shown in Figure~\ref{fig:soliton}, match the soliton form~\eqref{eq:solitonsolution} to within numerical error.

\begin{figure*}
  \begin{center}
    \includegraphics[width=8cm,height=7cm]{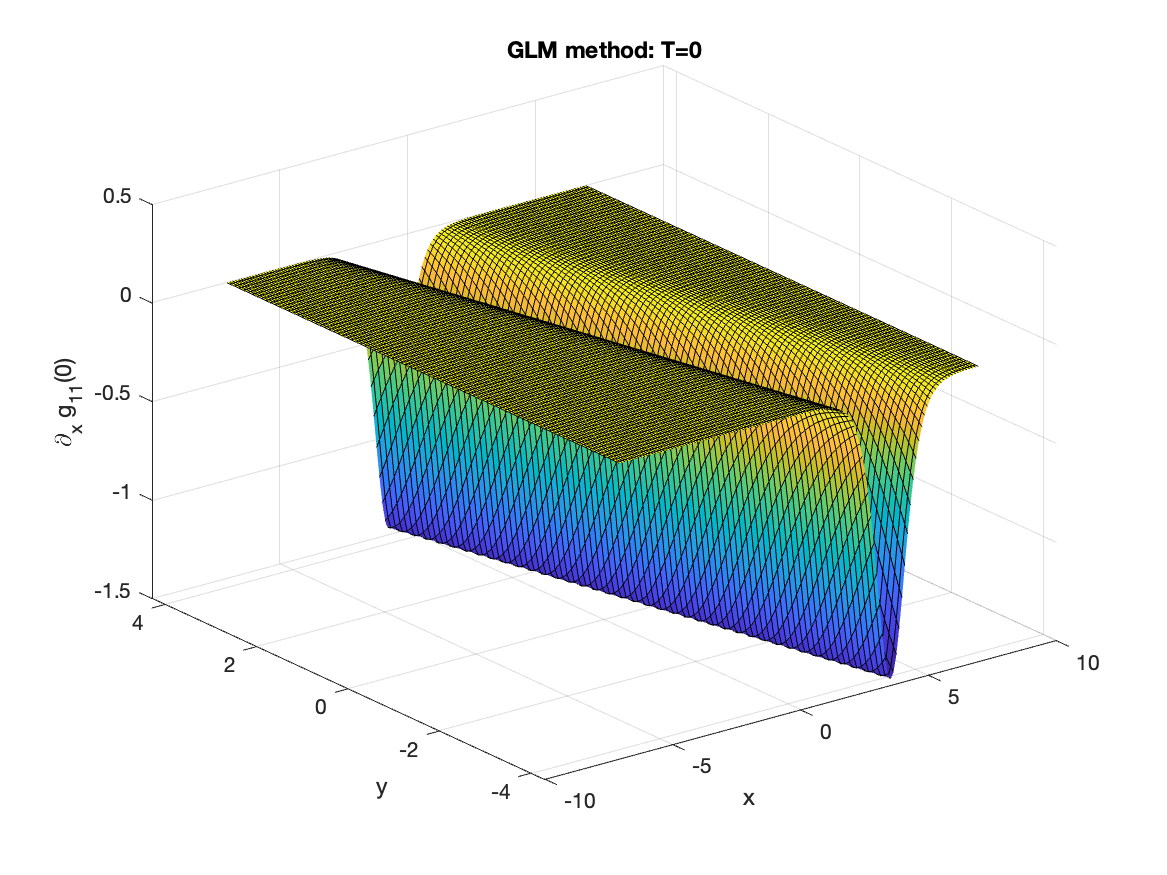}
    \includegraphics[width=8cm,height=7cm]{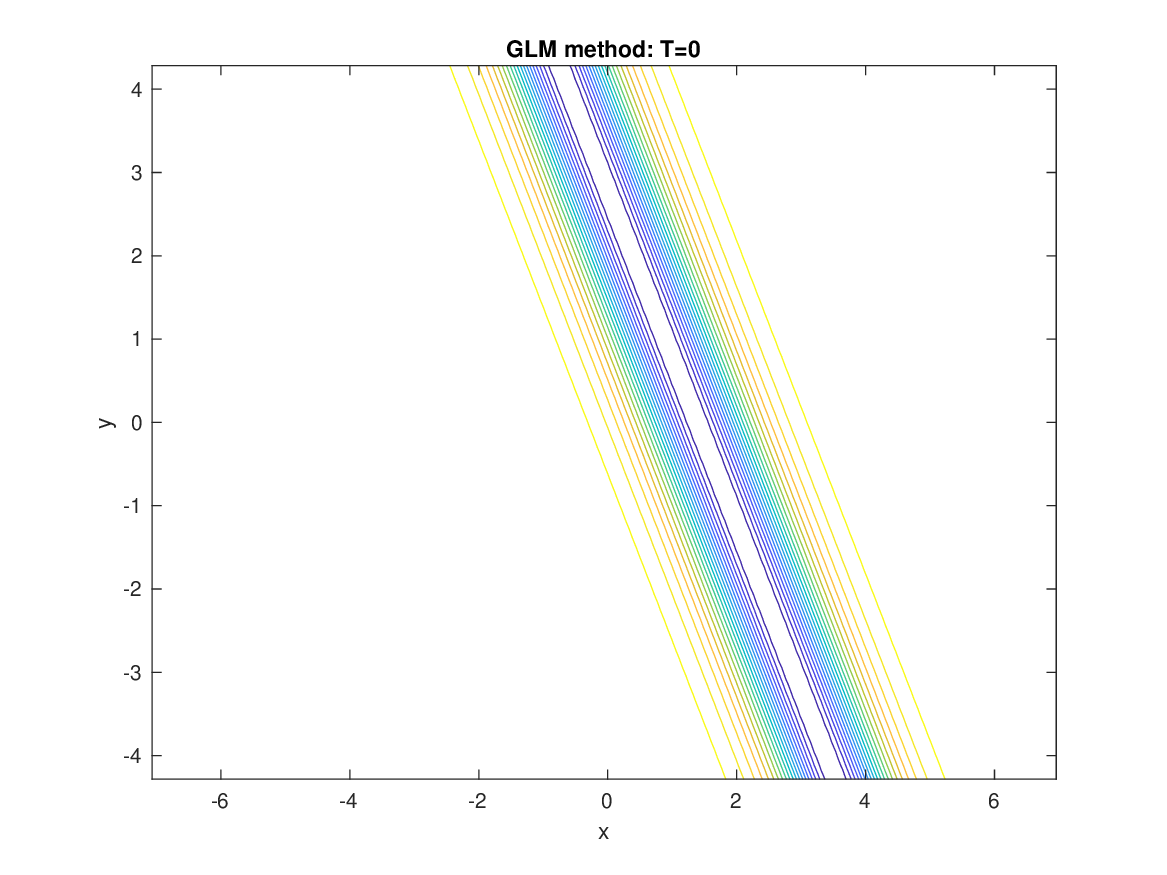}\\ 
    \includegraphics[width=8cm,height=7cm]{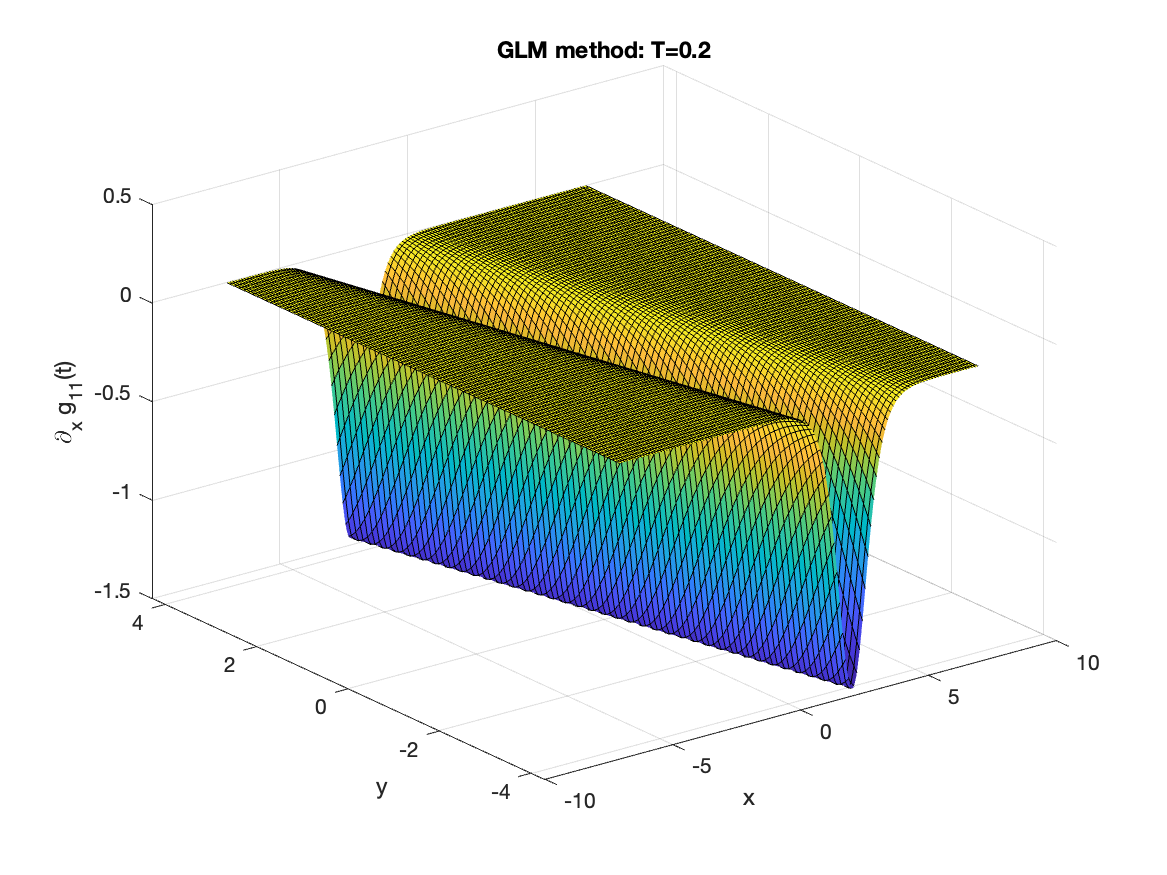}
    \includegraphics[width=8cm,height=7cm]{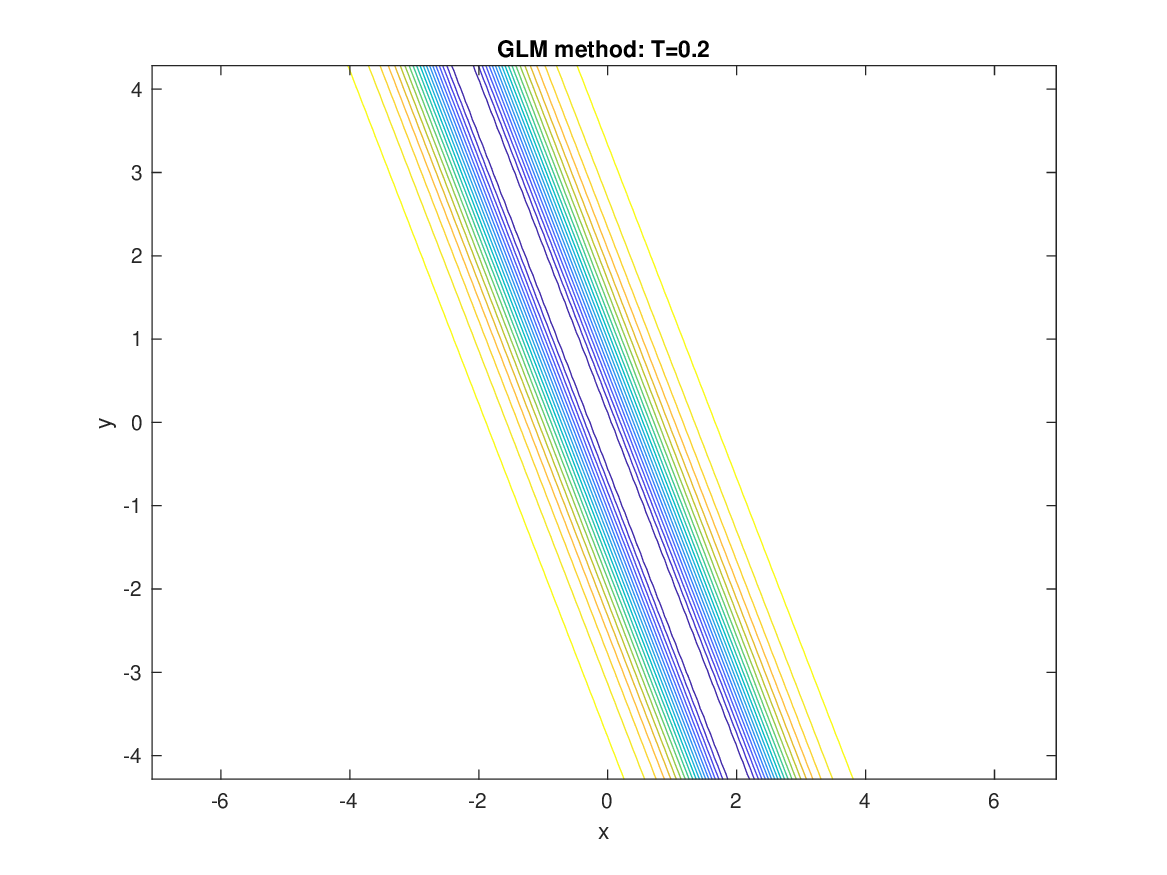}\\ 
    \includegraphics[width=8cm,height=7cm]{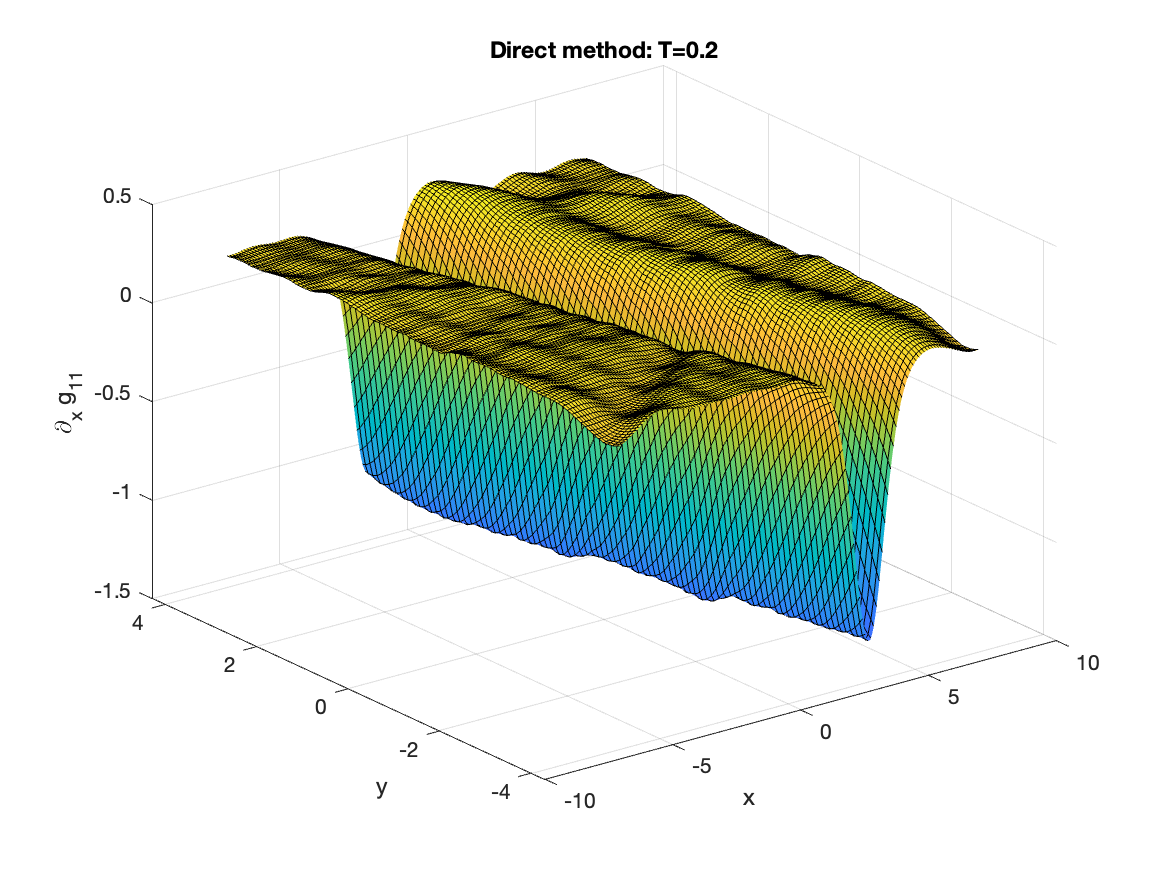}
    \includegraphics[width=8cm,height=7cm]{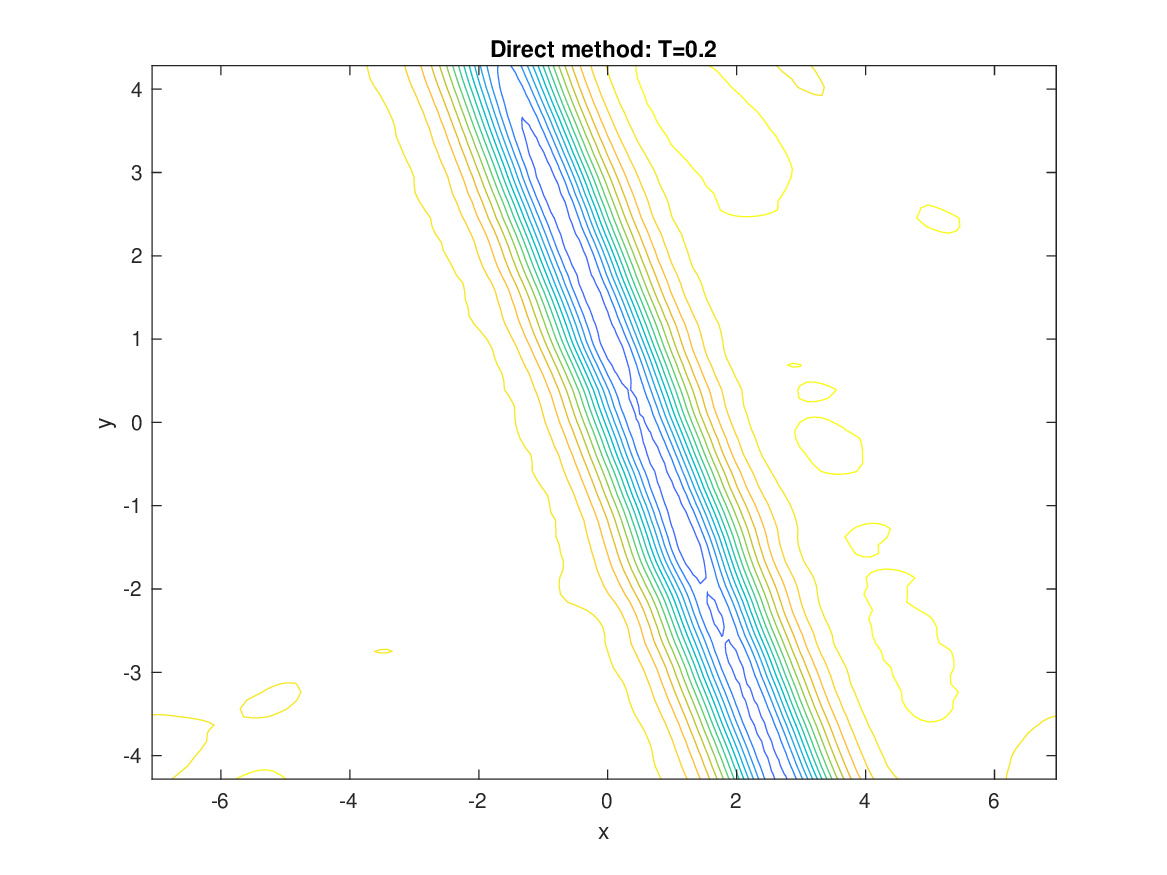}
  \end{center}
  \caption{We plot the soliton solution corresponding to~\eqref{eq:solitonsolution} at $t=0$ in the top panel set,
    computed by numerically solving the GLM equation~\eqref{eq:GLMexplicit}. In the middle and bottom panel sets,
    we plot the soliton solution computed at time $t=0.2$, respectively computed by numerically solving the GLM equation at that time,
    and then using the FFT-FD-exp scheme outlined in \ref{sec:numerics}.
    The right-hand panels give the corresponding contour plots.}
\label{fig:soliton}
\end{figure*}

\section{Simulations (non-commutative KP)}\label{sec:numerics}
Herein we demonstrate how we can numerically solve the GLM equation~\eqref{eq:GLM} to obtain approximate solutions to the non-commutative KP equation~\eqref{eq:KPsingform}.
For comparison, we also solve the non-commutative KP equation~\eqref{eq:KPsingform} directly using two pseudo-spectral algorithms based on 
an exponential split-step method in time. The first pseudo-spectral algorithm, which we nominate the FFT-FD-exp algorithm,
utilises the Fourier transform of the solution in the $x$-direction, and computes derivatives in the $y$-direction using central finite differences.
The second pseudo-spectral algorithm, which we nominate the FFT2-exp algorithm, utilises the Fourier transform in both the $x$- and $y$-directions,
and employs the window method outlined by Kao and Kodama~\cite{KaoKodama} to deal with the non-periodic boundary conditions. 

First, we outline the simple numerical method we used to solve the GLM equation~\eqref{eq:GLM}. This is given explicitly in~\eqref{eq:GLMexplicit},
though now we adhere to the context where both $p$ and $g$ are non-commutative. Our goal is to compute $g=g(0,0;x,y,t)$, so we set $z=0$ therein,
and assume $t\geqslant0$ is fixed, i.e.\/ the given single value time at which we wish to approximately compute the solution.
For a given $x$- and $y$-domain lengths $L_x>0$ and $L_y>0$, respectively, we assume $x\in[-L_x/2,L_x/2]$ and $y\in[-L_y/2,L_y/2]$.
However, we assume $\zeta,\xi\in[-L_x/2,0]$. We discretise $x$, $\zeta$ and $\xi$ to the nodal values $x_n$, $\zeta_n$ and $\xi_n$,
each with separation step $h_x>0$, and with $\zeta_n=\xi_n=x_n$ for $x_n\leqslant0$. We discretise $y$ to $y_m$ with separation step $h_y>0$.
For a given matrix-valued function form $p=p(\cdot,\cdot;t)$, for each nodal point $(x_n,y_m)$, we generate the matrix
$\widehat{P}_{n,m}(\zeta_k;t)\coloneqq p(x_n,\zeta_k+x_n;y_m,t)$, i.e.\/ for a given node $(x_n,y_m)$, we replace each matrix entry $p_{ij}$
of $p$ by the row vector $p_{ij}(x_n,\zeta_k+x_n;y_m,t)$ indexed by $k$.
Similarly, for each nodal point $(x_n,y_m)$, we generate the matrix $\widehat{Q}_{n,m}(\xi_\ell,\zeta_k;t)\coloneqq p(\xi_\ell+x_n,\zeta_k+x_n;y_m,t)$,
where now we replace each matrix entry $p_{ij}$ of $p$ by the matrix $p_{ij}(\xi_\ell+x_n,\zeta_k+x_n;y_m,t)$ indexed by $(\ell,k)$.
We also set $\widehat{G}(x_n,y_m;t)$ to be the matrix of unknown values $\widehat{G}_{n,m}(\zeta_k;t)\coloneqq g(0,\zeta_k;x_n,y_m,t)$, as follows.
As above, we imagine replacing the entry $g_{ij}$ in the matrix $g=g(0,\zeta;x,y,t)$ by the row vector $g_{ij}(0,\zeta_k;x_n,y_m,t)$ indexed by $k$.
In other words, the $(ik,j)$th position in the matrix $\widehat{G}_{n,m}$ is the $k$ component of the row vector $g_{ij}(0,\zeta_k;x_n,y_m,t)$.
We approximate the integral in the linear Fredholm equation~\eqref{eq:GLMexplicit} with $z=0$, by a left Riemann sum.
This means that an approximate solution to~\eqref{eq:GLMexplicit} with $z=0$, is given by the solution to the linear algebraic problem,
for any given $t\geqslant0$ and $(x_n,y_m)$:
\begin{equation*}
\widehat{P}=\widehat{G}-h_x\widehat{Q}\widehat{G}\quad\Leftrightarrow\quad \widehat{G}=\widehat{P}(\id-h_x\widehat{Q})^{-1}.
\end{equation*}
In the equation on the left, the Riemann sum is realised by the matrix multiplication and factor $h_x$. 
An approximation to the solution $g=g(0,0;x_n,y_m,t)$, is then generated by evaluating each of the row vectors in $\widehat{G}_{n,m}$ 
at the far right-hand entries corresponding to $\zeta_k=0$---recall $\zeta_k\in[-L_x/2,0]$. In other words, if the number of $x_n$ nodes is $M$,
with $x_0=-L_x/2$ and $x_{M-1}=L_x/2-h_x$, then we extract out the $\bigl(\bigl(i(M/2+1)\bigr),j)\bigr)$th entries of $\widehat{G}_{n,m}$, corresponding to $z_k=0$,
to generate the approximation $g_{ij}(0,0;x_n,y_m,t)$. We compute the solution to the non-commutative KP equation, namely $\pa_xg(0,0;x_n,y_m,t)$,
by computing the $x$-derivative via a finite central difference, assuming zero Neumann boundary conditions.

\begin{example}\label{ex:soliton}
We plot the time evolution of a basic soliton solution of the form~\eqref{eq:solitonsolution} for the commutative KP equation in Figure~\ref{fig:soliton}.
The top two panel sets in the Figure are generated using the GLM approach,
first setting the time $t=0$ for the top set, and then $t=0.2$ for the middle panel set.
We used $M=2^7$ nodes for both $x_n$ and $y_m$, and set $L_x=9\pi/2$ and $L_y=4\pi$.
The bottom panel set was computed using the FFT-FD-exp algorithm outlined below, up to time $t=0.2$ with $5000$ time steps,
and with $M=2^7$ modes in the $x$-direction and $M=2^7$ nodes in the $y$-direction.
The plots match the solution form~\eqref{eq:solitonsolution}, though we notice some boundary
interference in the case of the implementation of the FFT-FD-exp algorithm in the bottom set of plots.
\end{example}

Second, we now outline the two exponential split-step pseudo-spectral algorithms we have used.
For convenience, we set $A_{\mathrm{D}}$ to be the linear KP operator,
\begin{equation*}
A_{\mathrm{D}}\coloneqq\pa_x^3+3\pa_x^{-1}\pa_y^2.
\end{equation*}
The basic exponential split-step method we use to integrate the non-commutative KP equation is as follows:
\begin{align*}
v_n&=\exp\bigl(\Delta t\mathcal F(A_{\mathrm{D}})\bigr)u_n,\\
u_{n+1}&=v_n-\Delta t\mathcal F\bigl(\mathrm{N}(v_n)\bigr),
\end{align*}
where $\mathrm{N}(v)=6\pa_x({\mathcal F}^{-1}v)^2+[\mathrm{D}{\mathcal F}^{-1}v,{\mathcal F}^{-1}v]$.
Herein, $\mathcal F$ represents the Fourier transformation,
$u_n$ is the Fourier transform of the approximate solution to the non-commutative KP equation at time $t_n$, $n\in\mathbb Z$,
and $\mathcal F(A_{\mathrm{D}})$ represents the corresponding Fourier transformed version of $A_{\mathrm{D}}$. 
In terms of the spatial discretisation and Fourier transformation, we use two approaches:\smallskip

(1) \emph{FFT2-exp algorithm}: We consider the two-dimensional Fourier transform with respect to $x$ and $y$.
If $2\pi\mathrm{i}k_x/L_x$ and $2\pi\mathrm{i}k_y/L_y$ are the wavenumbers respectively in the $x$- and $y$-directions,
then in this case,
\begin{equation*}
\mathcal F\bigl(A_{\mathrm{D}}\bigr)(k_x,k_y)=(2\pi\mathrm{i}k_x/L_x)^3+3\frac{(2\pi\mathrm{i}k_y/L_y)^2}{(2\pi\mathrm{i}k_x/L_x)+2\pi\delta}.
\end{equation*}
Note, we have approximated the Fourier transform of $\pa_x^{-1}$ by $1/\bigl(2\pi\mathrm{i}k_x/L_x+2\pi\delta\bigr)$, where $\delta=2^{-52}$;
see Klein and Roidot~\cite[p.~3341]{KleinRoidot}. We augment this approach with the `window method' employed by Kao and Kodama~\cite{KaoKodama},
as outlined below.
\smallskip

(2) \emph{FFT-FD-exp algorithm}: We consider the Fourier transform with respect to $x$ only and approximate derivatives in the $y$-direction,
present in $A_{\mathrm{D}}$ and $\mathrm{D}$, by central differences. We assume zero Neumann boundary conditions at the $y=\pm L_y/2$ boundaries.
In this case, the form for $\mathcal F(A_{\mathrm{D}})$ shown above is modified so that a finite central difference replaces the derivative $\pa_y^2$ 
present in $A_{\mathrm{D}}$. We employed this algorithm to compute the soliton solution in Figure~\ref{fig:soliton} for the commutative KP equation,
as a comparison to the GLM method. 
\smallskip

Note that for both of these algorithms, the initial data is generated by numerically solving the GLM equation for the given scattering data $p$ at time $t=0$,
as outlined above. This is the case in all our example cases herein, i.e.\/ in Examples~\ref{ex:soliton}--\ref{ex:threesolitoncase}.
We use the FFT2-exp algorithm to compute approximate solutions to the non-commutative KP equation shown in Figure~\ref{fig:twosolitoncase}
and all subsequent figures. We use pseudo-spectral algorithms due to their simplicity and efficiency, see for example, Klein and Saut~\cite{KleinSaut}
and Grava, Klein and Pitton~\cite{GKP}. However the solutions we wish to compute are typically non-periodic. In the case of the soliton solution
to the commutative KP equation in Example~\ref{ex:soliton} and Figure~\ref{fig:soliton}, the solution is periodic in the $x$-direction in the sense that,
for finite times and domains, the far field limit of the solution in the $x$-direction is zero. This is not true in the $y$-direction, which motivated our
implementation of the FFT-FD-exp algorithm using finite differences in the $y$-direction and zero Neumann boundary conditions at $y=\pm L_y/2$.
However, this boundary assumption is a crude artificial approximation, and the (dispersive) finite-speed propapation of boundary disturbances
into the interior of the domain can be observed in the bottom panel set in Figure~\ref{fig:soliton}. In order to better counteract this problem,
we use the FFT2-exp algorithm combined with the `window method' employed by Kao and Kodama~\cite{KaoKodama}, which in particular attempts to address  
the problem of utlising pseudo-spectal algorithms in the context of non-periodic boundary conditions. Briefly, we implement this method here
as follows. We set $W=W(y)$ to be the window function with the generalised Gaussian form,
\begin{equation*}
W(y)\coloneqq\exp\bigl(-a_r|y/L_y|^r\bigr),
\end{equation*}
where $a_r\coloneqq(1.111)^r\log_{\mathrm{e}}10$ and $r\coloneqq38$.
Suppose $\overline{g}=\overline{g}(x,y;t)$ is the solution to the non-commutative KP equation, not in potential form,
in other words: $\overline{g}_t=\overline{g}_{xxx}+3\mathrm{D}\overline{g}_y-6\pa_x\overline{g}^2-6[\mathrm{D}\overline{g},\overline{g}]$,
with initial data $f=f(x,y)$ so $\overline{g}(x,y;0)=f(x,y)$.
Hence $\overline{g}=\pa_xg$ when $g$ is the solution to the potential form of the non-commutative KP equation~\eqref{eq:KPsingform}.
Then we implement the `window method' here in the following form. 
We decompose the initial data so that $f=Wf+(1-W)f$ and set $\overline{g}=\hat{g}+(1-W)f$. We simulate $\hat{g}$ with initial data $\hat{g}(0)=Wf$.
We assume $L_x$ is sufficiently large so that the boundary conditions in that direction are periodic in the sense that they are zero during the time of our computation.
The crucial assumption we make in the decomposition is that we can choose $L_y$ sufficiently large so that any minor disturbances made there, due to
our rendering of the boundary conditions we have made, do not have time to propagate to the main interior region during the time of our computation. 
As we see, we only partially manage to accomplish this, however the resulting simulations are sufficient for the purposes of our intended demonstration.
For completeness, we note that $\hat{g}$ satisfies the equation,
\begin{align*}
\hat{g}_t=&\;\hat{g}_{xxx}+3\mathrm{D}\hat{g}_y-6\pa_x\hat{g}^2-6[\mathrm{D}\hat{g},\hat{g}]+6W^\prime[\pa_x^{-1}f,\hat{g}]\\
&\;+6(1-W)\bigl(W\pa_x f^2+W[\mathrm{D}f,f]-\pa_x(\hat{g}f+f\hat{g})\bigr)\\
&\;+6(1-W)\bigl(W^\prime[\pa_x^{-1}f,f]-[\mathrm{D}f,\hat{g}]-[\mathrm{D}\hat{g},f]\bigr)\\
&\;-3\pa_x^{-1}\bigl(W^{\prime\prime}f+2W^\prime f_y\bigr).
\end{align*}
Kao and Kodama~\cite{KaoKodama} additionally rescale the $x$ and $y$ domains, which we do not invoke here.
We apply the FFT2-exp algorithm to the evolution equation for $\hat{g}$ above, noting that the nonlinear term $\mathrm{N}(v)$
now consists of all of the nonlinear and non-homogeneous terms on the right shown above---indeed all the terms on the right apart from the first two.
For the reasons outlined in Klein and Roidot~\cite{KleinRoidot}, and Klein and Saut~\cite{KleinSaut}, the solutions $\overline{g}$ we consider herein 
satisfy the constraint $\int_{\mathbb R} \overline{g}_{yy}\,\rd x=0$---or in practice over the $x$-domain $[-L_x/2,L_x/2]$---for computation times including $t=0$.
This is because for the data, $\overline{g}=\pa_x g$, and we assume zero boundary boundary data in the $x$ far field uniformly in $y$. 
When this constraint is not satisfied, we observe the infinite ``trenches'' mentioned in Klein and Roidot~\cite{KleinRoidot}. 

\begin{figure*}
  \begin{center}
    \includegraphics[width=8cm,height=7cm]{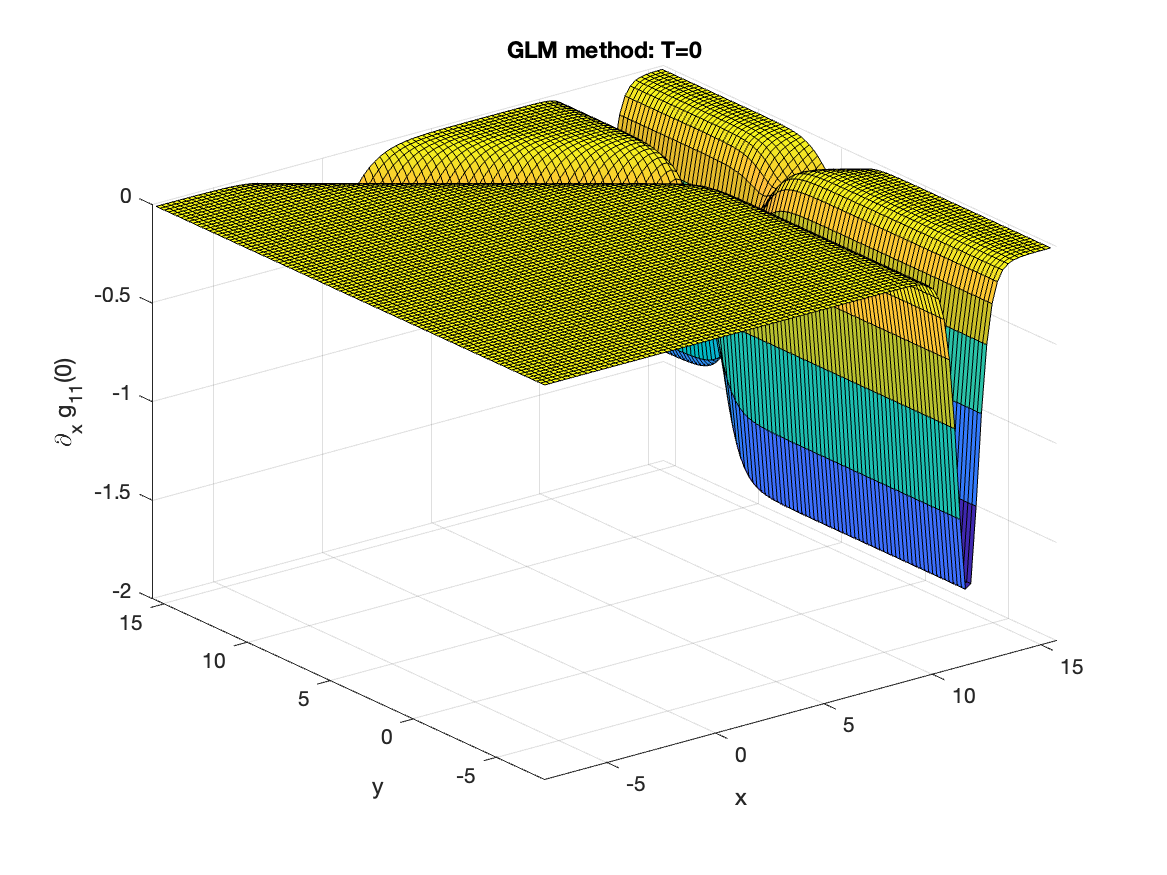}
    \includegraphics[width=8cm,height=7cm]{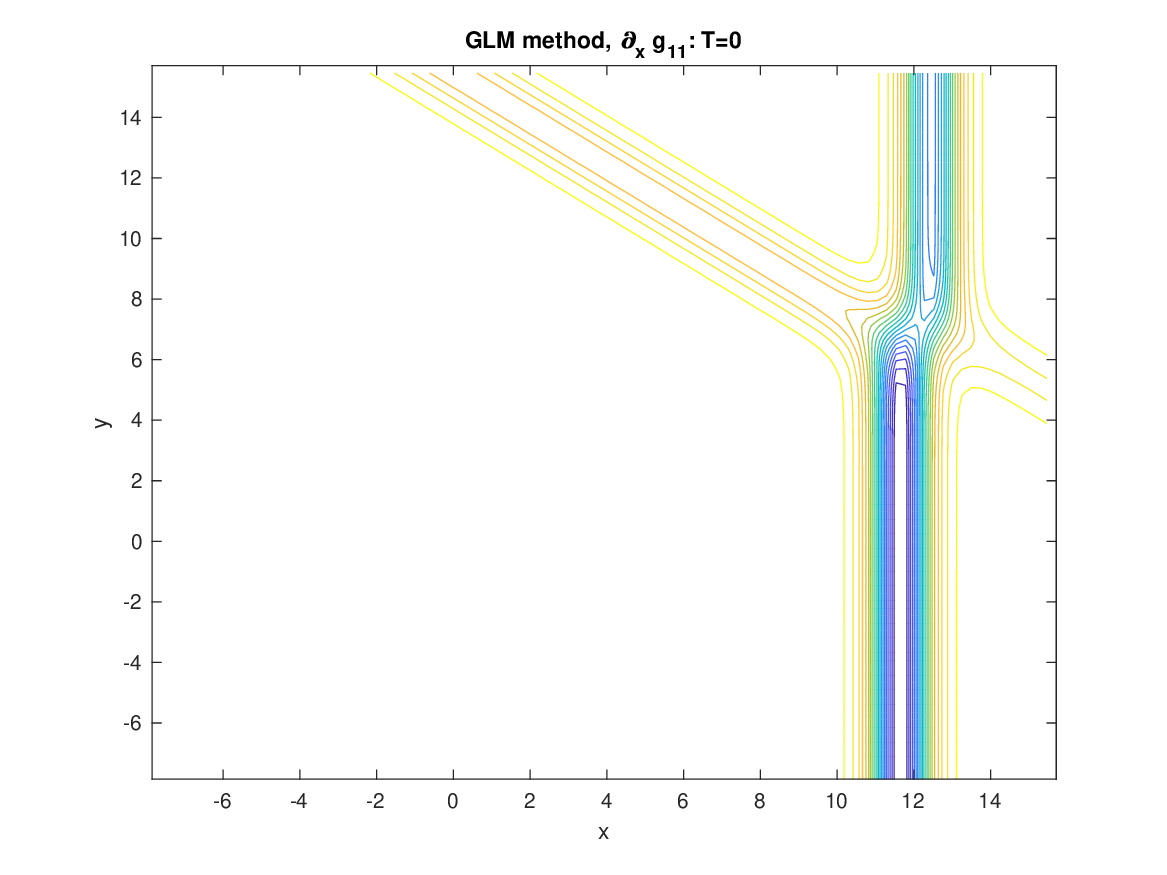}\\
    \includegraphics[width=8cm,height=7cm]{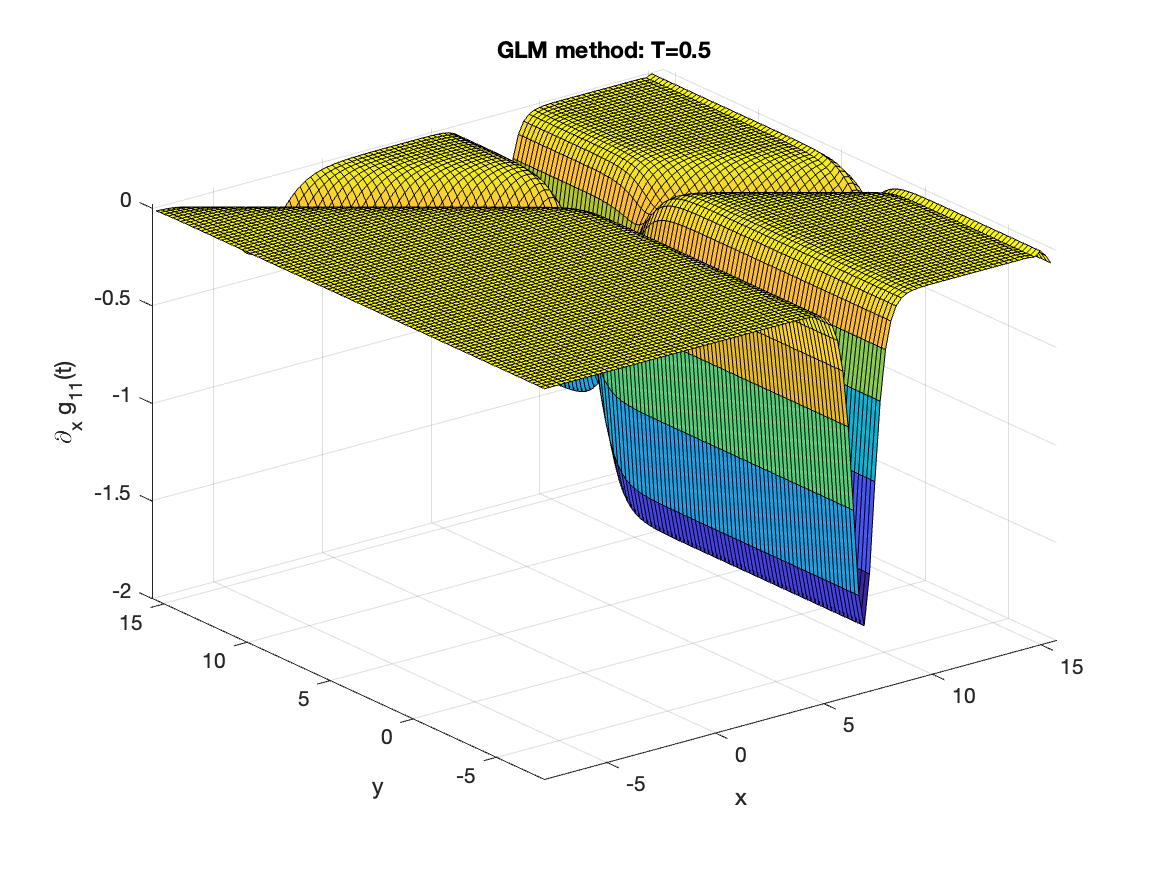}
    \includegraphics[width=8cm,height=7cm]{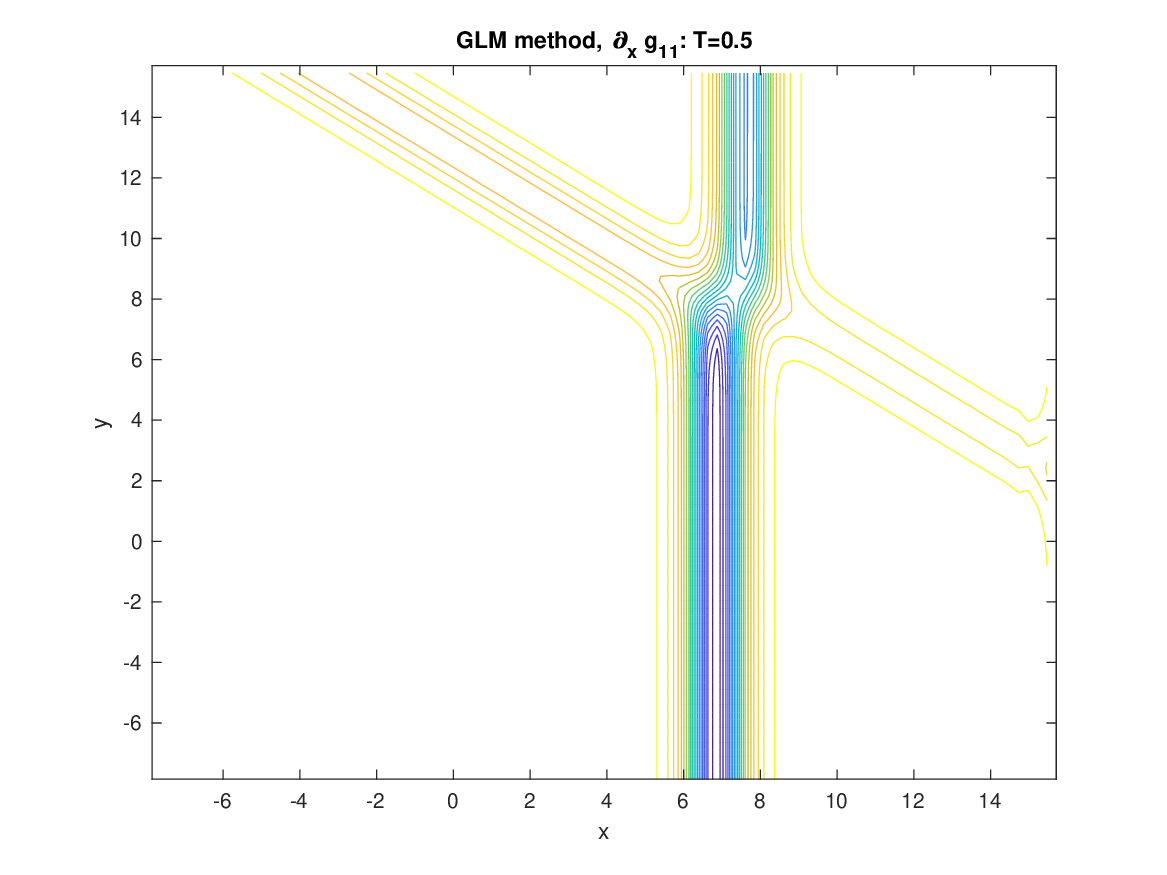}\\
    \includegraphics[width=8cm,height=7cm]{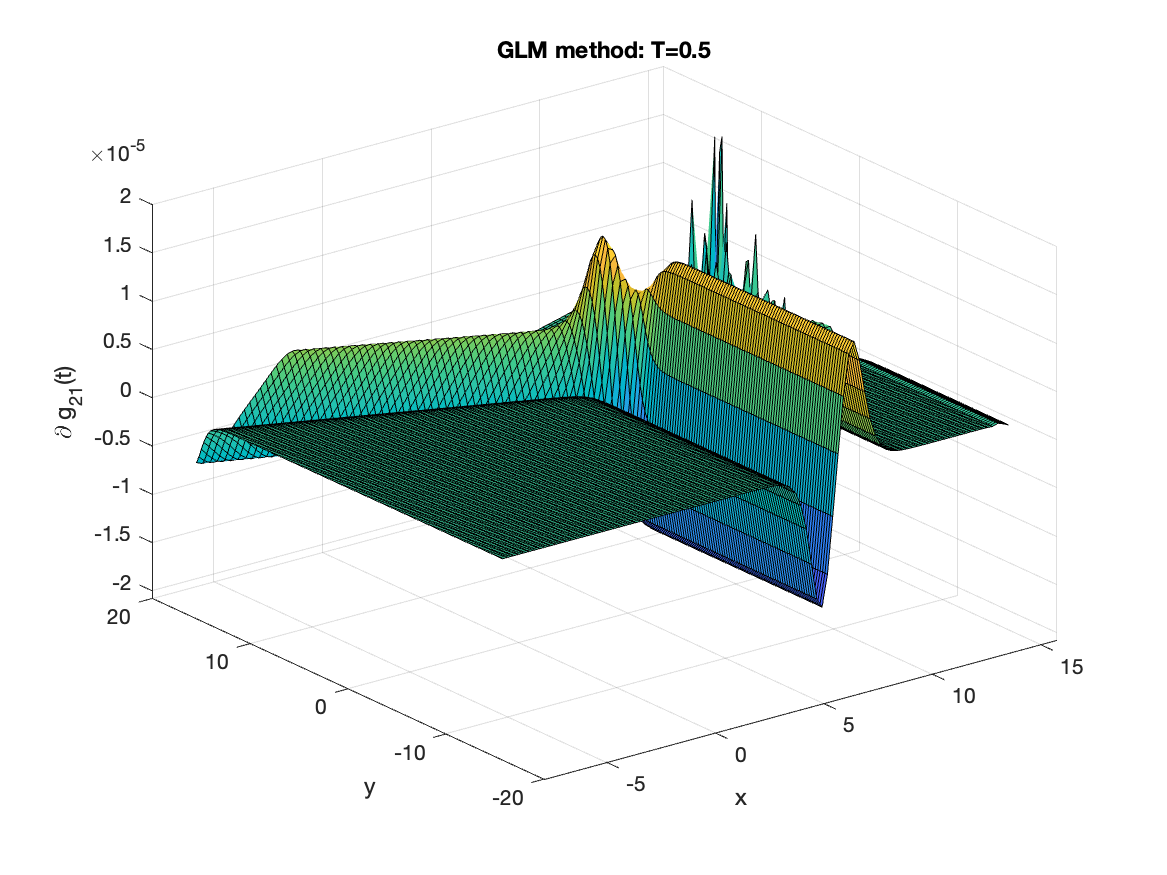}
    \includegraphics[width=8cm,height=7cm]{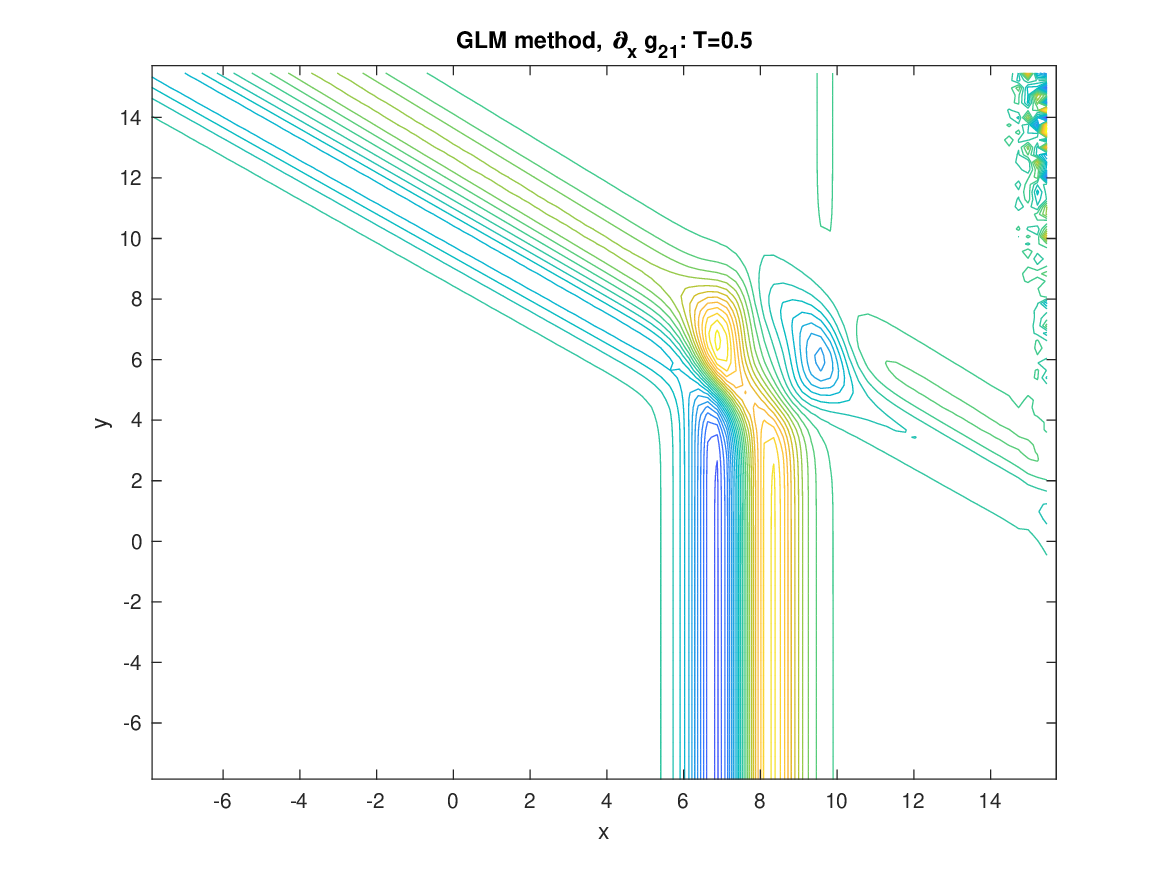}
  \end{center}
  \caption{We plot the component $\pa_xg_{11}$ for the two soliton-like interaction corresponding to Example~\ref{ex:twosolitoncase} at $t=0$ in the top panel set.
    In the middle and bottom panel sets, we plot the solution components $\pa_xg_{11}$ and $\pa_xg_{21}$,
    to the non-commutative KP equation computed, at time $t=0.5$. The right-hand panels give the corresponding contour plots.
    The solutions at $t=0$ and $t=0.5$ are computed by numerically solving the GLM equation~\eqref{eq:GLMexplicit}.
    Some numerical error can be observed in the top right of the $\pa_xg_{21}$ plots.}
\label{fig:twosolitoncase}
\end{figure*}

\begin{example}[Two soliton-like case]\label{ex:twosolitoncase}
In Figure~\ref{fig:twosolitoncase} we consider the evolution of $\overline{g}$ from the data $f$, resembling the interaction of two solitary waves.
The domain size is $L_x=L_y=10\pi$ and the number of modes in both the $x$- and $y$-directions is $M=2^7$. With $\epsilon\coloneqq10^{-6}$, we suppose,
\begin{equation*}
  A_1\coloneqq\begin{pmatrix} 1.55 & 1.0\,\epsilon \\ 0.9\,\epsilon & 1.45\end{pmatrix}
  \quad\text{and}\quad
  A_2\coloneqq\begin{pmatrix} 1.3 & 0.9\,\epsilon \\ 1.2\,\epsilon & 1.4\end{pmatrix}
\end{equation*}
with $B_1=A_1$ and $B_2=O$, the zero $2\times 2$ matrix. We also set $\Lambda_j\coloneqq A_j^2-B_j^2$ and $\Omega_j\coloneqq4(A_j^3+B_j^3)$ for $j=1,2$.
If we substiute $(A_1,B_1,\Lambda_1,\Omega_1)$ for $(A,B,\Lambda,\Omega)$ in the scattering data \eqref{eq:scatteringdata} to generate $p_1$,
and numerically solve the GLM equation when $t=0$, as described above in~\ref{sec:solitons} and Example~\ref{ex:soliton},
we generate the soliton solution impinging the bottom boundary in the top panels in Figure~\ref{fig:twosolitoncase}. 
Similarly, suppose substituting $(A_2,B_2,\Lambda_2,\Omega_2)$ instead generates $p_2$ in \eqref{eq:scatteringdata}.
Solving the GLM equation numerically generates the soliton solution impinging the left section of the top boundary in the top panels in the Figure.
Note, the matrices $A_1$ and $B_1$ trivially commute, as do $A_2$ and $B_2$. However $A_1$ and $A_2$ do not commute and thus neither do $p_1$ and $p_2$.
If we then solve the GLM equation numerically assuming the scattering data is $p=p_1+p_2$, we generate the full solution interaction shown in the top panel set in the Figure. 
Note we only display the $x$- and $y$-domains in the range $[-L_x/4,L_x/2]$ and $[-L_y/4,L_y/2]$, and the $x$-coordinate in $p$ is shifted so the interaction occurs in this region.
This initial solution represents a two soliton interaction, in this non-commutative context.
Subsequently we numerically solve the GLM equation at the single time $t=0.5$.
This generates the solution forms for $\pa_x g_{11}$ and $\pa_x g_{21}$ shown in Figure~\ref{fig:twosolitoncase}---we only show these two components for brevity.
\end{example}

\begin{example}[Three soliton-like case]\label{ex:threesolitoncase}
  In Figures~\ref{fig:g11} and \ref{fig:allgs} we numerically compute the interaction of three solitary-like waves in the non-commutative setting.
  The domain size is $L_x=L_y=10\pi$ and the number of nodes/modes in both the $x$- and $y$-directions is $M=2^8$.
  Suppose $A_1$ and $A_2$ are the same non-commuting matrices as in Example~\ref{ex:twosolitoncase},
  and that $A_3=O$, $B_1=A_1$, $B_2=O$ and $B_3=A_2$. As in Example~\ref{ex:twosolitoncase}, we set $\Lambda_j$ and $\Omega_j$
  to be the appropriate quantities shown therein, though now for $j=1,2,3$. We then generate the scattering data $p_j$, for $j=1,2,3$,
  by respectively substituting the coefficient sets $(A_j,B_j,\Lambda_j,\Omega_j)$ for $(A,B,\Lambda,\Omega)$ in the scattering data \eqref{eq:scatteringdata}.
  For each $j=1,2,3$, the pairs $A_j$ and $B_j$ trivially commute. However $A_1$ and $A_2$ do not commute so that while $p_2$ and $p_3$ commute,
  $p_1$ does not commute with either $p_2$ or $p_3$. The signature of the individual solitons generated by $p_1$, $p_2$ and $p_3$ can
  be observed in the top panel set in Figure~\ref{fig:g11}. The solution soliton-like component impinging middle to right section of the bottom boundary is generated by $p_1$,
  the soliton-like component impinging the left section of the top boundary corresponds to $p_2$, while the component impinging the left section
  of the bottom boundary corresponds to $p_3$.
  The top set of figures in Figure~\ref{fig:g11} is generated by solving the GLM equation numerically assuming the scattering data is $p=p_1+p_2+p_3$.
  The time therein is set to be $t=0$.
  We shift the $x$-coordinate in $p$ to ensure the full non-commutative solution interaction, shown in the top panel set in the Figure, occurs 
  in the displayed region $(x,y)\in[-L_x/4,L_x/2]\times[-L_y/2,L_y/2]$. For brevity we only show the $\pa_x g_{11}$ component. 
  The middle panel set in Figure~\ref{fig:g11} show the solution $\pa_x g_{11}$, corresponding to the mixed scattering data $p$,
  computed using the GLM method at time $t=0.5$. The bottom panel set in the figure shows the solution $\overline{g}_{11}=\pa_x g_{11}$
  computed using the FFT2-exp method up to the time $t=0.5$ using $5000$ time steps. The right panels in the Figure show the contour
  plots corresponding to the left panel plots. Figure~\ref{fig:allgs} shows, respectively in the top, middle and bottom panel sets,
  the other three components $\pa_x g_{12}$, $\pa_x g_{21}$ and $\pa_x g_{22}$ computed at $t=0.5$. The left panels show these 
  components in the solution computed using the GLM method, while the right panels show the same three components of the solution computed using the FFT2-exp method.   
  In the bottom panel set in Figure~\ref{fig:g11} and all three right panels in Figure~\ref{fig:allgs}, the limits of the FFT2-exp method combined with
  the window method can be observed. Perturbations in the numerical solution due to the non-periodic boundary conditions eventually begin to impact
  the interior of the computational domain. Indeed, the numerical solutions are smoother for smaller times.
  However the numerical simulations demonstrate that solutions to the non-commutative KP equation~\eqref{eq:KPsingform} can be relatively easily
  computed by numerically solving the corresponding GLM equation for suitable scattering data. Also see Remark~\ref{rmk:GLMFFT2}.
\end{example}

\begin{figure*}
  \begin{center}
    \includegraphics[width=8cm,height=7cm]{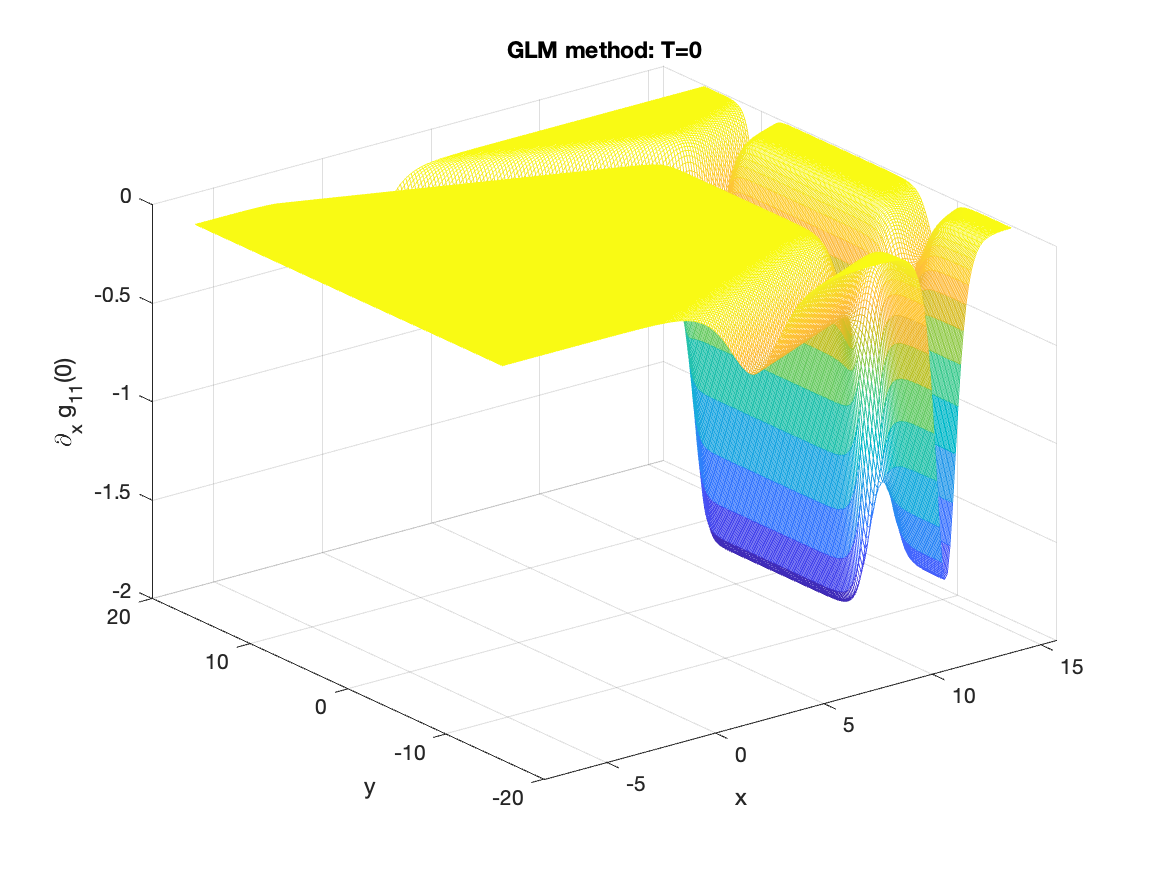}
    \includegraphics[width=8cm,height=7cm]{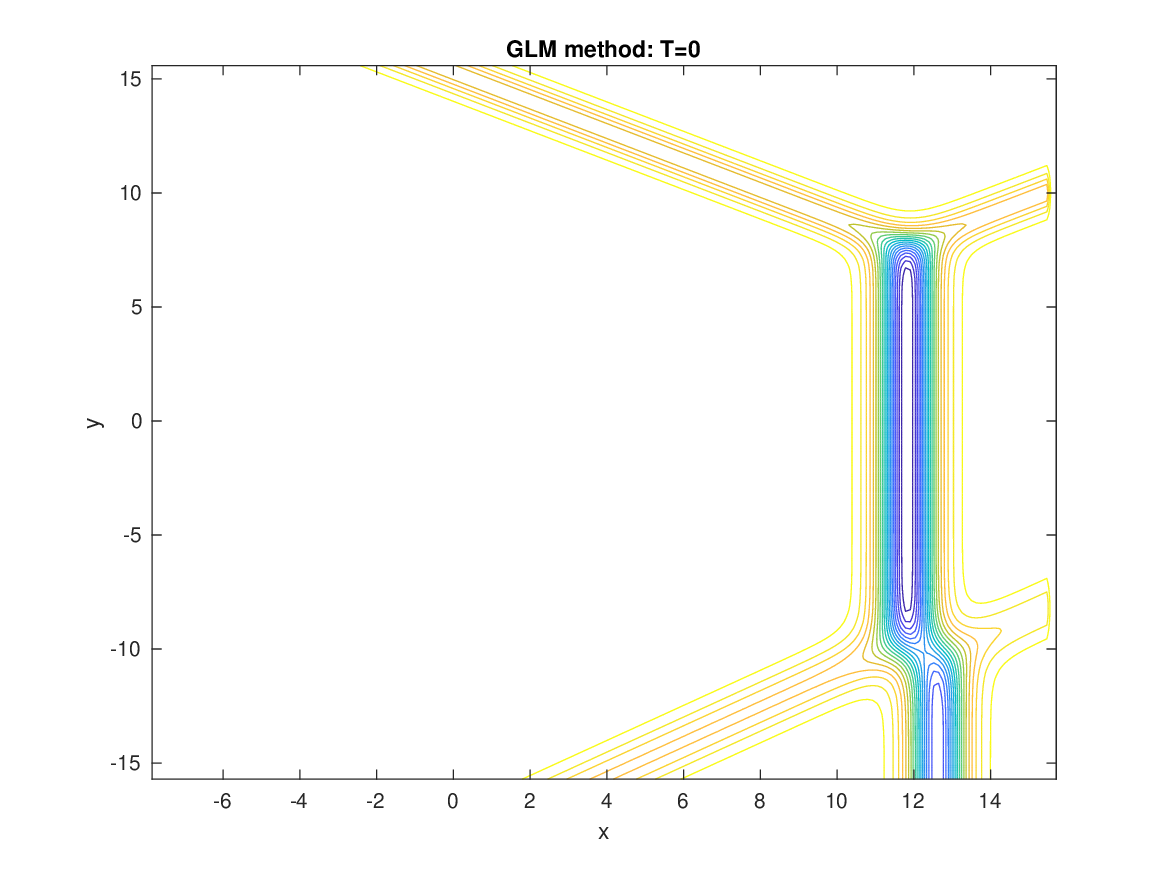}\\
    \includegraphics[width=8cm,height=7cm]{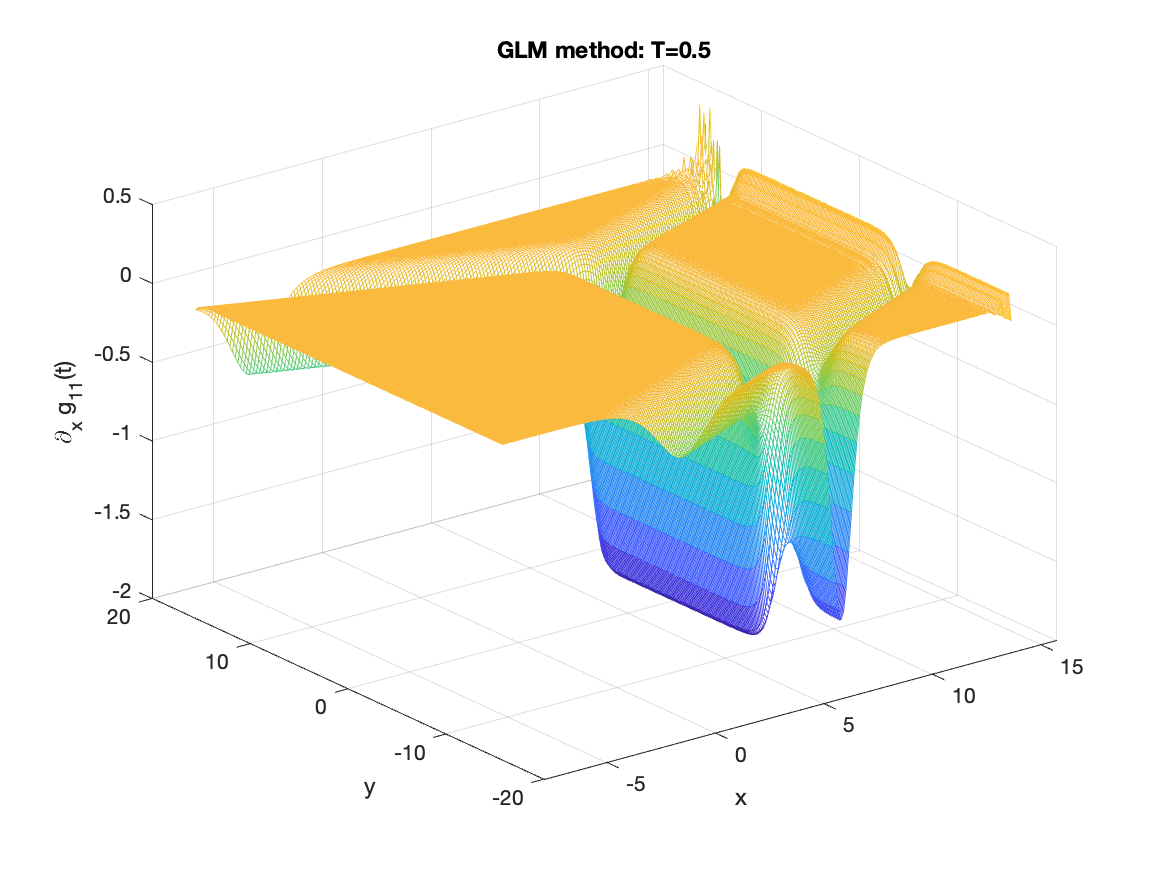}
    \includegraphics[width=8cm,height=7cm]{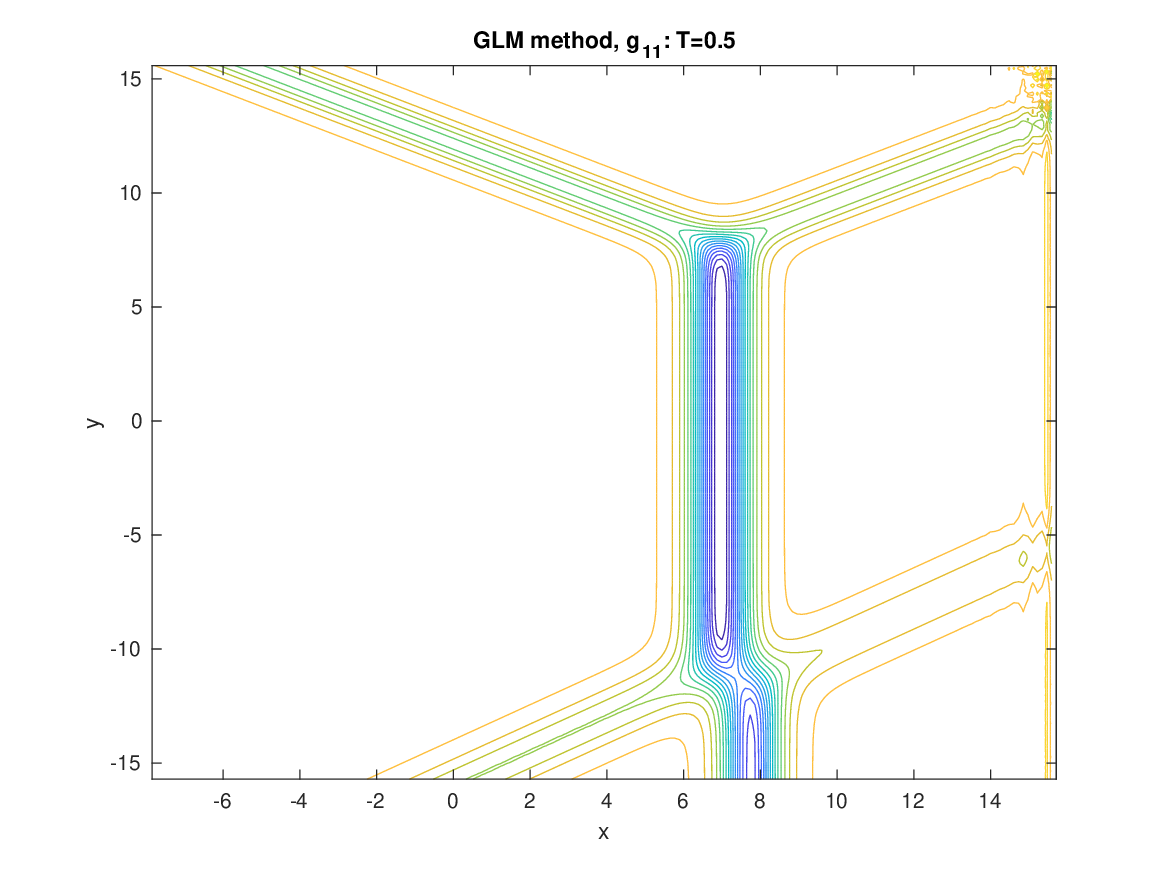}\\
    \includegraphics[width=8cm,height=7cm]{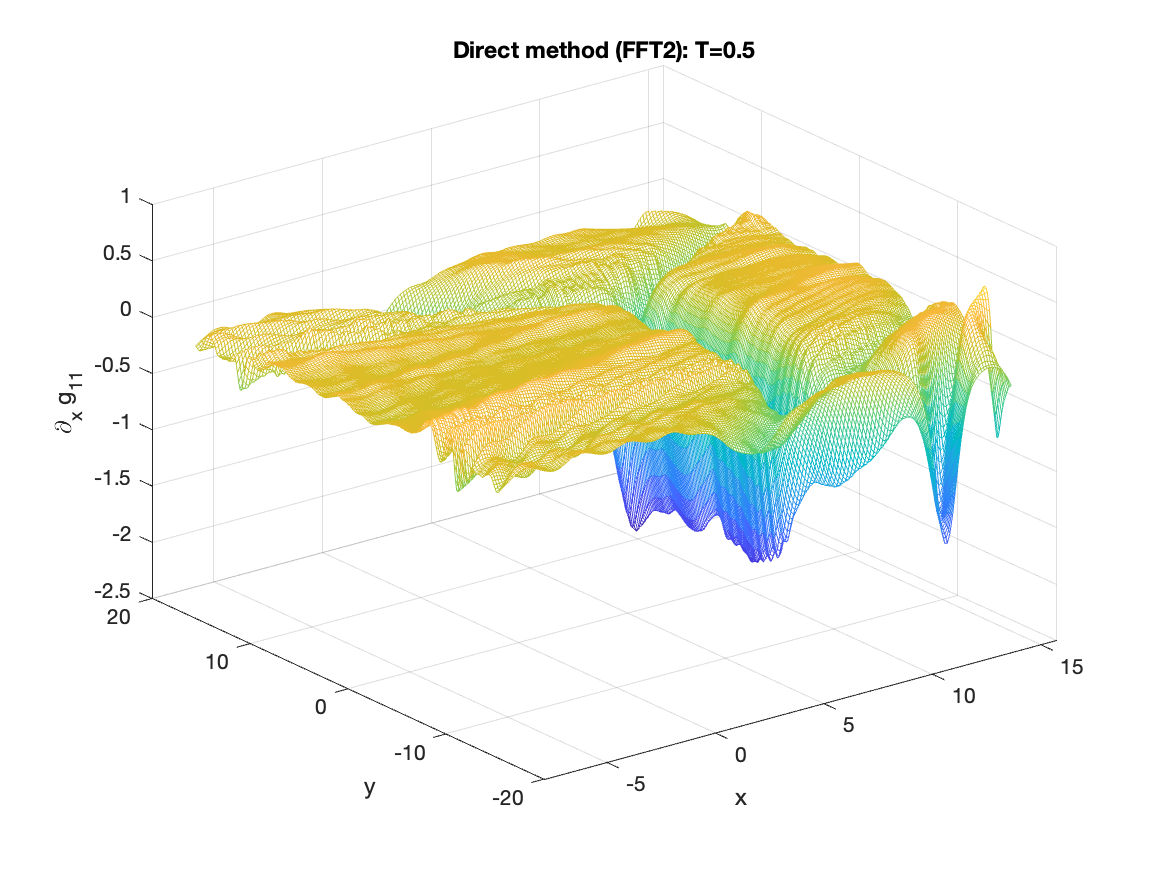}
    \includegraphics[width=8cm,height=7cm]{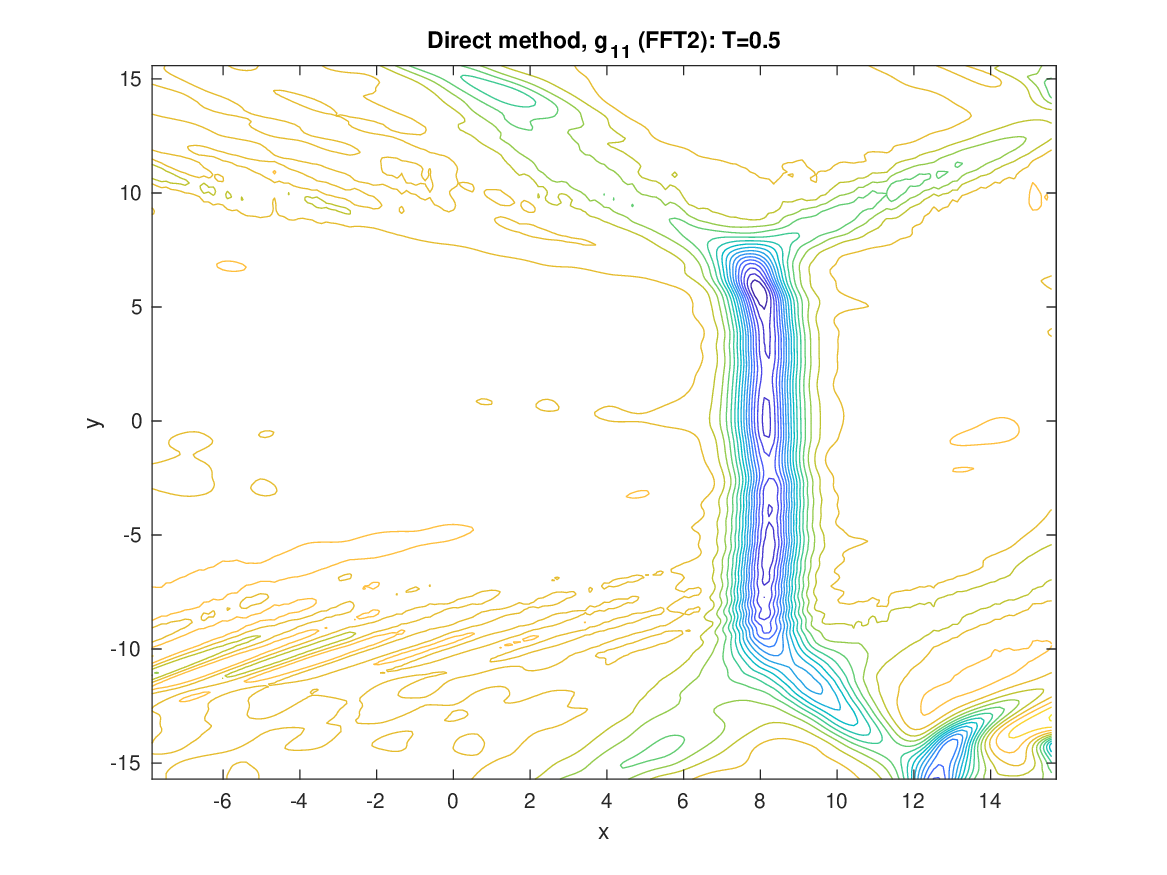}
  \end{center}
  \caption{The top panel set shows the component $\pa_xg_{11}$ for the three soliton-like interaction corresponding to Example~\ref{ex:threesolitoncase} at $t=0$.
    In the middle and bottom panel sets, we plot the solution component $\pa_xg_{11}$ to the non-commutative KP equation at time $t=0.5$, respectively
    computed using the GLM method and the FTT2-exp method. The right-hand panels give the corresponding contour plots.}
\label{fig:g11}
\end{figure*}

\begin{figure*}
  \begin{center}
  \includegraphics[width=8cm,height=7cm]{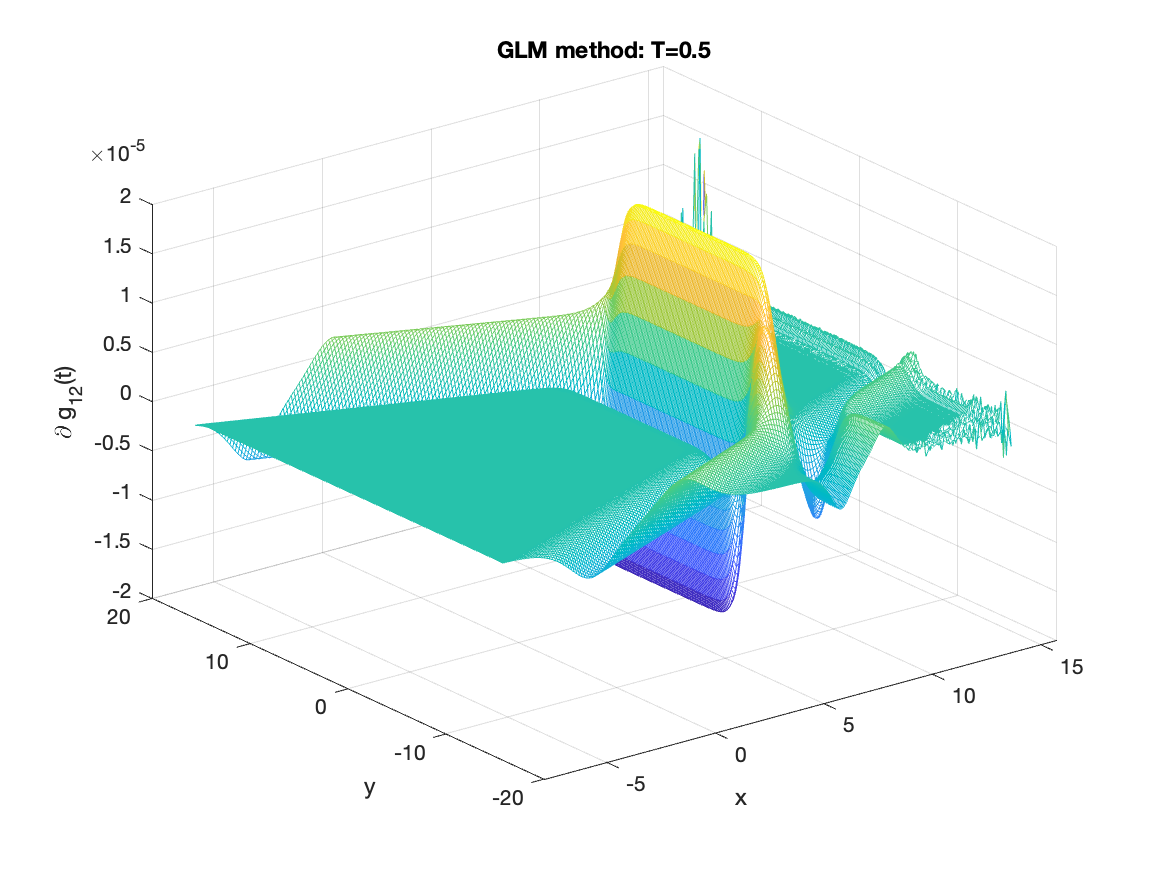}
  \includegraphics[width=8cm,height=7cm]{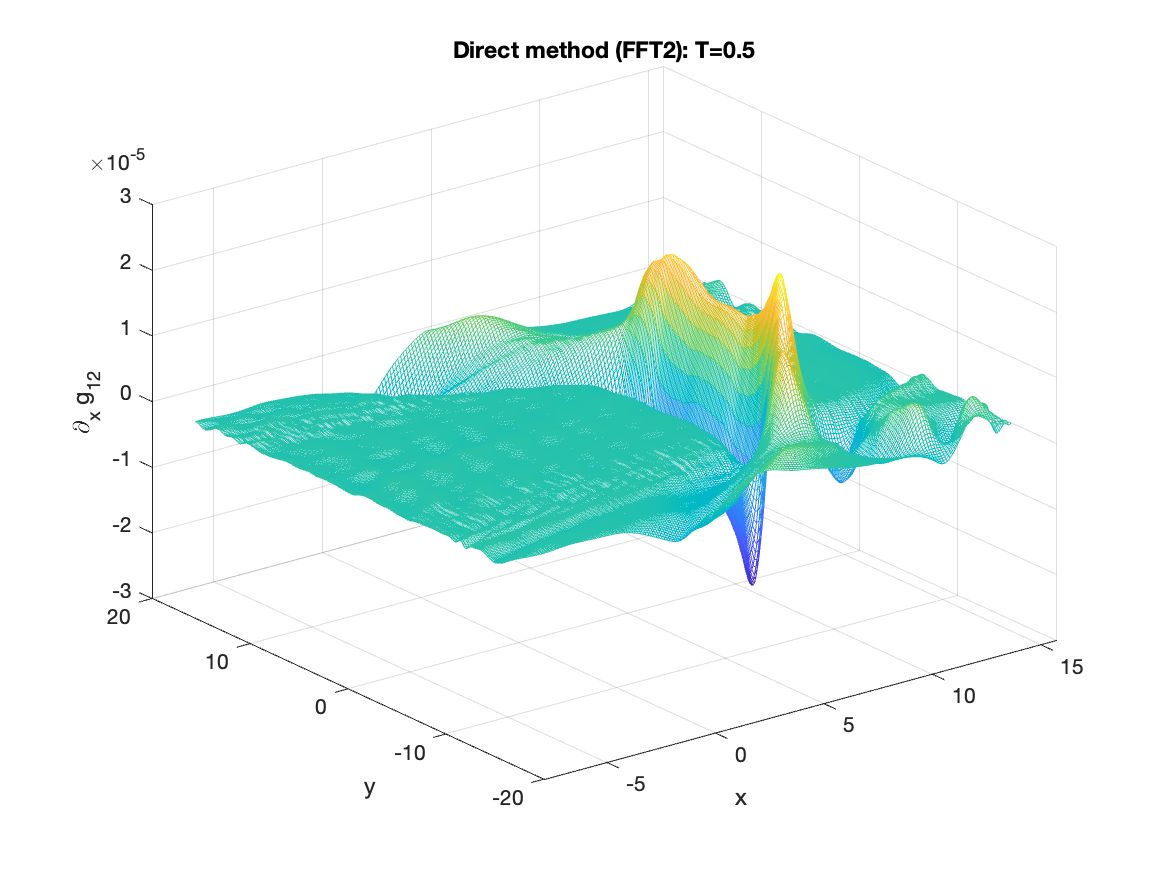}\\
  \includegraphics[width=8cm,height=7cm]{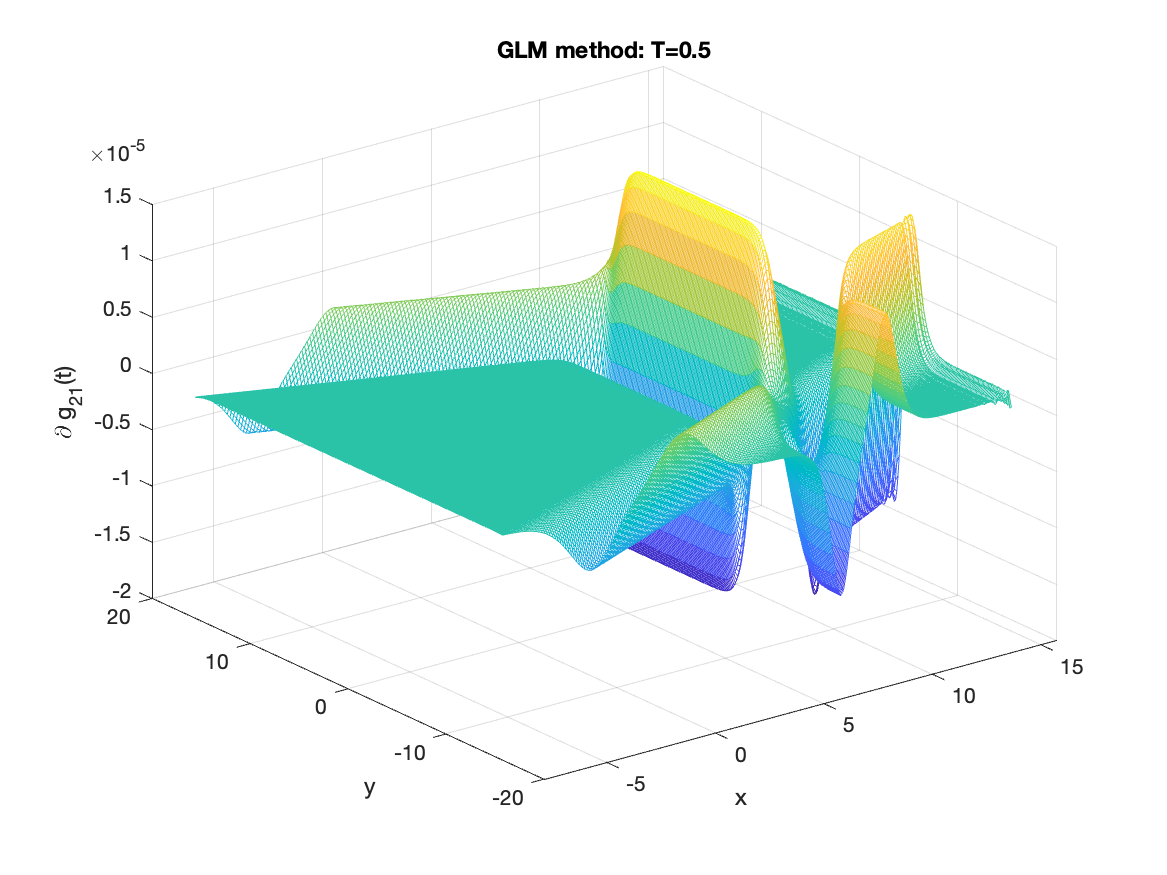}
  \includegraphics[width=8cm,height=7cm]{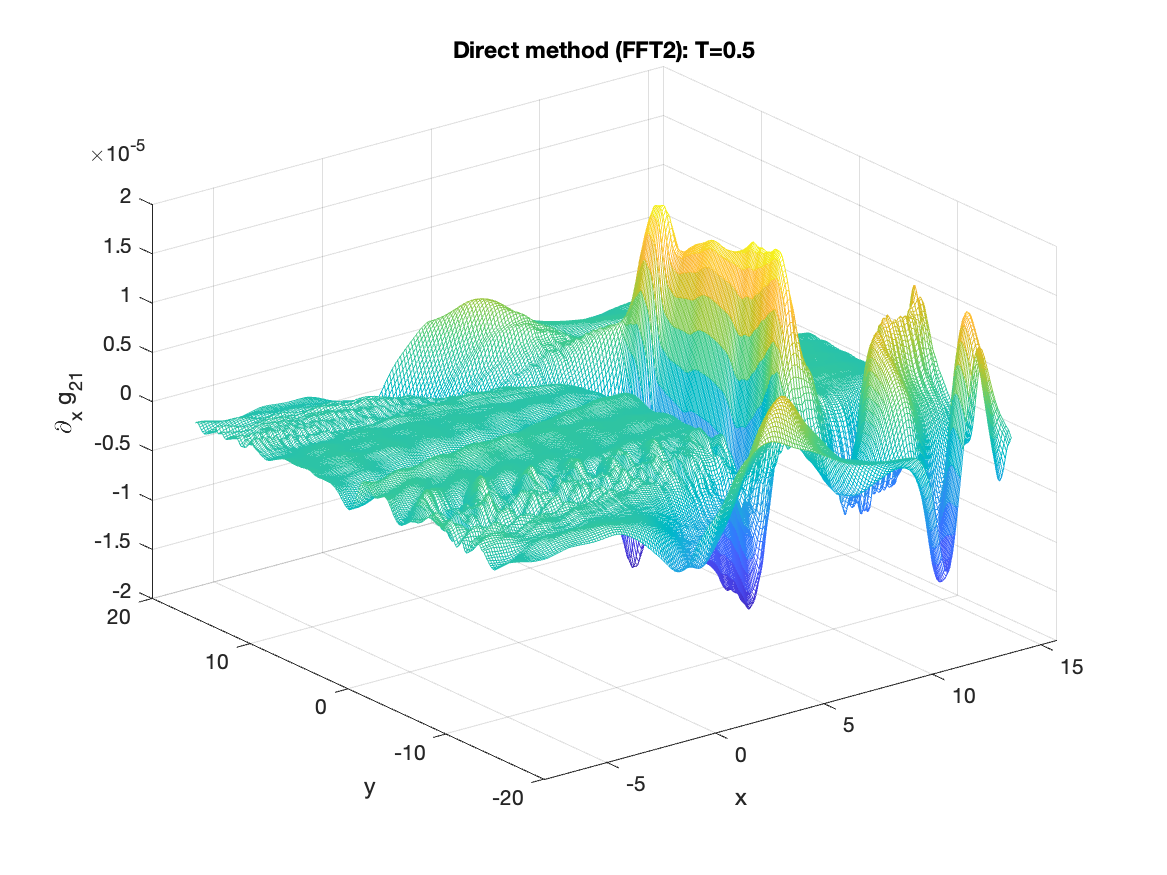}\\
  \includegraphics[width=8cm,height=7cm]{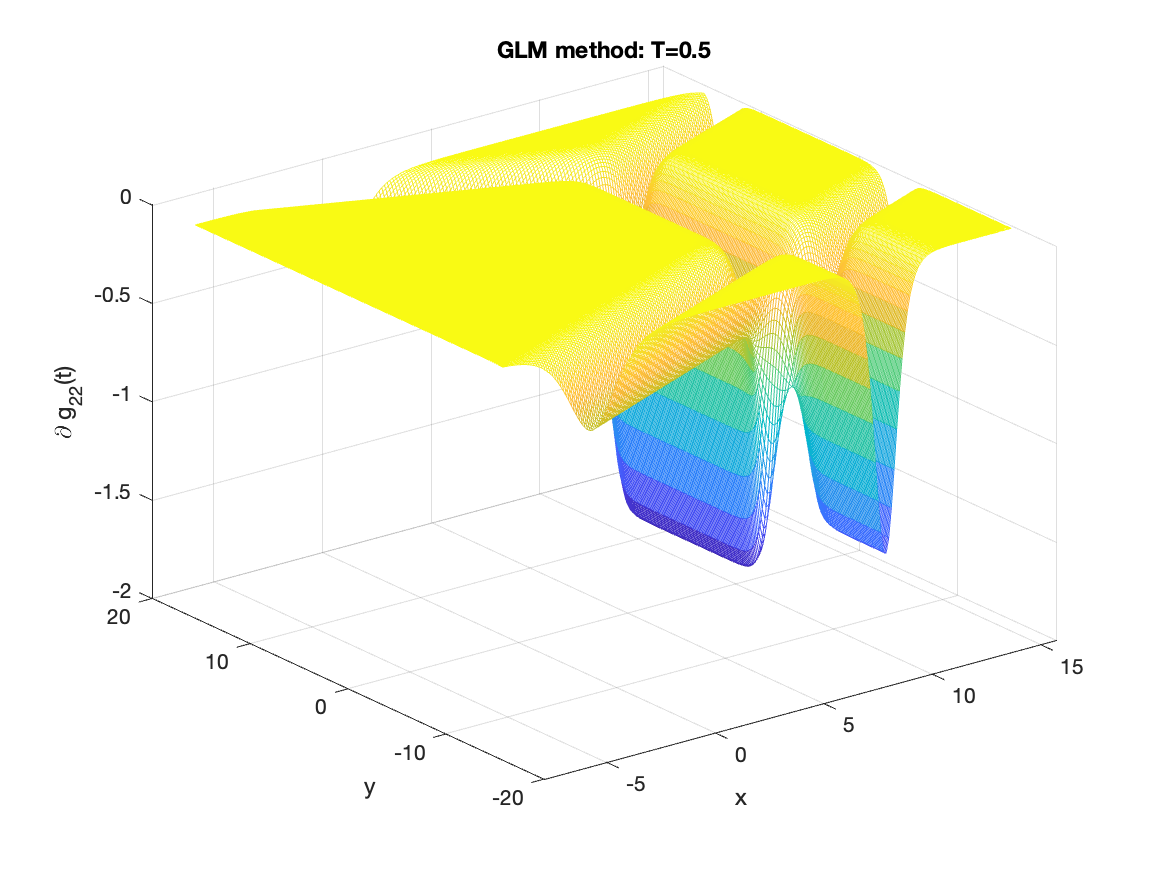}
  \includegraphics[width=8cm,height=7cm]{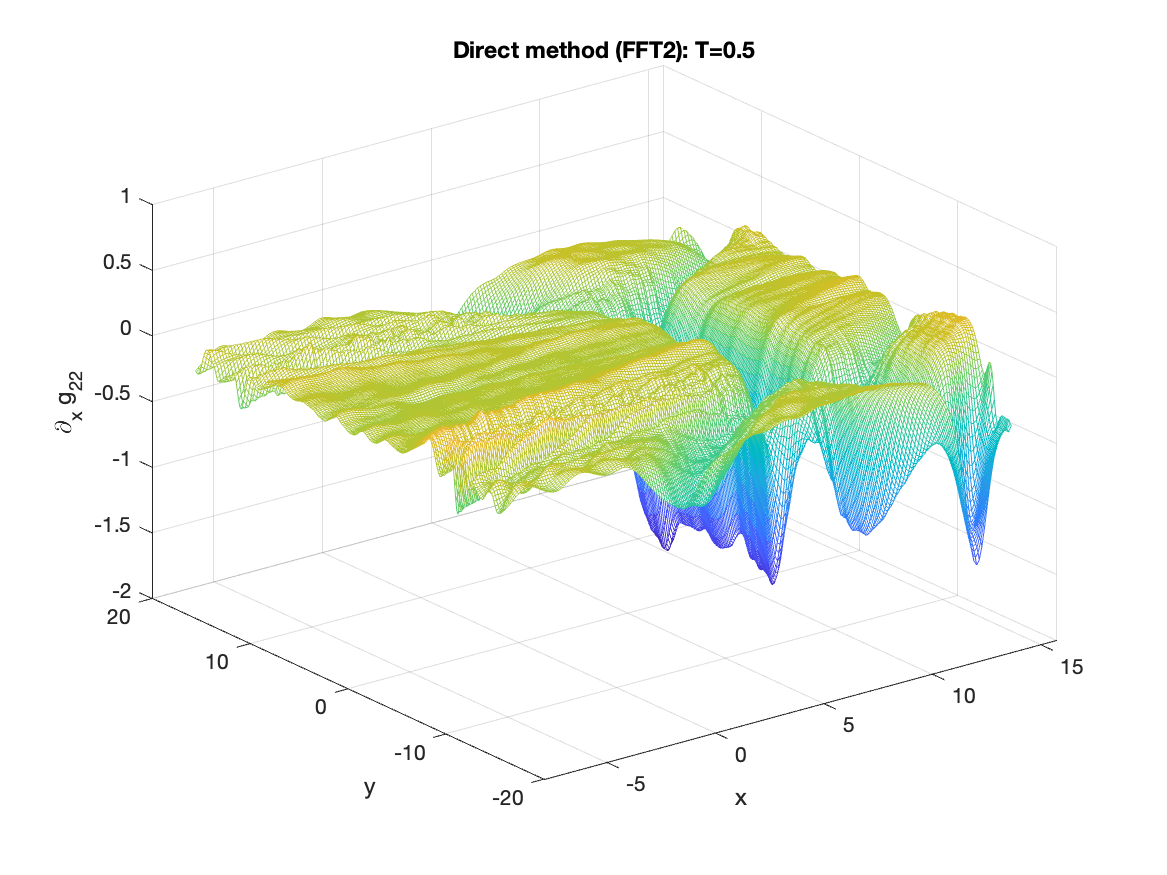}
  \end{center}
  \caption{The top, middle and bottom panel sets respectively show the components $\pa_xg_{12}$, $\pa_xg_{21}$ and $\pa_xg_{22}$
    for the three soliton-like interaction corresponding to Example~\ref{ex:threesolitoncase} at time $t=0.5$.
    The left panels show the components generated when the solution to the non-commutative KP equation is computed using the GLM method,
    while the right panels show the corresponding components generated via the FFT2-exp method.}
\label{fig:allgs}
\end{figure*}

\begin{remark}\label{rmk:GLMFFT2}
  In our computations of approximate solutions to the non-commutative KP equation above, we always assumed the scattering
  data to have the exponential form \eqref{eq:scatteringdata} or to be a linear combination of such forms,
  where the individual scattering data exponential functions do not commute. More generally and in practice,
  we would like to choose an arbitrary matrix-valued scattering data function $p_0=p_0(z,\zeta)$, and generate the scattering data $p=p(z+x,\zeta+x;y,t)$ from $p_0$.
  If $\mathcal F(p_0)=\bigl(\mathcal F(p_0)\bigr)(k_z,k_\zeta)$ represents the two-dimensional Fourier transform of $p_0$, then,
  with $\nu_z\coloneqq 2\pi\mathrm{i}k_z$ and $\nu_\zeta\coloneqq 2\pi\mathrm{i}k_\zeta$,
  \begin{equation*}
      {\mathcal F}^{-1}\Bigl(\mathrm{e}^{t\bigl((\nu_z+\nu_\zeta)x+(\nu_z^2-\nu_\zeta^2)y+4(\nu_z^3+\nu_\zeta^3)\bigr)}\bigl(\mathcal F(p_0)\bigr)(k_z,k_\zeta)\Bigr),
  \end{equation*}
  generates the time-evolved scattering data function $p=p(z+x,\zeta+x;y,t)$ satisfying the constraints \eqref{eq:constraints} and \eqref{eq:secondlinearform}. 
  In numerical computations we take $\nu_z\coloneqq 2\pi\mathrm{i}k_z/L_x$ and $\nu_\zeta\coloneqq 2\pi\mathrm{i}k_\zeta/L_y$.
  However, in practice, since the wavenumbers $\nu_z$ and $\nu_\zeta$ are purely imaginary, the exponential factor above involving `$(\nu_z^2-\nu_\zeta^2)y$'
  can grow super-exponentially, even though $y\in[-L_y/2, L_y/2]$. An assumption about the suitable decay of $\mathcal F(p_0)$ for large $|k_z|$ and $|k_\zeta|$ is required.
\end{remark}

\section{Addendum to proof of Theorem~\ref{thm:mKP}}\label{sec:finishproof2ndThm}
Herein we establish the additional step required to complete the proof of Theorem~\ref{thm:mKP},
namely to show that $h\coloneqq[VP_{[1,0]}V^\dag]$, $g\coloneqq[V]$ and $f\coloneqq-\{VP_{[1,0]}V^\dag\}$ satisfy the evolution equation for $f$
in the lifted mKP equations~\eqref{eq:triplemKPh}--\eqref{eq:triplemKPf}.
We simply enumerate all the terms in~\eqref{eq:triplemKPf}.
It is then straightforward to check they combine in precisely the correct manner to satisfy~\eqref{eq:triplemKPf}.
First, using~\eqref{eq:gsquared}, we observe,
\begin{align*}
\mathrm{D}\pa_y(f+g^2)=&\;\pa_y\{VP_{[2,0]}V\}\\
=&\;\{V(P_{4,0}-2P_{2,2}+P_{0,4})V\}\\
&\;+2\{VP_{[2,0]}VP_{[2,0]}V\}.
\end{align*}
Second, we observe that,
\begin{align*}
  \tfrac13(f_t-f_{xxx})=&\;-\tfrac43\{V(P_{3,0}+P_{0,3})VP_{[1,0]}V^\dag\}\\
                       &\;+\tfrac43\{VP_{[1,0]}V^\dag(P_{3,0}+P_{0,3})V^\dag\}\\
                       &\;-\{V(P_{4,0}-2P_{3,1}+2P_{1,3}-P_{0,4})V^\dag\}\\
                       &\;+2\{VP_{\hat 1}VP_{\hat 1}VP_{\hat 1}VP_{[1,0]}V^\dag\}\\
                       &\;+\{VP_{\hat 2}VP_{\hat 1}VP_{[1,0]}V^\dag\}\\
                       &\;+\{VP_{\hat 1}VP_{\hat 2}VP_{[1,0]}V^\dag\}\\
                       &\;+2\{VP_{\hat 1}VP_{\hat 1}VP_{[2,0]}V^\dag\}\\
                       &\;-2\{VP_{\hat 1}VP_{\hat 1}VP_{[1,0]}V^\dag P_{\hat 1}V^\dag\}\\
                       &\;+\tfrac13\{VP_{\hat 3}VP_{[1,0]}V^\dag\}\\
                       &\;+\{VP_{\hat 2}VP_{[2,0]}V^\dag\}\\
                       &\;-\{VP_{\hat 2}VP_{[1,0]}V^\dag P_{\hat 1}V^\dag\}\\
                       &\;+\{VP_{\hat 1}V(P_{3,0}+P_{2,1}-P_{1,2}-P_{0,3})V^\dag\}\\
                       &\;-2\{VP_{\hat 1}VP_{[2,0]}V^\dag P_{\hat 1}V^\dag\}\\
                       &\;+2\{VP_{\hat 1}VP_{[1,0]}V^\dag P_{\hat 1}V^\dag P_{\hat 1}V^\dag\}\\ 
                       &\;-\{V(P_{3,0}+P_{2,1}-P_{1,2}-P_{0,3})V^\dag P_{\hat 1}V^\dag\}\\
                       &\;-\{VP_{[2,0]}V^\dag P_{\hat 2}V^\dag\}\\
                       &\;+2\{VP_{[2,0]}V^\dag P_{\hat 1}V^\dag P_{\hat 1}V^\dag\}\\
                       &\;-2\{VP_{[1,0]}V^\dag P_{\hat 1}V^\dag P_{\hat 1}V^\dag P_{\hat 1}V^\dag\}\\
                       &\;+\{VP_{[1,0]}V^\dag P_{\hat 2}V^\dag P_{\hat 1}V^\dag\}\\
                       &\;+\{VP_{[1,0]}V^\dag P_{\hat 1}V^\dag P_{\hat 2}V^\dag\}\\
                       &\;-\{VP_{\hat 1}VP_{[1,0]}V^\dag P_{\hat 2}V^\dag\}\\
                       &\;-\tfrac13\{VP_{[1,0]}V^\dag P_{\hat 3}V^\dag\}.
\end{align*}
Third, using~\eqref{eq:gsquared} and Lemma~\eqref{lemma:Poppeprodforms}, we observe that,
\begin{align*}
  \{f_x,f+g^2\}=&\;\{VP_{\hat 1}V(P_{1,1}-P_{0,2})V^\dag P_{\hat 1}V^\dag\}\\
                &\;-\{VP_{\hat 1}V(P_{1,1}-P_{0,2})VP_{\hat 1}V\}\\
                &\;+\{VP_{\hat 1}VP_{[1,0]}V^\dag(P_{2,0}+P_{1,1})V^\dag\}\\
                &\;-\{VP_{\hat 1}VP_{[1,0]}V^\dag(P_{2,0}+P_{1,1})V\}\\
                &\;-\{VP_{\hat 1}VP_{[1,0]}V^\dag P_{\hat 1}V^\dag P_{\hat 1}V^\dag\}\\
                &\;-\{VP_{\hat 1}VP_{[1,0]}V^\dag P_{[1,0]}VP_{\hat 1}V\}\\
                &\;+\{V(P_{2,1}-P_{0,3})V^\dag P_{\hat 1}V^\dag\}\\
                &\;-\{V(P_{2,1}-P_{0,3})VP_{\hat 1}V\}\\
                &\;+\{VP_{[2,0]}V^\dag(P_{2,0}+P_{1,1})V^\dag\}\\
                &\;-\{VP_{[2,0]}V^\dag(P_{2,0}+P_{1,1})V\}\\
                &\;-\{VP_{[2,0]}V^\dag P_{\hat 1}V^\dag P_{\hat 1}V^\dag\}\\
                &\;-\{VP_{[2,0]}V^\dag P_{[1,0]}VP_{\hat 1}V\}\\
                &\;-\{VP_{[1,0]}V^\dag(P_{1,1}+P_{0,2})V^\dag P_{\hat 1}V^\dag\}\\
                &\;+\{VP_{[1,0]}V^\dag(P_{1,1}+P_{0,2})VP_{\hat 1}V\}\\
                &\;-\{VP_{[1,0]}V^\dag P_{\hat 1}V^\dag(P_{2,0}+P_{1,1})V^\dag\}\\
                &\;+\{VP_{[1,0]}V^\dag P_{\hat 1}V^\dag(P_{2,0}+P_{1,1})V\}\\
                &\;+\{VP_{[1,0]}V^\dag P_{\hat 1}V^\dag P_{\hat 1}V^\dag P_{\hat 1}V^\dag\}\\
                &\;+\{VP_{[1,0]}V^\dag P_{\hat 1}V^\dag P_{[1,0]}VP_{\hat 1}V\}\\
                &\;-\{V(P_{1,1}+P_{0,2})VP_{\hat 1}V P_{[1,0]}V^\dag\}\\
                &\;-\{V(P_{1,1}+P_{0,2})V^\dag P_{\hat 1}V^\dag P_{[1,0]}V\}\\
                &\;-\{VP_{\hat 1}V(P_{2,0}+P_{1,1})V P_{[1,0]}V^\dag\}\\
                &\;-\{VP_{\hat 1}V(P_{2,0}+P_{1,1})V^\dag P_{[1,0]}V\}\\
                &\;-\{VP_{\hat 1}VP_{\hat 1}VP_{\hat 1}VP_{[1,0]}V^\dag\}\\
                &\;+\{VP_{\hat 1}VP_{[1,0]}V^\dag P_{\hat 1}V^\dag P_{[1,0]}V\}\\
                &\;-\{V(P_{1,1}+P_{0,2})VP_{[2,0]}V^\dag\}\\
                &\;+\{V(P_{1,1}+P_{0,2})V^\dag P_{[2,0]}V\}\\
                &\;-\{VP_{\hat 1}V(P_{3,0}-P_{1,2})V^\dag\}\\
                &\;+\{VP_{\hat 1}V(P_{3,0}-P_{1,2})V\}\\
                &\;-\{VP_{\hat 1}VP_{\hat 1}VP_{[2,0]}V^\dag\}\\
                &\;-\{VP_{\hat 1}VP_{\hat 1}V^\dag P_{[2,0]}V\}\\
                &\;+\{V(P_{1,1}+P_{0,2})VP_{[1,0]}V^\dag P_{\hat 1}V^\dag\}\\
                &\;+\{V(P_{1,1}+P_{0,2})V^\dag P_{[1,0]}VP_{\hat 1}V\}\\
                &\;+\{VP_{\hat 1}V(P_{2,0}-P_{1,1})V^\dag P_{\hat 1}V^\dag\}\\
                &\;+\{VP_{\hat 1}V(P_{2,0}-P_{1,1})VP_{\hat 1}V\}\\
                &\;+\{VP_{\hat 1}VP_{\hat 1}VP_{[1,0]}V^\dag P_{\hat 1}V^\dag\}\\
                &\;-\{VP_{\hat 1}VP_{[1,0]}V^\dag P_{[1,0]}VP_{\hat 1}V\}.
\end{align*}
Fourth, using Lemma~\eqref{lemma:Poppeprodforms}, we observe that,
\begin{align*}
  \{g,h_y\}=&\;\{V(P_{3,0}-P_{1,2})VP_{[1,0]}V^\dag\}\\
            &\;-\{V(P_{3,0}-P_{1,2})V^\dag P_{[1,0]}V\}\\
            &\;+\{VP_{\hat 1}VP_{[2,0]}VP_{[1,0]}V^\dag\}\\
            &\;+\{VP_{[1,0]}V^\dag P_{[2,0]}V^\dag P_{[1,0]}V\}\\
            &\;+\{V(P_{4,0}-2P_{3,1}+2P_{1,3}-P_{0,4})V^\dag\}\\
            &\;+\{V(P_{4,0}-2P_{2,2}+P_{0,4})V\}\\
            &\;+\{VP_{\hat 1}V(P_{3,0}-P_{2,1}-P_{1,2}+P_{0,3})V^\dag\}\\
            &\;-\{VP_{[1,0]}V^\dag(P_{3,0}-P_{2,1}-P_{1,2}+P_{0,3})V\}\\
            &\;-\{V(P_{2,0}-P_{1,1})V^\dag P_{[2,0]}V^\dag\}\\
            &\;+\{V(P_{2,0}-P_{1,1})VP_{[2,0]}V\}\\
            &\;-\{VP_{\hat 1}VP_{[1,0]}V^\dag P_{[2,0]}V^\dag\}\\
            &\;-\{VP_{[1,0]}V^\dag P_{[1,0]}VP_{[2,0]}V\}\\
            &\;-\{VP_{[2,0]}V(P_{1,1}-P_{0,2})V^\dag\}\\
            &\;+\{VP_{[2,0]}V(P_{1,1}-P_{0,2})V\}\\
            &\;+\{VP_{[2,0]}VP_{[1,0]}V^\dag P_{\hat 1}V^\dag\}\\
            &\;+\{VP_{[2,0]}VP_{[1,0]}V^\dag P_{[1,0]}V\}\\
            &\;+\{V(P_{3,0}-P_{2,1}-P_{1,2}+P_{0,3})V^\dag P_{\hat 1}V^\dag\}\\
            &\;+\{V(P_{3,0}-P_{2,1}-P_{1,2}+P_{0,3})V^\dag P_{[1,0]}V\}\\
            &\;+\{VP_{[1,0]}V^\dag(P_{2,1}-P_{0,3})V^\dag\}\\
            &\;-\{VP_{[1,0]}V^\dag(P_{2,1}-P_{0,3})V\}\\
            &\;-\{VP_{[1,0]}V^\dag P_{[2,0]}V^\dag P_{\hat 1}V^\dag\}\\
            &\;-\{VP_{[1,0]}V^\dag P_{[2,0]}V^\dag P_{[1,0]}V\}.
\end{align*}
Fifth, using Lemma~\eqref{lemma:Poppeprodforms}, we observe that,
\begin{align*}
-[g_x,h_x]=&\;-\{V(P_{1,1}+P_{0,2})VP_{\hat 1}VP_{[1,0]}V^\dag\}\\
&\;+\{V(P_{1,1}+P_{0,2})V^\dag P_{\hat 1}V^\dag P_{[1,0]}V\}\\
&\;-\{VP_{\hat 1}V(P_{2,0}+P_{1,1})VP_{[1,0]}V^\dag\}\\
&\;+\{VP_{\hat 1}V(P_{2,0}+P_{1,1})V^\dag P_{[1,0]}V\}\\
&\;-\{V P_{\hat 1}V P_{\hat 1}V P_{\hat 1}VP_{[1,0]}V^\dag\}\\
&\;-\{V P_{\hat 1}V P_{[1,0]}V^\dag P_{\hat 1}V^\dag P_{[1,0]}V\}\\
&\;-\{V(P_{1,1}+P_{0,2})VP_{[2,0]}V^\dag\}\\
&\;-\{V(P_{1,1}+P_{0,2})V^\dag P_{[2,0]}V\}\\
&\;-\{VP_{\hat 1}V(P_{3,0}-P_{1,2})V^\dag\}\\
&\;-\{VP_{\hat 1}V(P_{3,0}-P_{1,2})V\}\\
&\;-\{V P_{\hat 1}V P_{\hat 1}VP_{[2,0]}V^\dag\}\\
&\;+\{V P_{\hat 1}V P_{[1,0]}V^\dag P_{[2,0]}V\}\\
&\;+\{V(P_{1,1}+P_{0,2})VP_{[1,0]}V^\dag P_{\hat 1}V^\dag\}\\
&\;-\{V(P_{1,1}+P_{0,2})V^\dag P_{[1,0]}VP_{\hat 1}V\}\\
&\;+\{VP_{\hat 1}V(P_{2,0}-P_{1,1})V^\dag P_{\hat 1}V^\dag\}\\
&\;-\{VP_{\hat 1}V(P_{2,0}-P_{1,1})VP_{\hat 1}V\}\\
&\;+\{V P_{\hat 1}V P_{\hat 1}V P_{[1,0]}V^\dag P_{\hat 1}V^\dag\}\\
&\;+\{V P_{\hat 1}V P_{[1,0]}V^\dag P_{[1,0]}VP_{\hat 1}V\}\\
&\;+\{VP_{\hat 1}V(P_{1,1}-P_{0,2})V^\dag P_{\hat 1}V^\dag\}\\
&\;+\{VP_{\hat 1}V(P_{1,1}-P_{0,2})VP_{\hat 1}V\}\\
&\;+\{VP_{\hat 1}VP_{[1,0]}V^\dag(P_{2,0}+P_{1,1})V^\dag\}\\
&\;+\{VP_{\hat 1}VP_{[1,0]}V^\dag(P_{2,0}+P_{1,1})V\}\\
&\;-\{V P_{\hat 1}V P_{[1,0]}V^\dag P_{\hat 1}V^\dag P_{\hat 1}V^\dag\}\\
&\;+\{V P_{\hat 1}V P_{[1,0]}V^\dag P_{[1,0]}VP_{\hat 1}V\}\\
&\;+\{V(P_{2,1}-P_{0,3})V^\dag P_{\hat 1}V^\dag\}\\
&\;+\{V(P_{2,1}-P_{0,3})VP_{\hat 1}V\}\\
&\;+\{VP_{[2,0]}V^\dag(P_{2,0}+P_{1,1})V^\dag\}\\
&\;+\{VP_{[2,0]}V^\dag(P_{2,0}+P_{1,1})V\}\\
&\;-\{V P_{[2,0]}V^\dag P_{\hat 1}V^\dag P_{\hat 1}V^\dag\}\\
&\;+\{V P_{[2,0]}V^\dag P_{[1,0]}VP_{\hat 1}V\}\\
&\;-\{VP_{[1,0]}V^\dag(P_{1,1}+P_{0,2})V^\dag P_{\hat 1}V^\dag\}\\
&\;-\{VP_{[1,0]}V^\dag(P_{1,1}+P_{0,2})VP_{\hat 1}V\}\\
&\;-\{VP_{[1,0]}V^\dag P_{\hat 1}V^\dag(P_{2,0}+P_{1,1})V^\dag\}\\
&\;-\{VP_{[1,0]}V^\dag P_{\hat 1}V^\dag(P_{2,0}+P_{1,1})V\}\\
&\;+\{V P_{[1,0]}V^\dag P_{\hat 1}V^\dag P_{\hat 1}V^\dag P_{\hat 1}V^\dag\}\\
&\;-\{V P_{[1,0]}V^\dag P_{\hat 1}V^\dag P_{[1,0]}VP_{\hat 1}V\}.
\end{align*}
Sixth, again using Lemma~\eqref{lemma:Poppeprodforms}, we observe that,
\begin{align*}
-\bigl[f,\mathrm{D}(f+g^2)\bigr]=&\;-\{V(P_{1,1}-P_{0,2})V^\dag P_{[2,0]}V^\dag\}\\
&\;+\{V(P_{1,1}-P_{0,2})VP_{[2,0]}V\}\\
&\;-\{VP_{[1,0]}V^\dag(P_{3,0}-P_{1,2})V^\dag\}\\
&\;+\{VP_{[1,0]}V^\dag(P_{3,0}-P_{1,2})V\}\\
&\;+\{VP_{[1,0]}V^\dag P_{\hat 1}V^\dag P_{[2,0]}V^\dag\}\\
&\;+\{VP_{[1,0]}V^\dag P_{[1,0]}VP_{[2,0]}V\}\\
&\;-\{V(P_{2,1}-P_{0,3})VP_{[1,0]}V^\dag\}\\
&\;+\{V(P_{2,1}-P_{0,3})V^\dag P_{[1,0]}V\}\\
&\;-\{VP_{[2,0]}V(P_{2,0}-P_{1,1})V^\dag\}\\
&\;+\{VP_{[2,0]}V(P_{2,0}-P_{1,1})V\}\\
&\;-\{VP_{[2,0]}VP_{\hat 1}VP_{[1,0]}V^\dag\}\\
&\;-\{VP_{[2,0]}VP_{[1,0]}V^\dag P_{[1,0]}V\}.
\end{align*}
Putting all the terms above together shows that $h\coloneqq[VP_{[1,0]}V^\dag]$, $g\coloneqq[V]$ and $f\coloneqq-\{VP_{[1,0]}V^\dag\}$ satisfy
the lifted mKP equation~\eqref{eq:triplemKPf}, thus establishing Theorem~\ref{thm:mKP}.

\section*{References}


\begin{thebibliography}{99}
  
\bibitem{ARS} Ablowitz MJ, Ramani A, Segur H 1980 A connection between nonlinear evolution equations and ordinary differential equations of P-type. II, \textit{JMP} \textbf{21}, 1006--1015. 
  
\bibitem{BV} Ball JA, Vinnikov V 1999 Zero-pole interpolation for matrix meromorphic function on a compact Riemann surface and a matrix Fay trisecant identity, \textit{Am. J. Math.\/} \textbf{121}, 841--888.  

\bibitem{BDMSa} Beck M, Doikou A, Malham SJA, Stylianidis I 2018 Grassmannian flows and applications to nonlinear partial differential equations, Proc. Abel Symposium.  

\bibitem{BDMSb} Beck M, Doikou A, Malham SJA, Stylianidis I 2018 Partial differential systems with nonlocal nonlinearities: generation and solutions, \textit{Phil. Trans. R. Soc. A} \textbf{376}(2117).

\bibitem{BGO} Bertola M, Grava T, Orsatti G 2024 Integrable operators, $\overline{\pa}$-problems, KP and NLS hierarchy, \textit{Nonlinearity} \textbf{37}, 085008.
  
\bibitem{Blower} Blower G 2011 On linear systems and $\tau$ functions associated with Lam\'e's equation and Painlev\'e's equation VI, \textit{J. Math. Anal. Appl.} \textbf{376}, 294--316.  

\bibitem{BBD} Blower G, Brett C, Doust I 2019 Statistical mechanics of the periodic Benjamin--Ono equation, \textit{JMP} \textbf{60}, 093302.  
    
\bibitem{BD} Blower G, Doust I 2023 Linear systems, Hankel products and the sinh-Gordon equation, \textit{J. Math. Anal. Appl.\/} \textbf{525}, 127140.  

\bibitem{BM} Blower G, Malham SJA 2023 The algebraic structure of the non-commutative nonlinear Schr\"odinger and modified Korteweg--de Vries hierarchy, \textit{Physica D} \textbf{456}, 133913.  

\bibitem{BN} Blower G, Newsham S 2020 On tau functions associated with linear systems, Operator theory advances and applications: IWOTA Lisbon 2019. Springer Birkhäuser.  

\bibitem{Bogdanov} Bogdanov NV 1987 Veselov-Novikov equation as a natural two-dimensional generalisation of the Korteweg--de Vries equation, translated from
  \textit{Teoreticheskaya i Matematicheskaya Fizika} \textbf{70}(2), 309--314.

\bibitem{Bornemann} Bornemann F 2010 On the numerical evaluation of Fredholm determinants, \textit{Math. Comp.} \textbf{79}(270), 871--915.
  
\bibitem{Bourgain} Bourgain J 1993 On the Cauchy problem for the Kadomtsev--Petviashvili equation, \textit{GAFA} \textbf{3}, 315--341.

\bibitem{BrunelliDas} Brunelli JC, Das A 1995 A nonstandard supersymmetric KP hierarchy, \textit{Reviews in Mathematical Physics} \textbf{7}(8), 1181--1194.
  
\bibitem{CafassoWu} Cafasso M, Wu C--Z 2019 Borodin--Okounkov formula, string equation and topological solutions of Drinfeld--Sokolov hierarchies,
\textit{Letters in Mathematical Physics} \textbf{109}, 2681--2722.

\bibitem{CD} Calogero F, Degasperis A 2006 New integrable PDEs of boomeronic type, \textit{J. Phys. A: Math. Gen.} \textbf{39}, 8349--8376.
  
\bibitem{CS} Carillo S, Schiebold C 2021 B\"acklund transformations: a tool to study Abelian and non-Abelian nonlinear evolution equations,  arXiv:2101.09245v1.  

\bibitem{CLT} Cheng J, Li M, Tian K 2018 On the modified KP hierarchy: Tau functions, squared eigenfunction symmetries and additional symmetries,
  \textit{Journal of Geometry and Physics} \textbf{134}, 19--37.
  
\bibitem{CMO} Chuanzhong L, Mironov A, Orlov AY 2024 Hopf link invariants and integrable hierarchies, arXiv:2410.14823v1.

\bibitem{Croke} Croke RP 2012 An investigation of the Novikov--Veselov equation: New solutions, stability and implications for the inverse scattering transform, PhD Thesis, Colorado State University.

\bibitem{DL} Degasperis A, Lombardo S 2009 Multicomponent integrable wave equations: II. Soliton solutions, \textit{J. Phys. A: Math. Theor.\/} \textbf{42}, 385206.

\bibitem{DGS} Delduc F, Gallot L, Sorin A 1999 $N=2$ local and $N=4$ non-local reductions of supersymmetric KP hierarchy in $N=2$ superspace, \textit{Nuclear Physics B} \textbf{558} [PM], 545--572.
  
\bibitem{DMH} Dimakis A, M\"uller--Hoissen F 2006 Functional representations of integrable hierarchies, \textit{J. Phys. A: Math. Gen.} \textbf{39}, 9169--9186.  
  
\bibitem{DMS} Doikou A, Malham SJA, Stylianidis I 2021 Grassmannian flows and applications to non-commutative non-local and local integrable systems, \textit{Physica D} \textbf{415}, 132744.  

\bibitem{DMSWa} Doikou A, Malham SJA, Stylianidis I, Wiese A 2023 Applications of Grassmannian flows to coagulation equations, \textit{Physica D} \textbf{451}, 133771.  

\bibitem{DMSWb} Doikou A, Malham SJA, Stylianidis I, Wiese A 2022  Applications of Grassmannian flows to integrable systems, \textit{Unpublished detailed preprint to [DMc]}, arXiv:1905.05035v2.  
  
\bibitem{DMSWc} Doikou A, Malham SJA, Stylianidis I, Wiese A 2023 P\"oppe triple systems and integrable equations, \textit{PDEAM} \textbf{8}, 100565.  

\bibitem{DJ} Drazin PG, Johnson RS 1989 \textit{Solitons: An introduction}, Cambridge Texts in Applied Mathematics, CUP.
  
\bibitem{Dyson} Dyson FJ 1976 Fredholm determinants and inverse scattering problems, \textit{Comm. Math. Phys.} \textbf{47}, 171--183.

\bibitem{E-FLMM-KW} Ebrahimi--Fard K, Lundervold A, Malham SJA, Munthe--Kaas H, Wiese A 2012 Algebraic structure of stochastic expansions and efficient simulation,
\textit{Proc. R. Soc. A} \textbf{468}, 2361--2382. 
  
\bibitem{EM} Ercolani N, McKean HP 1990 Geometry of KdV (4): Abel sums, Jacobi variety and theta function in the scattering case, \textit{Invent. Math.\/} \textbf{99}, 483--544.  

\bibitem{EGR97} Etingof P, Gelfand I, Retakh V 1997 Factorization of differential operators, quasi-determinants, and nonabelian Toda field equations, \textit{MRL} \textbf{4}, 413--425.  
  
\bibitem{EGR98} Etingof P, Gelfand I, Retakh V 1998 Nonabelian integrable systems, quasi-determinants, and Marchenko lemma, \textit{MRL} \textbf{5}, 1--12.  

\bibitem{Ferapontov} Ferapontov EV 1999 Stationary Veselov--Novikov equation and isothermally asymptotic surfaces in projective differential geometry,
  \textit{Differential Geometry and its Applications} \textbf{11}, 117--128.
  
\bibitem{Fu} Fu W 2018 Direct linearisation of discrete and continuous integrable systems: The KP hierarchy and its reductions, PhD Thesis.

\bibitem{FN1} Fu W, Nijhoff FW 2017 Direct linearizing transform for three-dimensional discrete integrable systems: the lattice AKP, BKP and CKP equations,
  \textit{Proc. R. Soc. A.\/} \textbf{473}, 20160915.

\bibitem{FN2} Fu W, Nijhoff FW 2018 Linear integral equations, infinite matrices, and soliton hierarchies, \textit{J. Math. Phys.\/} \textbf{59}, 071101.
  
\bibitem{FN3} Fu W, Nijhoff FW 2022 On a coupled Kadomtsev--Petviashvili system associated with an elliptic curve, \textit{Stud. Appl. Math.\/} \textbf{149}(4), 1086--1122.  

\bibitem{GN} Gilson CR, Nimmo JJC 2007 On a direct approach to quasideterminant solutions of a noncommutative KP equation, \textit{J. Phys. A: Math. Theor.\/} \textbf{40}, 3839--3850.
  
\bibitem{GNS} Gilson CR, Nimmo JJC, Sooman CM 2008 On a direct approach to quasideterminant solutions of a noncommutative modified KP equation, \textit{J. Phys. A: Math. Theor.\/} \textbf{41}, 085202.  

\bibitem{GKP} Grava T, Klein C, Pitton G 2017 Numerical study of the Kadomtsev--Petviashvili equation and dispersive shock waves, \textit{Proc. R. Soc. A} \textbf{474}: 20170458.
  
\bibitem{GR} Grudsky S, Rybkin A 2022 The inverse scattering transform for weak Wigner-von Neumann type potentials, \textit{Nonlinearity} \textbf{35}, 2175--2191.  

\bibitem{Hamanaka06} Hamanaka M 2006 Non-commutative Ward's conjecture and integrable systems, \textit{Nuclear Physics B} \textbf{741}[FS], 368--389.
  
\bibitem{Hamanaka} Hamanaka M 2007 Notes on exact multi-soliton solutions of noncommutative integrable hierarchies, \textit{JHEP}, 02(2007)94.
  
\bibitem{HamanakaToda} Hamanaka M, Toda K 2004, Towards noncommutative integrable equations, \textit{Proceedings of the Institute of Mathematics of NAS of Ukraine} \textbf{50}(1), 404--411.

\bibitem{Harvey} Harvey JA 2000 Komaba lectures on noncommutative solitons and D-branes, arXiv:hep-th/0102076v1.

\bibitem{Hirota} Hirota R 1992 \textit{The direct method in soliton theory}, Cambridge Tracts in Mathematics \textbf{155}, CUP.

\bibitem{JCL} Jia M, Chen Z, Lou SY 2024 Classifications of bosonic supersymmetric third and fifth order systems, \textit{Physica D} \textbf{470}, 134378.
  
\bibitem{KadomtsevPetviashvili} Kadomtsev BB, Petviashvili VI 1970 On the stability of solitary waves in weakly dispersive media, \textit{Sov. Phys. Dokl.\/} \textbf{15}, 539--541. 
  
\bibitem{KaoKodama} Kao C--Y, Kodama Y 2012 Numerical study of the KP equation for non-periodic waves, \textit{Mathematics and Computers in Simulation} \textbf{82}, 1185--1218.

\bibitem{Kodama} Kodama Y 2017 \emph{KP solitons and the Grassmannians: Combinatroics and geometry of two-dimensional wave patterns}, SpringerBriefs in Mathematical Physics \textbf{22}, Springer.

\bibitem{Koikawa} Koikawa T 2001 Soliton equations extracted from noncommutative zero-curvature equation, \textit{Progress of Theoretical Physics} \textbf{105}(6), 1045--1057. 
  
\bibitem{Kasman} Kasman A 1995 Bispectral KP solutions and linearization of Calogero--Moser particle systems, \textit{Comm. Math. Phys.} \textbf{172}, 427--448.

\bibitem{KleinRoidot} Klein C, Roidot K 2011 Fourth-order time-stepping for Kadomtsev--Petviashvili and Davey--Stewartson equations, \textit{SIAM J. Sci. Comput.} \textbf{33}(6), 3333--3356.

\bibitem{KleinSaut} Klein C, Saut J--C 2012 Numerical study of blow up and stability of solutions of generalized Kadomtsev--Petviashvili equations, \textit{J. Nonlinear Sci.} \textbf{22}, 763--811.

\bibitem{Kontsevich} Kontsevich M 1992 Intersection theory on the moduli space of curves and the matrix Airy function, \textit{Comm. Math. Phys.} \textbf{146}(1), 1--23.
  
\bibitem{K} Kupershmidt BA 2000 KP or mKP: Noncommutative mathematics of Lagrangian, Hamiltonian and integrable systems, Mathematical Surveys and Monographs, Vol. 78, AMS.  

\bibitem{LMSS} Lassas M, Mueller JL, Siltanen S, Stahel A 2012 The Novikov--Veselov equation and the inverse scattering method: II. Computation, \textit{Nonlinearity} \textbf{25}, 1799--1818.

\bibitem{MRR} Magnot J-P, Reyes E, Rubtsov V 2022 On (t2,t3)-Zakharov--Shabat equations of generalized Kadomtsev--Petviashvili hierarchies, \textit{Journal of Mathematics and Physics}.
  
\bibitem{MKdV} Malham SJA 2022 The non-commutative Korteweg--de Vries hierarchy and combinatorial P\"oppe algebra, \textit{Physica D} \textbf{434}, 133228.  

\bibitem{MNLS} Malham SJA 2022 Integrability of local and nonlocal non-commutative fourth order quintic nonlinear Schr\"odinger equations, \textit{IMA J. Appl. Math.} \textbf{87}(2), 231--259.  

\bibitem{MW} Malham SJA, Wiese A 2009 Stochastic expansions and Hopf algebras, \textit{Proc. R. Soc. A} \textbf{465}, 3729--3749.
  
\bibitem{Manin} Manin Yu I 1979 Algebraic aspects of nonlinear differential equations, \textit{J. Sov. Math.} \textbf{11}, 1--122.

\bibitem{ManinRadul} Maniin Yu I, Radul AO 1985 A supersymmetric extension of the Kadomtsev--Petviashvili hierarchy, \textit{Comm. Math. Phys.} \textbf{98}, 65--77.  

\bibitem{Marchenko} Marchenko VA 1988 \emph{Nonlinear equations and operator algebras}, Mathematics and its applications, D. Reidel Pub. Co.
  
\bibitem{Mc87} McKean HP 1987 Geometry of KdV II. Three examples, \textit{J. Statist. Phys.\/} \textbf{46}, 1115--1143.  

\bibitem{McKean11} McKean HP 2011 Fredholm determinants, \textit{Cent. Eur. J. Math.\/} \textbf{9}, 205--243.

\bibitem{Metin} \"Unal M 1998 Applications of Pfaffians to soliton theory, PhD Thesis, Univeristy of Glasgow.

\bibitem{Mironov} Mironov AE 2006 The Novikov--Veselov hierarchy of equations and integrable deformations of minimal Lagrangian tori in $\mathbb{C}P^2$, arXix:math/0607700v1.

\bibitem{Miura} Miura RM 1976 The Korteweg--De Vries equation: A survey of results, \textit{SIAM Review} \textbf{18}(3), 412--459.
  
\bibitem{MJD} Miwa T, Jimbo M, Date E 2000 \textit{Solitons: Differential equations, symmetries and infinite dimensional algebras}, Cambdridge Tracts in Mathematics \textbf{135}, CUP.
  
\bibitem{Mu94} Mulase M 1994 Algebraic theory of the KP equations, in \textit{Perspectives in Math. Phys.\/}, 151-–218.  

\bibitem{Mumford} Mumford D 1993 \textit{Tata lectures on Theta II}, Birkhauser.

\bibitem{NSS} Nickel J, Serov VS, Sch\"urmann HW 2006 Some elliptic travelling wavr solutions to the Novikov--Veselov equation, \textit{Progress In Electromagnetics Research Symposium 2006}, Cambridge, USA.
  
\bibitem{NQVC} Nijhoff FW, Quispel GRW, Van Der Linden J, Capel HW 1983 On some linear integral equations generating solutions of nonlinear partial differential equations, \textit{Physica} \textbf{119A}, 101--142.  

\bibitem{Nijhoff} Nijhoff FW 1988 Linear integral transformations and hierarchies of integrable nonlinear evolution equations, \textit{Physica D} \textbf{31}, 339--388.

\bibitem{Nijhoff-Lagrangian3form} Nijhoff FW 2023 Lagrangian $3$-form structure for the Darboux system and the KP hierarchy, arXiv:2206.14338v3.
  
\bibitem{NC} Nijhoff FW, Capel H 1990. The direct linearisation approach to hierarchies of integrable PDEs in 2+1 dimensions: I. Lattice equations and the differential-difference hierarchies,
  \textit{Inverse Probl.\/} \textbf{6}, 567--590.

\bibitem{Nishino} Nishino H 1993 Supersymmetric KP equation embedded in supersymmetric self-dual Yang--Mills theory, \textit{Physics Letters B} \textbf{318}(1), 107--114.

\bibitem{Paniak} Paniak LD 2001 Exact noncommutative KP and KdV multi-solitons, arXix:hep-th/0105185v2.
  
\bibitem{PSK} Pelinovsky DE, Stepanyants YA, Kivshar YA 1995 Self-focusing of plane dark solitons in nonlinear defocusing media, \textit{Phys. Rev. E} \textbf{51}, 5016--5026. 

\bibitem{Pelinovsky} Pelinovsky DE 1998 Rational solutions of the KP hierarchy and the dynamics of their poles. II. Construction of the degenerate polynomial solutions,
  \textbf{J. Math. Phys.\/} \textbf{39}(10), 5377--5395.
  
\bibitem{PSG} P\"oppe Ch 1983 Construction of solutions of the sine-Gordon equation by means of Fredholm determinants, \textit{Physica D} \textbf{9}, 103--139.  
  
\bibitem{PKdV} P\"oppe Ch 1984 The Fredholm determinant method for the KdV equations, \textit{Physica D} \textbf{13}, 137--160.  
  
\bibitem{PKP} P\"oppe Ch 1989 General determinants and the $\tau$ function for the Kadomtsev--Petviashvili hierarchy, \textit{Inverse Problems} \textbf{5}, 613--630.  

\bibitem{PKD} Prykarpatski AK, Kycia RA, Dilnyi VM 2024 On superization of nonlinear integrable systems, doi:10.20944/preprints202411.2057.v1.   
  
\bibitem{QR} Quastel J, Remenik D 2022 KP governs random growth off a 1-dimensional substrate, \emph{Forum of Mathematics, Pi} \textbf{10}:e10, 1--26.

\bibitem{Sakhnovich} Sakhnovich AL 2003 Matrix Kadomtsev--Petviashvili equation: matrix identities and explicit non-singular solutions, \textit{J. Phys. A: Math. Gen.} \textbf{36}, 5023--5033. 
  
\bibitem{SAF} Santini PM, Ablowitz MJ, Fokas AS 1984 The direct linearisation of a class on nonlinear evolution equations, \textit{JMP} \textbf{25}, 2614--2619.  

\bibitem{SatoI} Sato M 1981 Soliton equations as dynamical systems on a infinite dimensional Grassmann manifolds. \textit{RIMS} \textbf{439}, 30--46.

\bibitem{SatoII} Sato M 1989 The KP hierarchy and infinite dimensional Grassmann manifolds, \textit{Proceedings of Symposia in Pure Mathematics} \textbf{49} Part 1, 51--66.
  
\bibitem{SW} Segal G, Wilson G 1985 Loop groups and equations of KdV type, \textit{IHES Publ. Math.\/} \textbf{61}, 5--65.  

\bibitem{Simon} Simon B 2005 \emph{Trace ideals and their applications}, 2nd edn. Mathematical Surveys and Monographs, vol. 120. Providence, RI: AMS.

\bibitem{So} Sooman CM 2009 \textit{Soliton solutions of noncommutative integrable systems}, PhD Thesis, Glasgow.   

\bibitem{Taimanov} Ta\u{\i}manov IA 1999 Finite-gap solutions to the modified Novikov--Veselov equations: Their spectral properties and applications, \textit{Siberian Mathematical Journal} \textbf{40}(6), 1146--1156.
  
\bibitem{Toda} Toda K 2002 Extensions of soliton equations to non-commutative $(2+1)$ dimensions, \textit{JHEP}, Workshop on Integrable Theories, Solitons and Duality.

\bibitem{TW94} Tracy CA, Widom H 1994 Level-spacing distributions and the Airy kernel, \textit{Comm. Math. Phys.\/} \textbf{159}(1), 151-–174.

\bibitem{TW96} Tracy CA, Widom H 1996 On orthogonal and symplectic matrix ensembles, \textit{Comm. Math. Phys.\/} \textbf{177}(3), 727-–754.

\bibitem{TW03} Tracy CA, Widom H 2003 A system of differential equations for the Airy process, \textit{Electron. Commun. Probab.\/} \textbf{8}, 93–-98.
  
\bibitem{WW} Wang N, Wadati M 2004 Noncommutative KP hierarchy and Hirota triple-product relations, \textit{J. Phys. Soc. Jpn.\/} \textbf{73}(7), 1689--1698. 

\bibitem{Witten} Witten E 1990 On the structure of the topological phase of two-dimensional gravity, \textit{Nuclear Phys. B} \textbf{340}(2-3), 281--332.

\bibitem{ZS} Zakharov VE, Shabat AB 1974 A scheme for integrating the non-linear equation of mathematical physics by the method of the inverse scattering problem I, 
\textit{Funct. Anal. Appl.}\/ \textbf{8}, 226. 

\bibitem{ZS2} Zakharov VE, Shabat AB 1979 Integration of nonlinear equations of mathematical physics by the method of inverse scattering II,
\textit{Funct. Anal. Appl.} \textbf{13}(3), 166–-174.
  
\bibitem{Zhang} Zhang X 2022 The logarithmic anti-derivative of the Baik--Rains distribution satisfies the KP equation, \textit{Electron. Commun. Probab.} \textbf{27}, 1--12.
  
\end{thebibliography}
\end{document}